
\documentclass{article}
\usepackage{arxiv}
\usepackage{microtype}
\usepackage{graphicx}
\usepackage{subfigure}
\usepackage{booktabs} 
\usepackage{hyperref}




\usepackage{amsmath}
\usepackage{amssymb}
\usepackage{mathtools}
\usepackage{amsthm}
\usepackage[]{empheq}
\usepackage{thm-restate}
\usepackage{algorithm}
\usepackage{algorithmic}

\usepackage{xcolor}         
\usepackage{nicefrac, xfrac}

\usepackage{natbib}

\newcommand{\D}{\mathcal{D}}
\newcommand{\NPHard}{$\mathsf{NP}$-hard}

\newcommand{\A}{\mathcal{A}}

\newcommand{\N}{\mathcal{N}}

\newcommand{\ie}{\emph{i.e., }}

\newcommand{\mcS}{\mathcal{S}}

\newcommand{\evec}{\boldsymbol{e}}

\newcommand{\avec}{\boldsymbol{a}}

\newcommand{\Reals}{\mathbb{R}}
\newcommand{\mfR}{\mathfrak{R}}
\newcommand{\etavec}{\boldsymbol{\eta}}

\newcommand{\xvec}{\boldsymbol{x}}

\newcommand{\yvec}{\boldsymbol{y}}
\newcommand{\zvec}{\boldsymbol{z}}

\newcommand{\zerovec}{\boldsymbol{0}}
\newcommand{\onevec}{\boldsymbol{1}}

\newcommand{\cvec}{\boldsymbol{c}}

\newcommand{\argmin}{\operatornamewithlimits{argmin}}



\newtheorem{example}{Example}



\usepackage{pifont}
\usepackage[normalem]{ulem} 



\definecolor{mygreen}{rgb}{0.0, 0.5, 0.0}
\definecolor{myorange}{rgb}{0.55, 0.62, 1}

\allowdisplaybreaks


\usepackage[capitalize,noabbrev]{cleveref}

\usepackage{xparse}



\theoremstyle{plain}
\newtheorem{theorem}{Theorem}[section]
\newtheorem{proposition}{Proposition}[section]
\newtheorem{lemma}{Lemma}[section]
\newtheorem{corollary}[theorem]{Corollary}
\theoremstyle{definition}
\newtheorem{definition}{Definition}[section]
\newtheorem{assumption}{Assumption}
\theoremstyle{remark}

\usepackage[textsize=tiny]{todonotes}


\title{Constrained Phi-Equilibria}

\author{
	Martino Bernasconi\\
	Politecnico di Milano\\
	\texttt{martino.bernasconideluca@polimi.it}
	\And
	Matteo Castiglioni\\
	Politecnico di Milano\\
	\texttt{matteo.castiglioni@polimi.it}
	 \And
	Alberto Marchesi\\
	Politecnico di Milano\\
	\texttt{alberto.marchesi@polimi.it}
	\And
	Francesco Trovò\\
	Politecnico di Milano\\
	\texttt{francesco1.trovo@polimi.it}
	\And
	Nicola Gatti\\
	Politecnico di Milano\\
	\texttt{nicola.gatti@polimi.it}
}

\begin{document}
	
\maketitle


	
	

	\begin{abstract}
		The computational study of equilibria involving constraints on players' strategies has been largely neglected.
		However, in real-world applications, players are usually subject to constraints ruling out the feasibility of some of their strategies, such as, \emph{e.g.}, safety requirements and budget caps.
		Computational studies on constrained versions of the Nash equilibrium have lead to some results under very stringent assumptions, while finding constrained versions of the \emph{correlated equilibrium} (CE) is still unexplored.
		%
		In this paper, we introduce and computationally characterize \emph{constrained Phi-equilibria}---a more general notion than constrained CEs---in \emph{normal-form games}.
		We show that computing such equilibria is in general computationally intractable,
		and also that the set of the equilibria may \emph{not} be convex, providing a sharp divide with unconstrained CEs.
		Nevertheless, we provide a polynomial-time algorithm for computing a constrained (approximate) Phi-equilibrium maximizing a given linear function,
		when either the number of constraints or that of players' actions is fixed.
		Moreover, in the special case in which a player's constraints do \emph{not} depend on other players' strategies, we show that an exact, function-maximizing equilibrium can be computed in polynomial time, while \emph{one} (approximate) equilibrium can be found with an efficient decentralized no-regret learning algorithm.
\end{abstract}
	\section{Introduction}

Over the last years, equilibrium computation problems have received a terrific attention from AI and ML research~\citep{brown2019superhuman,meta2022human}, as game-theoretical equilibrium notions provide a principled framework to deal with multi-player decision-making problems.
Most of the works on equilibrium computation problems focus on classical solution concepts---such as the well-known \emph{Nash equilibrium} (NE)~\citep{nash1951non} and \emph{correlated equilibrium} (CE)~\citep{aumann1974subjectivity}---, thus neglecting the presence of constraints entirely.
However, in most of the real-world applications, the players are usually subject to constraints that rule out the feasibility of some of their strategies, such as, \emph{e.g.}, safety requirements and budget caps.
%
%
%
Thus, addressing equilibrium notions involving constraints is a crucial step needed for the operationalization of game-theoretic concepts into real-world settings.

The study of equilibrium notions involving constraints was initiated by~\citet{arrow1954existence}, who define the concept of \emph{generalized} NE (GNE).
The GNE can be interpreted as an NE of a game where players' strategies are subject to some constraints, which must be satisfied at the equilibrium and also determine which are the feasible players' deviations.
%
%
%
However, given that computing a GNE is clearly $\mathsf{PPAD}$-hard~\citep{daskalakis2009complexity}, all the works dealing with the computation of GNEs (see, \emph{e.g.},~\citep{facchinei2010generalized}) provide efficient algorithms only in specific settings that require very stringent assumptions. 
%

Most of the computationally challenges in finding GNEs are inherited from the NE.
In settings in which constrained are \emph{not} involved, the computational issues of NEs are usually circumvented by considering weaker equilibrium notions.
Among them, those that have received most of the attention in the literature are the CE and its variations, which have been shown to be efficiently computable in several settings of interest~\citep{papadimitriou2008computing,celli2020no}.
Surprisingly, with the only exception of~\citep{chenfinding} (see Section~\ref{sec:related} for a detailed discussion on it), no work has considered the problem of computing CEs in constrained settings.
Thus, investigating whether the CE retains its nice computational properties when adding constraints on players' strategies is an open interesting question.

\subsection{Original Contributions}

In this paper, we introduce and computationally characterize \emph{constrained Phi-equilibria}, starting, as it is customary, from the setting of \emph{normal-form games}.
Our equilibria include the constrained versions of the classical CE and all of its variations as special cases, by generalizing the notion of Phi-equilibria introduced by~\citet{greenwald2003general} to constrained settings.
In particular, constrained Phi-equilibria are defined as Phi-equilibria, but in games where players are subject to some constraints.
Such constraints must be satisfied at the equilibrium, and, additionally, players are only allowed to undertake \emph{safe deviations}, namely those that are feasible according to the constraints.
%
%
Crucially, the set of safe deviations of a player does \emph{not} only depend on the strategy of that player, but also on those of the others.

We start by showing that one of the most appealing computational properties of Phi-equilibria, namely that the set of the equilibria of a game is convex, is lost when moving to their constrained version.
This raises considerable computational challenges in computing constrained Phi-equilibria.
Indeed, we formally prove a strong intractability result: for any factor $\alpha > 0$, it is \emph{not} possible, unless $\mathsf{P} = \mathsf{NP}$, to find in polynomial time a constrained (approximate) Phi-equilibrium which achieves a multiplicative approximation $\alpha$ of the optimal value of a given linear function.
Then, in the rest of the paper, we show several ways in which such a negative result can be circumvented.

We prove that a constrained approximate Phi-equilibrium which maximizes a given linear function can be found in polynomial time, when either the number of constraints or that of players' actions is fixed.
Our results are based on a general algorithm that employs a non-standard ``Lagrangification'' of the constraints defining the set of safe deviations of a player.
%
Moreover, the algorithm needs a way of dealing with the non-convexity of the set of the equilibria, which we provide in the form of a clever discretization of the space of the Lagrange multipliers.

Finally, we focus on the special case in which the constraints defining the safe deviations of a player do \emph{not} depend on the the strategies of the other players, but only on the strategy of that player.
This includes constrained Phi-equilibria identifying a particular constrained version of the \emph{coarse} CE by~\citet{moulin1978}, in which the players' strategies are subject to \emph{marginal cost constraints}.
These arise in several real-world applications in which the players have bounded resources, such as, \emph{e.g.}, budget-constrained bidding in auctions.
In such a special case, we prove that a constrained (exact) Phi-equilibrium maximizing a given linear function can be computed in polynomial time, and we provide an efficient decentralized no-regret learning algorithm for finding \emph{one} constrained (approximate) Phi-equilibrium.


\subsection{Related Works}\label{sec:related}


\paragraph{GNEs}
\citet{rosen1965existence} initiated the study of the computational properties of GNEs.
%
%
After that, several other works addressed the problem of computing GNEs by mainly exploiting techniques based on \emph{quasi-variational inequalities} (see~\citep{facchinei2010generalized} for a survey).
%
%
More recently, some works~\citep{kanzow2016augmented,bueno2019optimality,jordan2022first,goktasexploitability} also studied the convergence of iterative optimization algorithms to GNEs.
%
%
In order to provide efficient algorithms, all these works need to introduce very stringent assumptions, which are even stronger than those required for the efficient computation of NEs.

\paragraph{Constrained Markov Games}
Equilibrium notions involving constraints have also been addressed in the literature on \emph{Markov games}, with~\citep{altman2000constrained, alvarez2006existence} being the first works introducing GNEs in such a field.
More recently, \citet{hakami2015learning} defined a notion of constrained CE in Markov games.
However, the incentive constraints in their notion of equilibrium only predicate on ``pure'' deviations, which, in presence of constraints, may lead to empty sets of safe deviations.
Very recently,~\citet{chenfinding} generalize the work of~\citet{hakami2015learning} by considering ``mixed'' deviations.
However, their algorithm provides rather weak convergence guarantees, as it only ensures that the returned solution satisfies incentive constraints in expectation.
Indeed, as we show in Proposition~\ref{prop:non-convex}, the set of constrained equilibria may \emph{not} be convex (it is easy to see that Example~\ref{example} also applies to the setting studied by~\citet{chenfinding}), and, thus, the fact that incentive constraints are only satisfied in expectation does \emph{not} necessarily imply that the ``true'' incentive constraints defining the equilibrium are satisfied.
We refer the reader to Appendix~\ref{app:weakness} for additional details on these aspects.
%
%
%

%

	\section{Preliminaries}\label{sec:prelim}

In this section, we introduce all the preliminary definitions and results that are needed in the rest of the paper.

\subsection{Cost-constrained Normal-form Games}

%
In a \emph{normal-form game}, there is a finite set $\N := \{1,\ldots, n\}$ of $n$ players.
Each player $i \in \N$ has a finite set $\A_i$ of actions available, with $s := |\A_i|$ for $i \in \N$ being the number of players' actions.\footnote{For ease of presentation, in this paper we assume that all the players have the same number of actions. All the results can be easily generalized to the case of different numbers of actions.}
We denote by $\avec \in \A := \bigtimes_{i \in \N} \A_i$ an action profile specifying an action $a_i$ for each player $i \in \N$.
Moreover, for $i \in \N$, we let $\avec_{-i} \in \A_{-i} := \bigtimes_{j \neq i \in \N} \A_i$ be an action profile of all players other than $i$, while $(a,\avec_{-i})$ is the action profile obtained by adding $a \in \A_i$ to $\avec_{-i}$.
%
%
%
Finally, we let $u_i:\A\to[0,1]$ be the utility function of player $i\in \N$, with $u_i(\avec)$ being the utility perceived by that player when the action profile $\avec\in\A$ is played.

%

We extend classical normal-form games by considering the case in which each player $i \in \N$ has $m_i$ \emph{cost functions}, namely $c_{i,j}: \A \to [-1,1]$ for $j \in [m_i]$.\footnote{In this paper, given some $x \in \mathbb{N}_{>0}$, we let $[x] := \{ 1,\ldots,x \}$ be the set of the first $x$ natural numbers.}
Each player $i \in \N$ is subject to $m_i$ \emph{constraints}, which require that all player $i$'s costs are less than or equal to zero.\footnote{Since $\zvec\in\Delta_{\A}$, we can assume w.l.o.g. that all the constraints are of the form $\leq 0$, as any constraint can always be cast in such a form by suitably manipulating the cost function $c_{i,j}$.}
%
%
For ease of notation, we assume w.l.o.g. that all players have the same number of constraints, namely $m := m_i$ for all $i \in \N$.
Moreover, we encode the costs of player $i \in \N$ by a vector-valued function $\cvec_i:\A\to[-1,1]^{m}$ such that, for every $\avec \in \A$, the $j$-th component of the vector $\cvec_i(\avec)$ is $c_{i,j}(\avec)$.

\paragraph{Correlated Strategies}
In this paper, we deal with solution concepts defined by correlated strategies.
A \emph{correlated strategy} $\zvec \in \Delta_{\A}$ is a probability distribution defined over the set of actions profiles, with $\zvec[\avec]$ denoting the probability assigned to $\avec \in \A$.\footnote{In this paper, given a finite set $X$, we denote by $\Delta_X$ the set of all the probability distributions defined over the elements of $X$.}
%
%
With an abuse of notation, for every player $i\in \N$, we let $u_i(\zvec)$ be player $i$'s expected utility when the action profile played by the players is drawn from $\zvec \in \Delta_{\A}$.
In particular, it holds $u_i(\zvec):=\sum_{\avec\in\A}u_i(\avec)  \zvec[\avec]$.
Similarly, we let $\cvec_i(\zvec):=\sum_{\avec\in\A}\cvec_i(\avec)\zvec[\avec]$ be the vector of player $i$'s expected costs,
%
so that player $i$'s constraints can be compactly written as $\cvec_i(\zvec) \preceq \zerovec$.
Finally, we define $\mcS \subseteq \Delta_{\A}$ as the set of \emph{safe} correlated strategies, which are those satisfying the cost constraints of all players.
Formally:
\[
	\mcS_i:=\left\{ \zvec\in\Delta_{\A} \mid \cvec_i(\zvec)\preceq \zerovec \right\}, \text{ and } \mcS:=\bigcap_{i\in\N} \mcS_i.
\]
In the following, we assume w.l.o.g. that $\mcS \neq \varnothing$.

\subsection{Constrained Phi-equilibria}

We generalize the notion of Phi-equilibria~\citep{greenwald2003general} to cost-constrained normal-form games.
Such equilibria are defined as correlated strategies $\zvec \in \Delta_{\A}$ that are robust against a given set $\Phi$ of players' deviations, in the sense that, if a mediator draws an action profile $\avec \in \A$ according to $\zvec$ and recommends to play action $a_i$ to each player $i \in \N$, then no player has an incentive to deviate from their recommendation by selecting a deviation in $\Phi$.

For every $i \in \N$, we let $\Phi_i$ be the set of player $i$'s \emph{deviations}, \emph{i.e.}, linear transformations $\phi_i : \A_i \to \Delta_{\A_i}$ that prescribe a probability distribution over player $i$'s actions for every possible action recommendation.
For ease of notation, we encode a deviation $\phi_i$ by means of its matrix representation.
Formally, an entry $\phi_i[b,a]$ of the matrix represents the probability assigned to action $a \in \A_i$ by $\phi_i(b)$.
We denote the set of all the possible deviations by $\Phi := \{\Phi_i\}_{i\in \N}$.

Given a correlated strategy $\zvec \in \Delta_{\A}$ and a deviation $\phi_i \in \Phi_i$, we define $\phi_i\diamond \zvec$ as the \emph{modification} of $\zvec$ induced by $\phi_i$, which is a linear transformation that can be expressed as follows in terms of matrix representation:
\[
	(\phi_i\diamond \zvec)[a_i,\avec_{-i}] := \sum\limits_{b\in\A_i}\phi_i[b,a_{i}]\zvec[b,\avec_{-i}],
\]
for every $a_i \in \A_i$ and $\avec_{-i} \in \A_{-i}$.
Moreover, given a set $\Phi_i$ of deviations of player $i \in \N$, in the following we denote by $\Phi_{i}^{\mcS}(\zvec) := \left\{ \phi_i\in\Phi_i \mid \phi_i\diamond \zvec\in \mcS_i \right\}$ the set of \emph{safe deviations} for player $i$ at a given correlated strategy $\zvec\in\Delta_{\A}$.\footnote{In the conference version \citep{bernasconi2023constrained} of this paper we incorrectly defined the set $\Phi_i^\mcS$ as the set of deviations that map $\zvec$ into $\mcS$, \ie $\Phi_{i}^{\mcS}(\zvec) = \left\{ \phi_i\in\Phi_i \mid \phi_i\diamond \zvec\in \mcS \right\}$. However all the results of the conference version, used the current, correct definition.}
%

We are now ready to provide our definition of \emph{constrained Phi-equilibria} in cost-constrained normal-form games.
\begin{definition}[Constrained $\epsilon$-Phi-equilibria]
	%
	Given a set $\Phi := \{ \Phi_i \}_{i \in \N}$ of deviations and an $\epsilon>0$, a \emph{constrained $\epsilon$-Phi-equilibrium} is a safe correlated strategy $\zvec\in\mcS$ such that, for all $i\in \N$, the following holds:
	\[
	 u_i(\zvec)\ge u_i(\phi_i\diamond \zvec)-\epsilon\quad\forall\phi_i\in \Phi^\mcS_i(\zvec).
	\]
	A \emph{constrained Phi-equilibrium} is defined for $\epsilon = 0$.
	%
\end{definition}

\subsection{Computing Constrained Phi-equilibria}

In the following, we formally introduce the computational problem that we study in the rest of the paper.

We denote by $I := (\Gamma,\Phi)$ an instance of the problem, where the tuple $\Gamma := (\N,\A,\{u_i\}_{i\in \N}, \{c_{i,j} \}_{i\in \N,j\in [m]})$ is a cost-constrained normal-form game and $\Phi := \{ \Phi_i \}_{i \in \N}$ is a set of deviations.
Moreover, we let $|I|$ be the size (in terms of number of bits) of the instance $I$.
We assume that the number $n$ of players is fixed, so that $|I|$ does \emph{not} grow exponentially in $n$.\footnote{Notice that the size of the representation of a normal-form game is $O(s^n)$, and, thus, exponential in $n$. Any algorithm that runs in time polynomial in such instance size is \emph{not} computationally appealing, as even its input has size exponential in $n$. For this reason, we focus on the case in which $n$ is fixed, and, thus, the instance size does \emph{not} grow exponentially with $n$.}
We also make the following assumption on how the sets of deviations are represented:
%
%
\begin{assumption}\label{ass:solvablePolytope}
	For every $i \in \N$, the set $\Phi_i$ is a polytope encoded by a finite of linear inequalities.\footnote{Notice that, since each $\phi_i \in \Phi_i$ is represented as a matrix, a linear inequality is expressed as $\sum_{b,a \in \A_i} M[b,a] \phi_i[b,a]\le d$, for some matrix $M$ and scalar $d$.}
	%
\end{assumption}
Let us remark that, in games without constraints, this assumption is met by all the sets $\Phi$ which determine the classical notions of Phi-equilibria~\citep{greenwald2003general}.
%

Next, we formally define our computational problem:
\begin{definition}[\textsc{ApxCPE}$(\alpha,\epsilon)$]\label{def:probelm}
	For any $\alpha,\epsilon > 0$, we define \textsc{ApxCPE}$(\alpha,\epsilon)$ as the problem of finding, given an instance $I := (\Gamma,\Phi)$ and a linear function $\ell : \Delta_{\A} \to \mathbb{R}$ as input, a constrained $\epsilon$-Phi-equilibrium $\zvec \in \Delta_{\A}$ such that $\ell(\zvec)\ge \alpha\ell(\zvec')$ for all constrained Phi-equilibria $\zvec' \in \Delta_{\A}$.
	%
\end{definition}
Intuitively, \textsc{ApxCPE}$(\alpha,\epsilon)$ asks to compute a constrained $\epsilon$-Phi-equilibrium whose value for the linear function $\ell$ is at least a fraction $\alpha$ of the maximum value which can be achieved by an (exact) constrained Phi-equilibrium.

In order to ensure that an instance of our problem is well defined, we make the following ``Slater-like'' assumption on how the players' cost constraints are defined.
\begin{assumption} \label{ass:strictly}
	For every correlated strategy $\zvec\in\Delta_{\A}$, player $i\in \N$, and index $j\in[m]$, there exists $\phi^\circ_i\in\Phi_i^\mcS(\zvec)$:
	\[ 
		c_{i,j}(\phi^\circ_i \diamond \zvec)\le -\rho,
	\]
	where $\rho>0$ and $1/\rho$ is $O(\textnormal{poly}(|I|))$, with $\textnormal{poly}(|I|)$ being a polynomial function of the instance size $|I|$.
\end{assumption}
In Assumption~\ref{ass:strictly}, the condition $\rho > 0$ is required to guarantee the existence of a constrained Phi-equilibrium (see Theorem~\ref{thm:existence}) and that the sets $\Phi_i^\mcS(\zvec)$ are non-empty (otherwise our solution concept would be ill defined).
Moreover, the second condition on $\rho$ in Assumption~\ref{ass:strictly} is equivalent to requiring that our algorithms run in time polynomial in $\frac{1}{\rho}$.
%
%

Assumption~\ref{ass:strictly} also allows us to prove the existence of our equilibria, by showing that the constrained Nash equilibria introduced by~\citet{altman2000constrained}, which always exist under Assumption~\ref{ass:strictly}, are also constrained Phi-equilibria.
%
%
\begin{restatable}{theorem}{existence}\label{thm:existence}
	Given a cost-constrained normal-form game $\Gamma$ and a set $\Phi$ of deviations, if Assumption~\ref{ass:strictly} is satisfied, then $\Gamma$ admits a constrained Phi-equilibrium. 
	%
\end{restatable}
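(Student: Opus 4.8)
The statement says that under Assumption~\ref{ass:strictly}, a constrained Phi-equilibrium exists. The hint in the text is explicit: it suffices to exhibit *one* safe correlated strategy that is a constrained Phi-equilibrium, and the proposed candidate is a constrained Nash equilibrium in the sense of~\citet{altman2000constrained}. So the plan is: (1) recall/instantiate the existence of a constrained Nash equilibrium of $\Gamma$ under Assumption~\ref{ass:strictly}; (2) from such an equilibrium build a correlated strategy $\zvec$; (3) show $\zvec$ is safe; (4) show no player has a profitable *safe* deviation in $\Phi_i^\mcS(\zvec)$.

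I'll reason through the steps.

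First step: existence of a constrained NE. A constrained (or "generalized") Nash equilibrium of $\Gamma$ is a product strategy $\xvec = (\xvec_1,\dots,\xvec_n)$ with $\xvec_i \in \Delta_{\A_i}$ such that each $\xvec_i$ is a best response for player $i$ among all mixed strategies $\xvec_i'$ satisfying the cost constraints $\cvec_i(\xvec_i', \xvec_{-i}) \preceq \zerovec$, and such that $\xvec$ itself satisfies all players' constraints. The standard existence proof (Altman–Shwartz, via a Kakutani/Debreu–Glicksberg fixed-point argument on the constrained best-response correspondence) goes through provided the constrained feasible sets are nonempty and the correspondence is well-behaved (nonempty convex compact values, closed graph). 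Assumption~\ref{ass:strictly} is exactly what guarantees nonemptiness *and* a Slater point, which is what makes the best-response correspondence upper hemicontinuous (the Slater condition rules out the pathological shrinking of feasible sets). So I would state: by Assumption~\ref{ass:strictly} and \citep{altman2000constrained}, $\Gamma$ admits a constrained Nash equilibrium $\xvec$.

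Second and third steps: set $\zvec := \bigotimes_{i\in\N}\xvec_i \in \Delta_\A$, the product distribution. Since $\xvec$ is a constrained NE, $\cvec_i(\zvec) = \cvec_i(\xvec) \preceq \zerovec$ for every $i$, so $\zvec \in \mcS_i$ for all $i$, hence $\zvec \in \mcS$. Thus $\zvec$ is safe.

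Fourth step (the crux): fix $i$ and $\phi_i \in \Phi_i^\mcS(\zvec)$, i.e. $\phi_i \diamond \zvec \in \mcS_i$. Because $\zvec$ is a product distribution, the modification factors: $\phi_i \diamond \zvec = \xvec_i' \otimes \xvec_{-i}$ where $\xvec_i' := \phi_i \xvec_i \in \Delta_{\A_i}$ is the pushforward of $\xvec_i$ through the deviation $\phi_i$ (this is a short linear-algebra check from the definition of $\diamond$). Safety of $\phi_i\diamond\zvec$ for player $i$ then means $\cvec_i(\xvec_i',\xvec_{-i}) \preceq \zerovec$, so $\xvec_i'$ is a feasible deviation for player $i$ against $\xvec_{-i}$ in the constrained-NE sense. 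But $\xvec_i$ is a best response among all such feasible deviations, so $u_i(\xvec_i,\xvec_{-i}) \ge u_i(\xvec_i',\xvec_{-i})$, i.e. $u_i(\zvec) \ge u_i(\phi_i\diamond\zvec)$. Since this holds for all $i$ and all $\phi_i \in \Phi_i^\mcS(\zvec)$, $\zvec$ is a constrained Phi-equilibrium (indeed with $\epsilon = 0$).

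**Where the difficulty lies.** The conceptual content is small: almost everything reduces to the fact that a product strategy's modification under a deviation is again a product strategy, so that a safe deviation from $\zvec$ corresponds exactly to a feasible deviation in the constrained-NE game, and constrained-NE optimality then does the work. The one place requiring genuine care — and the main obstacle — is making sure the cited existence result of~\citet{altman2000constrained} really applies here: one must verify that Assumption~\ref{ass:strictly} supplies precisely the hypotheses their theorem needs (nonempty, compact, convex constrained strategy sets with a Slater/interior point ensuring continuity of the feasibility correspondence), and that the finiteness of actions and the $[0,1]$-boundedness of $u_i$ together with $[-1,1]$-boundedness of $\cvec_i$ give the needed continuity/compactness. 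A secondary subtlety is that the deviations in $\Phi_i$ act at the level of recommendations, but since $\zvec$ is a product measure the recommendation to player $i$ is independent of $\xvec_{-i}$, so applying $\phi_i$ pointwise and then averaging is the same as applying $\phi_i$ to the marginal $\xvec_i$; I would spell that identity out explicitly to avoid any gap.
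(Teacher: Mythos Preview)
Your proposal is correct and follows essentially the same approach as the paper's own proof: invoke the existence of a constrained Nash equilibrium from \citet{altman2000constrained} under Assumption~\ref{ass:strictly}, take the induced product correlated strategy $\zvec=\bigotimes_i \xvec_i$, and verify that any $\phi_i\in\Phi_i^\mcS(\zvec)$ satisfies $\phi_i\diamond\zvec=\xvec_i'\otimes\xvec_{-i}$ for some feasible $\xvec_i'$, so constrained-NE optimality yields $u_i(\zvec)\ge u_i(\phi_i\diamond\zvec)$. Your write-up is in fact somewhat more careful than the paper's (you explicitly check $\zvec\in\mcS$ and discuss why Assumption~\ref{ass:strictly} supplies the hypotheses needed for the cited existence theorem), but the argument is the same.
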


\subsection{Relation with Unconstrained Phi-equilibria}\label{sec:prelim_relations}

We conclude the section by discussing the relation between our constrained Phi-equilibria and classical equilibrium concepts for unconstrained games.
%
%
%

\paragraph{Correlated Equilibrium}
When there are no constraints, the \emph{correlated equilibrium} (CE)~\citep{aumann1974subjectivity} is a special case of Phi-equilibrium.
As shown by~\citet{greenwald2003general}, the CE is obtained when the sets $\Phi_i$ contain all the possible deviations. 
Formally, the CE is defined by the set $\Phi_{\textsc{all}} := \{ \Phi_{i,\textsc{all}} \}$ of deviations such that:
\[
	\Phi_{i,\textsc{all}} := \left\{\phi_i  \, \Big\vert \, \sum_{a\in\A_i}\phi_i[b,a]=1 \quad \forall b\in\A_i\right\}.
\]
%
%
%

\paragraph{Coarse Correlated Equilibrium}
The \emph{coarse correlated equilibrium} (CCE)~\citep{moulin1978strategically} is a special (unconstrained) Phi-equilibrium whose set of deviations is $\Phi_{\textsc{CCE}} := \{ \Phi_{i,\textsc{CCE}} \}_{i \in \N}$ such that:
\[
	\Phi_{i,\textsc{CCE}} := \left\{ \phi_i \,\Big\vert \,\exists \boldsymbol{h} \in \Delta_{\A_i} : \phi_i[b,a] = \boldsymbol{h}[a] \,\,\, \forall b,a \in \A_i \right\}.
\]
Intuitively, such sets contain all the possible deviations that prescribe the same probability distribution independently of the received action recommendation.
%

Thus, our constrained Phi-equilibria include the generalization of CEs and CCEs to cost-constrained games.

Our definition of constrained Phi-equilibrium needs to employ ``mixed'' deviations that map action recommendations to probability distributions over actions.
This is necessary in presence of constraints.
Instead, without them, one can simply consider ``pure'' deviations that map recommendations to actions deterministically~\cite{greenwald2003general}.
%


%
%

	\section{Challenges of Constrained Phi-equilibria}\label{sec:hard}

In this section, we show that, in cost-constrained normal-form games, Phi-equilibria loose the nice computational properties that they exhibit in unconstrained settings.
This is crucially determined by the fact that the set of constrained Phi-equilibria may \emph{not} be convex in general.
\begin{proposition}\label{prop:non-convex}
	Given any instance $I := (\Gamma, \Phi)$, the set of constrained Phi-equilibria may {not} be convex.
	%
\end{proposition}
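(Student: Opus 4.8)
The plan is to exhibit a concrete small cost-constrained normal-form game together with an explicit set of deviations $\Phi$ — most naturally taking $\Phi = \Phi_{\textsc{CCE}}$ or $\Phi_{\textsc{all}}$ so that the constrained Phi-equilibrium reduces to a constrained (coarse) correlated equilibrium — and then to identify two safe correlated strategies $\zvec^1, \zvec^2 \in \mcS$ that are each constrained Phi-equilibria, but whose midpoint $\bar{\zvec} := \tfrac12(\zvec^1 + \zvec^2)$ is not. The key structural point to exploit is that the set of \emph{safe deviations} $\Phi_i^\mcS(\zvec)$ depends on $\zvec$: a deviation $\phi_i$ may be unsafe at $\zvec^1$ and at $\zvec^2$ (so it imposes no incentive constraint there), yet become safe at $\bar{\zvec}$, where it then witnesses a profitable deviation. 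This is the mechanism that breaks convexity, and the example must be engineered around it.

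Concretely, I would work with a single player whose incentive constraints are active (the other players can be dummies, or even absent in spirit, since $n$ is fixed and small), two actions, and one cost function. First I would choose the utility $u_i$ so that, absent constraints, some deviation $\phi_i$ is strictly profitable — e.g., always switching to the ``good'' action. Then I would design the cost $c_{i}$ so that the modified strategy $\phi_i \diamond \zvec$ violates the cost constraint ($c_i(\phi_i \diamond \zvec) > 0$) precisely when $\zvec$ is ``extreme'' in a suitable sense, but satisfies it when $\zvec$ is ``balanced''. Picking $\zvec^1$ and $\zvec^2$ to be two such extreme safe strategies at which the profitable deviation is unsafe (hence $\Phi_i^\mcS(\zvec^k)$ contains only harmless deviations, making each $\zvec^k$ an equilibrium), while $\bar\zvec$ is balanced enough that $\phi_i$ becomes safe and strictly improves $u_i$, yields the claim. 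I would then verify three things: (i) $\zvec^1, \zvec^2, \bar\zvec \in \mcS$ (straightforward linear checks); (ii) every safe deviation at $\zvec^k$ is non-improving, so $\zvec^k$ is a constrained Phi-equilibrium; (iii) at $\bar\zvec$ the specific deviation $\phi_i$ is safe and yields $u_i(\phi_i \diamond \bar\zvec) > u_i(\bar\zvec)$, so $\bar\zvec$ is not even an $\epsilon$-equilibrium for small $\epsilon$. It is worth also checking that Assumption~\ref{ass:strictly} holds for the constructed instance (a ``do nothing'' deviation $\phi_i = \Imat$ staying at a strictly feasible strategy typically suffices), so that the example is a legitimate instance of the problem and, in particular, the equilibria it exhibits genuinely exist.

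The main obstacle is the simultaneous calibration of $u_i$ and $c_i$: one needs the cost constraint to be \emph{non-monotone} along the segment from $\zvec^1$ to $\zvec^2$ in the right way — feasible at the endpoints after the deviation but infeasible in the middle, or rather the reverse for the relevant deviation — while keeping the utility incentives pointing consistently. Because $\phi_i \diamond \zvec$ is linear in $\zvec$ and $c_i$ is linear, $c_i(\phi_i \diamond \zvec)$ is an affine function of $\zvec$, so ``unsafe at both endpoints, safe at the midpoint'' is impossible for a \emph{single} deviation; the resolution is that $\zvec^1$ and $\zvec^2$ must be made equilibria by \emph{different} deviations being unsafe, or by the equilibrium-certifying deviations at the endpoints differing from the violating deviation at the midpoint — so the bookkeeping of \emph{which} deviation is the relevant one at each point is the delicate part. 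I expect a clean two-action, two-value example with one cost constraint to work, and I would present it as an explicit \texttt{Example} (indeed the excerpt refers to ``Example~\ref{example}'' being reused later), with a short figure or payoff table, followed by the three verification bullets above collapsed into a couple of lines of inline algebra.
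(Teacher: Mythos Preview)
Your proposal is correct and mirrors the paper's approach closely: a two-player, two-action game with $\Phi_{\textsc{all}}$, a single shared cost constraint, and two explicit equilibria $\zvec^1,\zvec^2$ whose midpoint fails to be one. The paper resolves the linearity obstacle you correctly flagged not by juggling different deviations at the two endpoints, but by having the \emph{same} deviation $\phi_2$ be unsafe at $\zvec^1$ (cost $>0$), safe yet non-improving at $\zvec^2$ (cost $<0$ but utility gain $0$), and both safe and strictly improving at the midpoint (cost $=0$, gain $>0$) --- so it is the \emph{profitability}, not the safety, that fails at the second endpoint, which is a slightly cleaner mechanism than the ``different deviations at different endpoints'' route you sketched.
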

Proposition~\ref{prop:non-convex} is proved by the following example.
\begin{example}\label{example}
	Let $\Phi_{\mathsf{ALL}}$ be the set of all the possible deviations
	%
	%
	in a two-player game in which each player has two actions, namely $\A_1=\A_2=\{a_0, a_1\}$.
	%
	%
	The first player's utility is such that $u_1(a,a')=0$ for all $a \in \A_1$ and $a' \in \A_2$, while the second player's utility is such that $u_2(a_0,a_1)=1$, and $0$ otherwise.
	Both players share the same single cost constraint ($m=1$).
	Their cost functions are defined as $c_i(a_0,a_1)=1$, $c_i(a_0,a_0)=-\frac{1}{2}$, and $c_i(a_1,a)=-1$ for all $a \in \A_2$.
	Notice that the instance defined above satisfies Assumption~\ref{ass:strictly} for $\rho = 1/2$.
	It is easy to see that the correlated strategy $\zvec^1 \in \Delta_{\A}$ such that $\zvec^1[a_0,a_0]=\frac{2}{3}$ and $\zvec^1[a_0,a_1]=\frac{1}{3}$
	is a constrained Phi-equilibrium.
	Moreover, the ``pure'' correlated strategy $\zvec^2 \in \Delta_{\A}$ such that  $\zvec^2[a_1,a_0]=1$ is also a constrained Phi-equilibrium.
	However, the combination $\zvec^3=\frac{1}{2} (\zvec^1+\zvec^2)$ is {not} a constrained Phi-equilibrium.
	Indeed, the second player has an incentive to deviate by using a deviation $\phi_2$ such that $\phi_2[a_0,a_1]=1$ and $\phi_2[a_1,a_1]=1$.
	Such a deviation prescribes to play action $a_1$ when $a_0$ is recommended, and to play action $a_1$ when the recommendation is $a_1$.
	%
	Straightforward calculations show that, for every $a \in \A_1$:
	\[
		(\phi_2\diamond \zvec^3)[a, a^\prime] = 
			\begin{cases}
				\frac{1}{2} &\textnormal{if}\quad a^\prime = a_1\\
				0&\textnormal{otherwise},
			\end{cases}
	\]	
	and $u_2(\phi_2\diamond \zvec^3) = \frac{1}{2}>u_2(\zvec^3)= \frac{1}{6}$.
	Moreover, the deviation is safe, since $\phi_2\in\Phi_2^{\mcS}(\zvec^3)$ as $c_2(\phi_2\diamond \zvec^3)=0$.
\end{example}


%
%
%

In order to formally asses the computational challenges of computing constrained Phi-equilibria, we prove the following strong inapproximability result:
\begin{restatable}[Hardness]{theorem}{theoremHardness}\label{thm:hardness}
	For any constant $\alpha>0$, the problem \textsc{ApxCPE}$(\alpha,(\alpha/s)^2)$ is \NPHard, where $s$ is the number of players' actions in the instance given as input.
	%
	%
\end{restatable}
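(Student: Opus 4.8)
The plan is to reduce from a known \NPHard{} problem---the natural candidate is an inapproximable problem such as \textsc{Max-Independent-Set}, \textsc{Max-Clique}, or \textsc{SAT}---in such a way that the optimal value of the linear function $\ell$ over constrained Phi-equilibria encodes a ``yes/no'' (or large-gap) distinction in the source instance, while any constrained $(\alpha/s)^2$-Phi-equilibrium is forced to be (essentially) exact because the approximation slack $(\alpha/s)^2$ is too small to matter. Concretely, I would design a family of cost-constrained normal-form games $\Gamma$ together with a deviation set $\Phi$ (most naturally $\Phi_{\textsc{all}}$, so that the equilibria are constrained CEs) where the constraints $\cvec_i(\zvec) \preceq \zerovec$ carve out a set $\mcS$ whose interaction with the incentive constraints produces a non-convex equilibrium set, exactly as in Example~\ref{example}. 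The key is that the non-convexity lets us plant ``isolated'' equilibria of wildly different $\ell$-value: in the ``yes'' case there is a constrained Phi-equilibrium with $\ell$-value $1$ (say), and in the ``no'' case every constrained Phi-equilibrium has $\ell$-value at most $\alpha$ (or even $0$), so that no polynomial-time algorithm can achieve a multiplicative $\alpha$-approximation unless $\mathsf{P}=\mathsf{NP}$. Since $\alpha$ is an arbitrary constant, the gap must be made to hold for every constant $\alpha>0$ simultaneously, which typically means the ``no'' case should have optimal value $0$, or arbitrarily small relative to the ``yes'' case.

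The main steps, in order, would be: (i) fix the source problem and its hardness of approximation / gap version; (ii) construct the game $\Gamma$: choose the players, actions $\A_i$ (with $s=|\A_i|$ tracked carefully since it appears in the approximation parameter $(\alpha/s)^2$), utilities $u_i$, and cost functions $c_{i,j}$ so that the safe set $\mcS$ and the safe-deviation sets $\Phi_i^\mcS(\zvec)$ enforce the combinatorial structure; (iii) define $\ell$ so that its value on a correlated strategy reads off the objective of the source instance; (iv) prove completeness: from a ``yes'' instance, exhibit an explicit constrained Phi-equilibrium $\zvec$ with $\ell(\zvec)$ large, verifying it is safe ($\zvec \in \mcS$) and that no player has a profitable safe deviation; (v) prove soundness: show that any constrained $(\alpha/s)^2$-Phi-equilibrium $\zvec$ must have $\ell(\zvec)$ small in the ``no'' case---this is where the $\epsilon=(\alpha/s)^2$ slack has to be shown harmless, presumably by arguing that a would-be high-value strategy admits a safe deviation gaining strictly more than $(\alpha/s)^2$ in utility, contradicting the $\epsilon$-equilibrium condition; (vi) check Assumptions~\ref{ass:solvablePolytope} and~\ref{ass:strictly} hold for the constructed instance (as was done for $\rho=1/2$ in Example~\ref{example}), and that the reduction is polynomial-time and that the bit-size and $1/\rho$ stay polynomial.

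The hard part will be step (v), the soundness argument, and in particular the bookkeeping that makes the $(\alpha/s)^2$ additive slack in the equilibrium definition negligible against the multiplicative $\alpha$ gap in the objective. One has to ensure that the ``bad'' correlated strategies---those with large $\ell$-value in a ``no'' instance---are not merely non-equilibria but are bounded away from being $\epsilon$-equilibria by more than $(\alpha/s)^2$: the profitable safe deviation must yield a utility improvement that is $\Omega((\alpha/s)^2)$ or larger, which is why the $1/s^2$ scaling is baked into the statement (it presumably comes from a utility gain on the order of $1/s$ being realized by a deviation supported on a $1/s$-fraction of the recommendations, or a similar quadratic-in-$1/s$ loss, mirroring the $u_2(\phi_2\diamond\zvec^3)-u_2(\zvec^3)=1/2-1/6$ computation in Example~\ref{example} but now with the relevant probabilities scaled by $1/s$). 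A secondary subtlety is guaranteeing that the planted high-value strategy in the ``yes'' case is genuinely safe and genuinely deviation-proof against \emph{all} safe deviations in $\Phi_i^\mcS(\zvec)$---not just pure ones---so the construction must also certify that every safe mixed deviation is unprofitable there.
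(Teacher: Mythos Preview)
Your plan is correct and matches the paper's approach: the paper reduces from \textsc{Gap-Independent-Set} (the promise version with the Håstad/Zuckerman $|V|^{1-\delta}$ vs.\ $|V|^\delta$ gap), builds a two-player game with $\Phi_{\textsc{all}}$, and uses social welfare as $\ell$; completeness plants a correlated strategy on $a_0$ encoding a large independent set with value $1$, while soundness shows that in the no case every constrained $\epsilon$-Phi-equilibrium has value at most $\alpha/2$, with the $(\alpha/s)^2$ slack absorbed exactly as you anticipate in step~(v). The one concrete idea you are missing---and where all the work lies---is the mechanism by which the cost constraints encode the graph: the paper introduces one constraint $c_v$ per vertex with $c_v(a_v,a_v)=1$ and $c_v(a_v,a_{v'})=-1$ for edges $(v,v')$, so that deviating to $a_v$ is safe precisely when the support of player~2's marginal ``covers'' $v$ via neighbors, which in soundness forces the set of vertices with unsafe deviations to be independent.
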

Intuitively, Theorem~\ref{thm:hardness} states that, for every multiplicative approximation factor $\alpha > 0$, it is \emph{not} possible to find a constrained $\epsilon$-Phi-equilibrium having value of $\ell$ at least a fraction $\alpha$ of its optimal value in time polynomial in $\frac{1}{\epsilon}$.
Moreover, as a byproduct of Theorem~\ref{thm:hardness}, we also get the inapproximability up to within any factor of the problem of computing an optimal constrained (exact) Phi-equilibrium. 
%

%
Notice that the hardness result in Theorem~\ref{thm:hardness} cannot hold for values of $\epsilon$ that are \emph{independent} from the instance size.
Indeed, as we prove in Corollary~\ref{cor:res_2} in Section~\ref{sec:optimal}, problem \textsc{ApxCPE}$(1,\epsilon)$ can be solved in quasi-polynomial time in the instance size whenever $\epsilon > 0$ is a given constant.
Thus, any \NPHard ness result for \textsc{ApxCPE}$(\alpha,\epsilon)$ would contradict the \emph{exponential-time hypothesis}.\footnote{The exponential-time hypothesis conjectures that solving 3SAT requires at least exponential time.}
%
%
	\section{Computing Optimal Constrained $\epsilon$-Phi-equilibria Efficiently}\label{sec:optimal}

In this section, we show how to circumvent the negative result established by Theorem~\ref{thm:hardness}.
In particular, we prove that, when the number of cost constraints is fixed, problem \textsc{ApxCPE}$(1,\epsilon)$ can be solved in time polynomial in the instance size and $\frac{1}{\epsilon}$ for $\epsilon > 0$ (Corollary~\ref{cor:res_1}).
Moreover, we also prove that, in general, for any constant $\epsilon > 0$ problem \textsc{ApxCPE}$(1,\epsilon)$ admits a quasi-polynomial-time algorithm, whose running time becomes polynomial when the number of players' actions is fixed (Corollary~\ref{cor:res_2}).
%

First, in Section~\ref{sec:optimal_1}, we provide a general algorithm that is at the core of the two main results of this section.
Then, in Section~\ref{sec:optimal_1}, we show how the algorithm can be suitably instantiated in order to prove each result.
In the rest of this section, unless stated otherwise, we always assume that an $\epsilon > 0$ has been fixed, and that $I := (\Gamma,\Phi)$ and $\ell : \Delta_{\A} \to \mathbb{R}$ are the inputs of a given instance of problem \textsc{ApxCPE}$(1,\epsilon)$.

\subsection{General Algorithm}\label{sec:optimal_1}

The main technical tool that we employ in order to design our algorithm is a ``Lagrangification'' of the constraints defining the sets $\Phi_i^{\mcS}(\zvec)$ of safe deviations.
%
%
%
%
%
First, we prove the following preliminary result, which shows that strong duality holds for the problem $\max_{\phi_i \in \Phi_i^\mcS(\zvec)} u_i(\phi_i \diamond \zvec)$ of finding the best safe deviation for player $i \in \N$ at $\zvec \in \Delta_{\A}$.
\begin{restatable}{lemma}{minmax}\label{lem:minmax}
	For every $\zvec\in \Delta_{\A}$ and $i \in \N$, it holds 
	\begin{align*}
		\max_{\phi_i \in \Phi_i^\mcS(\zvec)} u_i(\phi_i \diamond \zvec)& =\sup\limits_{\phi_i\in\Phi_i} \inf_{\etavec_i \in \mathbb{R}^{m}_+}  \left(u_i(\phi_i\diamond \zvec)-\etavec_i^{\top} \cvec_i(\phi_i \diamond \zvec)\right) \\
		& =\inf_{\etavec_i \in \mathbb{R}^{m}_+}  \sup\limits_{\phi_i\in\Phi_i} \left(u_i(\phi_i \diamond\zvec)-\etavec_i^{\top} \cvec_i(\phi_i \diamond\zvec)\right).
	\end{align*}
\end{restatable}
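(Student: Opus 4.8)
The plan is to exploit that every quantity appearing in the statement is linear in the matrix $\phi_i$, and then to read off both equalities from the structure of a linear program and its Lagrangian dual. Fix $\zvec \in \Delta_{\A}$ and $i \in \N$ throughout. Since $\phi_i \mapsto \phi_i \diamond \zvec$ is a linear map and $u_i, \cvec_i$ are linear, the functions $\phi_i \mapsto u_i(\phi_i \diamond \zvec)$ and $\phi_i \mapsto \cvec_i(\phi_i \diamond \zvec)$ are linear in the entries of $\phi_i$; moreover, because each $\phi_i \in \Phi_i$ satisfies the stochasticity constraints, $\phi_i \diamond \zvec \in \Delta_{\A}$ and hence $u_i(\phi_i \diamond \zvec) \in [0,1]$. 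By Assumption~\ref{ass:solvablePolytope}, $\Phi_i$ is a (compact) polytope, and $\Phi_i^\mcS(\zvec) = \{ \phi_i \in \Phi_i \mid \cvec_i(\phi_i \diamond \zvec) \preceq \zerovec \}$ is its intersection with $m$ halfspaces, hence again a compact polytope; it is nonempty by Assumption~\ref{ass:strictly}. Therefore $\max_{\phi_i \in \Phi_i^\mcS(\zvec)} u_i(\phi_i \diamond \zvec)$ is a well-defined linear program whose optimum is attained and whose value lies in $[0,1]$.

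For the first equality, fix $\phi_i \in \Phi_i$ and consider $\inf_{\etavec_i \in \mathbb{R}^m_+}\big( u_i(\phi_i \diamond \zvec) - \etavec_i^\top \cvec_i(\phi_i \diamond \zvec)\big)$. If $\phi_i \in \Phi_i^\mcS(\zvec)$, i.e.\ $\cvec_i(\phi_i \diamond \zvec) \preceq \zerovec$, then $-\etavec_i^\top \cvec_i(\phi_i \diamond \zvec) \ge 0$ for all $\etavec_i \succeq \zerovec$, and the choice $\etavec_i = \zerovec$ shows the infimum equals $u_i(\phi_i \diamond \zvec)$. Otherwise some component $c_{i,j}(\phi_i \diamond \zvec) > 0$, and taking $\etavec_i = t\,\evec_j$ with $t \to +\infty$ sends the objective to $-\infty$, so the infimum is $-\infty$. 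Taking the supremum over $\phi_i \in \Phi_i$ thus discards exactly the unsafe deviations, giving $\sup_{\phi_i \in \Phi_i} \inf_{\etavec_i \in \mathbb{R}^m_+}(\cdot) = \sup_{\phi_i \in \Phi_i^\mcS(\zvec)} u_i(\phi_i \diamond \zvec) = \max_{\phi_i \in \Phi_i^\mcS(\zvec)} u_i(\phi_i \diamond \zvec)$.

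For the second equality, observe that the above linear program maximizes the linear objective $u_i(\phi_i \diamond \zvec)$ over $\phi_i$ subject to the polyhedral constraints $\phi_i \in \Phi_i$ together with the $m$ linear inequalities $\cvec_i(\phi_i \diamond \zvec) \preceq \zerovec$. Dualizing only these last $m$ inequalities, while keeping $\phi_i \in \Phi_i$ as hard constraints, produces precisely the partial Lagrangian dual $\inf_{\etavec_i \in \mathbb{R}^m_+} \sup_{\phi_i \in \Phi_i}\big( u_i(\phi_i \diamond \zvec) - \etavec_i^\top \cvec_i(\phi_i \diamond \zvec)\big)$. The primal is feasible (strictly so, by Assumption~\ref{ass:strictly}) and bounded (its value is in $[0,1]$), so strong LP duality applies and the primal and dual optima coincide, which is the claimed identity.

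The only subtle point is that the multiplier domain $\mathbb{R}^m_+$ is unbounded, so a naive topological minimax theorem does not apply out of the box; I sidestep this by invoking strong LP duality directly, which needs only primal feasibility and boundedness. Alternatively, one can note that the Slater point supplied by Assumption~\ref{ass:strictly} (with margin $\rho$) confines the dual optimal multipliers to a compact box — e.g.\ $\|\etavec_i\|_1 \le 1/\rho$ suffices, since the primal value is at most $1$ — and then apply Sion's minimax theorem on $\Phi_i$ times that box; this gives the same conclusion and also shows the inner $\inf$ over $\etavec_i$ is attained. Everything else is routine bookkeeping with the linear maps involved.
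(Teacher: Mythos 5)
Your proof is correct, and while your treatment of the first equality coincides with the paper's (safe $\phi_i$ give the inner infimum at $\etavec_i=\zerovec$, unsafe ones are sent to $-\infty$, so the outer supremum discards them), you handle the second equality by a genuinely different route. The paper simply invokes a sup--inf exchange theorem for unbounded domains (citing Ekeland--Temam, Proposition~2.3) to swap the two operators; you instead recognize the left-hand side as a linear program over the polytope $\Phi_i$ (Assumption~\ref{ass:solvablePolytope}) with the $m$ additional linear constraints $\cvec_i(\phi_i\diamond\zvec)\preceq\zerovec$, identify the right-hand side as its partial Lagrangian dual, and conclude by strong LP duality, which needs only primal feasibility (guaranteed by Assumption~\ref{ass:strictly}; strict feasibility is not even required) and boundedness of the value in $[0,1]$. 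This buys a more elementary and self-contained argument that sidesteps exactly the technical issue you flag, namely the unboundedness of $\mathbb{R}^m_+$, at the mild cost of an implicit routine step identifying the partial Lagrangian dual with the LP dual (dualize the inner maximization over the polytope $\Phi_i$). Your sketched alternative---bounding the optimal multipliers by $\|\etavec_i\|_1\le 1/\rho$ via the Slater point and applying Sion's theorem on $\Phi_i$ times that box---essentially anticipates the paper's subsequent Lemma~\ref{lem:duality}; note only that, as stated, it still needs a short extra argument to relate $\sup_{\phi_i}\inf_{\etavec_i\in\text{box}}$ back to $\sup_{\phi_i}\inf_{\etavec_i\in\mathbb{R}^m_+}$ (e.g., by mixing an unsafe $\phi_i$ with the Slater deviation), but since this is offered only as a remark, your main proof stands as is.
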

Then, by exploiting Lemma~\ref{lem:minmax}, we can prove that, under Assumption~\ref{ass:strictly}, strong duality continues to hold even when restricting the Lagrange multipliers $\etavec_i $ to have $\ell_1$-norm less than or equal to $1/\rho$.
%
%
Formally:
\begin{restatable}{lemma}{duality}\label{lem:duality}
	Let $\D\coloneqq \left\{\etavec\in\Reals^{m}_+ \mid  ||\etavec||_1\le{1}/{\rho}\right\}$.
	Then, for every $\zvec\in \Delta_{\A}$ and $i \in \N$, it holds:
	\begin{align*}
	\max_{\phi_i \in \Phi_i^{\mcS}(\zvec)} u_i(\phi_i \diamond \zvec)	& =  \max\limits_{\phi_i\in\Phi_i} \min_{\etavec_i \in  \D} \left(u_i(\phi_i \diamond \zvec)-\etavec_i^\top \cvec_i(\phi_i\diamond \zvec)\right) \\
	&=\min_{\etavec_i \in \D} 	\max\limits_{\phi_i\in\Phi_i}\left(u_i(\phi_i \diamond \zvec)-\etavec_{i}^\top \cvec_i(\phi_i \diamond \zvec)\right) .
	 \end{align*}
\end{restatable}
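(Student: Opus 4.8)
The statement of Lemma~\ref{lem:duality} refines Lemma~\ref{lem:minmax} by restricting the Lagrange multipliers to the bounded set $\D$, and also upgrades the $\sup$/$\inf$ to $\max$/$\min$. The plan is to start from the chain of equalities provided by Lemma~\ref{lem:minmax} and argue that the $\inf$ over $\etavec_i \in \Reals^m_+$ is in fact attained at some $\etavec_i^\star$ with $\|\etavec_i^\star\|_1 \le 1/\rho$; once this is established, both the inner $\inf$ in the $\sup$-$\inf$ expression and the outer $\inf$ in the $\inf$-$\sup$ expression can be replaced by a $\min$ over $\D$ without changing the value, and compactness of $\Phi_i$ (Assumption~\ref{ass:solvablePolytope}) together with $\D$ being compact turns the remaining $\sup$'s into $\max$'s and lets us apply a standard minimax theorem (Sion's) to justify the swap directly on the compact-compact formulation.

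First I would fix $\zvec \in \Delta_\A$ and $i \in \N$, and let $v \defeq \max_{\phi_i \in \Phi_i^\mcS(\zvec)} u_i(\phi_i \diamond \zvec)$, which is well defined and finite since $\Phi_i^\mcS(\zvec) \subseteq \Phi_i$ is a nonempty (by Assumption~\ref{ass:strictly}) compact set. By Lemma~\ref{lem:minmax}, $v = \inf_{\etavec_i \in \Reals^m_+} g(\etavec_i)$ where $g(\etavec_i) \defeq \sup_{\phi_i \in \Phi_i}\bigl(u_i(\phi_i\diamond\zvec) - \etavec_i^\top \cvec_i(\phi_i\diamond\zvec)\bigr)$; note $g$ is convex (a sup of affine functions of $\etavec_i$). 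The key step is to show the infimum is attained within $\D$. Take $\phi_i^\circ \in \Phi_i^\mcS(\zvec)$ as in Assumption~\ref{ass:strictly}; actually I need a single deviation that is strictly feasible in \emph{all} $m$ constraints simultaneously, which follows by taking a uniform mixture of the per-coordinate witnesses $\phi^\circ_i$ (a convex combination of safe deviations is safe by linearity of $\cvec_i(\cdot \diamond \zvec)$ and convexity of $\mcS_i$), giving $\phi_i^\circ$ with $\cvec_i(\phi_i^\circ \diamond \zvec) \preceq -(\rho/m)\onevec$ — rescaling $\rho$ by $m$ is harmless since $m$ is polynomially bounded. Evaluating any $\etavec_i \in \Reals^m_+$ at this $\phi_i^\circ$ gives $g(\etavec_i) \ge u_i(\phi_i^\circ \diamond \zvec) - \etavec_i^\top \cvec_i(\phi_i^\circ \diamond \zvec) \ge 0 + (\rho/m)\|\etavec_i\|_1$ (using $u_i \ge 0$ and $\cvec_i(\phi_i^\circ\diamond\zvec) \preceq -(\rho/m)\onevec$ with $\etavec_i \succeq \zerovec$). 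On the other hand $g(\zerovec) = \sup_{\phi_i} u_i(\phi_i\diamond\zvec) \le 1$. Hence any $\etavec_i$ with $\|\etavec_i\|_1 > m/\rho$ has $g(\etavec_i) > 1 \ge g(\zerovec)$, so it cannot be a minimizer; the infimum over $\Reals^m_+$ equals the infimum over the compact set $\{\etavec_i \succeq \zerovec : \|\etavec_i\|_1 \le m/\rho\}$. (If one insists on the bound $1/\rho$ exactly as in the statement, this is just a redefinition of the polynomially-bounded constant $\rho$; I would flag this.) Compactness of this set plus lower semicontinuity of $g$ then yields that the infimum is a genuine $\min$ over $\D$.

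With the multiplier domain reduced to the compact convex set $\D$, the rest is a direct application of Sion's minimax theorem to the function $(\phi_i, \etavec_i) \mapsto u_i(\phi_i\diamond\zvec) - \etavec_i^\top \cvec_i(\phi_i\diamond\zvec)$, which is affine (hence concave and continuous) in $\phi_i$ over the compact convex polytope $\Phi_i$ and affine (hence convex and continuous) in $\etavec_i$ over $\D$; this gives $\max_{\phi_i \in \Phi_i}\min_{\etavec_i \in \D}(\cdots) = \min_{\etavec_i \in \D}\max_{\phi_i \in \Phi_i}(\cdots)$, with both outer and inner extrema attained by compactness. It remains to identify this common value with $v$. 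For ``$\le v$'': for a fixed $\phi_i$, the inner $\min_{\etavec_i \in \D}(u_i(\phi_i\diamond\zvec) - \etavec_i^\top\cvec_i(\phi_i\diamond\zvec))$ equals $u_i(\phi_i\diamond\zvec)$ if $\cvec_i(\phi_i\diamond\zvec) \preceq \zerovec$ (optimal to take $\etavec_i = \zerovec$) and is $\le u_i(\phi_i\diamond\zvec) - (1/\rho)\max_j c_{i,j}(\phi_i\diamond\zvec)$ otherwise — but more simply, since $v = \inf_{\Reals^m_+} g \le \inf_{\D} g = \min_{\etavec_i\in\D}\max_{\phi_i}(\cdots)$ reversed... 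I would instead argue cleanly: $\min_{\etavec_i \in \D} g(\etavec_i) = v$ was just shown (the infimum over $\Reals^m_+$ is attained in $\D$), and Sion gives $\min_{\etavec_i\in\D} g(\etavec_i) = \max_{\phi_i\in\Phi_i}\min_{\etavec_i\in\D}(\cdots)$, so all three quantities in the lemma equal $v$. The main obstacle I anticipate is purely bookkeeping: getting a \emph{single} simultaneously-strictly-feasible deviation out of the per-coordinate Slater condition in Assumption~\ref{ass:strictly}, and tracking the resulting $\rho \mapsto \rho/m$ degradation so that the claimed radius $1/\rho$ is honestly justified (or the constant transparently renamed); the minimax swap itself is routine.
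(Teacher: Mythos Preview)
Your proposal is correct and takes essentially the same route as the paper: invoke Lemma~\ref{lem:minmax}, then use the Slater condition to show that the dual function $g(\etavec_i)=\sup_{\phi_i}(u_i-\etavec_i^\top\cvec_i)$ exceeds $g(\zerovec)\le 1$ once $\|\etavec_i\|_1$ is large (the paper packages this as two short auxiliary lemmas showing $g\ge 1$ on $\D^c$ and $\inf_\D g\le 1$), so the infimum over $\Reals^m_+$ coincides with the minimum over the compact set $\D$; the paper then closes via a sandwich with Lemma~\ref{lem:minmax}, while you re-invoke Sion on $\Phi_i\times\D$---a purely cosmetic difference. On the $\rho$ vs.\ $\rho/m$ point you flag: the paper's own argument silently reads Assumption~\ref{ass:strictly} as providing a \emph{single} deviation with $\cvec_i(\tilde\phi_i\diamond\zvec)\preceq -\rho\onevec$ (which gives the radius $1/\rho$ exactly), whereas under the literal per-coordinate reading your convex-mixing fix is correct and only perturbs a polynomially-bounded constant, so there is no substantive gap either way.
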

Lemma~\ref{lem:duality} allows us to write player $i$'s incentive constraints in the definition of constrained $\epsilon$-Phi-equilibria as
\begin{equation}\label{eq:cons_lemmiz}
	u_i(\zvec) \geq \min_{\etavec_i \in \D} 	\max_{\phi_i\in\Phi_i}\left(u_i(\phi_i \diamond \zvec)-\etavec_{i}^\top \hspace{-0.1mm} \cvec_i(\phi_i \diamond \zvec)\right) - \epsilon.
\end{equation}
%

This crucially allows us to show the following result: solving problem \textsc{ApxCPE}$(1,\epsilon)$ is equivalent to computing $\max_{(\etavec_1,\ldots,\etavec_n)\in\D^n} F_{\epsilon} (\etavec_1, \ldots, \etavec_{n})$, where $F_{\epsilon} (\etavec_1, \ldots, \etavec_{n})$ is the optimal value of a suitable maximization problem parameterized by tuples of Lagrange multipliers $\etavec_i \in \D$, one per player $i \in \N$.
%
%
Such a problem
asks to compute a safe correlated strategy maximizing the linear function $\ell$ subject to players' incentive constraints that are re-formulated by means of Lemma~\ref{lem:duality}.
%
%
Formally, we define $F_{\epsilon}(\etavec_1, \ldots, \etavec_{n})$ as the maximum of $\ell(\zvec)$ over those $\zvec \in \mcS$ that additionally satisfy the following constraint for every $i \in \N$:
\begin{equation}\label{eq:LP2}
	u_i(\zvec)\ge \max\limits_{\phi_i\in\Phi_i}\left(u_i(\phi_i\diamond \zvec)-\etavec_i^\top \cvec_i(\phi_i \diamond \zvec)\right) -\epsilon .
\end{equation}
Notice that the $\min$ operator that appears in the right-hand side of Constraints~\eqref{eq:cons_lemmiz} is dropped by adding the outer maximization over the tuples $(\etavec_1,\dots,\etavec_n) \in \D^n$, as the maximum of $\ell$ is always achieved when the right-hand side of such constraints is as small as possible.

Next, we show that $F_{\epsilon}(\etavec_1, \ldots, \etavec_{n})$ can be computed in polynomial time by means of the \emph{ellipsoid algorithm}.
\begin{lemma}\label{lem:efficientF}
	For every tuple $(\etavec_1,\dots,\etavec_n) \in \D^n$, the value of $F_{\epsilon}(\etavec_1,\dots,\etavec_n)$ can be computed in time polynomial in the instance size $|I|$ and $\frac{1}{\epsilon}$.
\end{lemma}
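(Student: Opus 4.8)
The plan is to show that $F_\epsilon(\etavec_1,\dots,\etavec_n)$ is the optimal value of a linear program over $\zvec\in\Delta_{\A}$ that, although it has exponentially many constraints, admits a polynomial-time separation oracle, so the ellipsoid method applies. First I would write down the feasible region explicitly: it consists of all $\zvec\in\Delta_{\A}$ satisfying (i) the normalization $\sum_{\avec}\zvec[\avec]=1$ and $\zvec\succeq\zerovec$; (ii) the safety constraints $\cvec_i(\zvec)\preceq\zerovec$ for all $i\in\N$, which are $nm$ linear inequalities in $\zvec$ since $\cvec_i(\zvec)=\sum_\avec \cvec_i(\avec)\zvec[\avec]$; and (iii) for each player $i\in\N$ the incentive constraint~\eqref{eq:LP2}. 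The objective $\ell(\zvec)$ is linear by hypothesis. The only source of difficulty is constraint~(iii): the right-hand side $\max_{\phi_i\in\Phi_i}\big(u_i(\phi_i\diamond\zvec)-\etavec_i^\top\cvec_i(\phi_i\diamond\zvec)\big)$ is a maximum of linear functions of $\zvec$, hence convex in $\zvec$ but not given by a single inequality. The key observation is that, for fixed $\zvec$ and fixed $\etavec_i$, the map $\phi_i\mapsto u_i(\phi_i\diamond\zvec)-\etavec_i^\top\cvec_i(\phi_i\diamond\zvec)$ is linear in the matrix entries $\phi_i[b,a]$ (since $\phi_i\diamond\zvec$ is linear in $\phi_i$), and $\Phi_i$ is, by Assumption~\ref{ass:solvablePolytope}, a polytope described by finitely many linear inequalities of size $\poly(|I|)$. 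Therefore the inner maximization is a linear program of polynomial size, solvable exactly in polynomial time.

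From here I would assemble the separation oracle for the feasible region of the $F_\epsilon$-LP. Given a candidate $\zvec$: check the $O(s^n)$ nonnegativity and normalization constraints and the $nm$ safety constraints directly (polynomial in $|I|$ since $n$ is fixed); then, for each $i\in\N$, solve the inner LP $\max_{\phi_i\in\Phi_i}\big(u_i(\phi_i\diamond\zvec)-\etavec_i^\top\cvec_i(\phi_i\diamond\zvec)\big)$ to obtain an optimal $\phi_i^\star$ together with its optimal value $v_i$. If $u_i(\zvec)\ge v_i-\epsilon$ for all $i$, then $\zvec$ is feasible; otherwise, for the violated player $i$, the inequality $u_i(\zvec)\ge u_i(\phi_i^\star\diamond\zvec)-\etavec_i^\top\cvec_i(\phi_i^\star\diamond\zvec)-\epsilon$ is a valid linear inequality for the whole feasible region (it is implied by~\eqref{eq:LP2} with that particular $\phi_i^\star$) and is violated by $\zvec$, hence a separating hyperplane. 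This oracle runs in time $\poly(|I|,1/\epsilon)$ — in fact $\poly(|I|)$, the $1/\epsilon$ only entering through the bit-length bookkeeping. Feeding this oracle to the ellipsoid algorithm, together with the standard facts that the feasible polytope lives in $\Reals^{\A}$ with $|\A|=O(s^n)$ variables, has all coefficients of polynomial bit-size (the entries of $\etavec_i$ are bounded by $1/\rho=O(\poly(|I|))$, the $u_i,c_{i,j}$ are in $[-1,1]$), and is bounded (it sits inside $\Delta_{\A}$), yields $F_\epsilon(\etavec_1,\dots,\etavec_n)$ in time $\poly(|I|,1/\epsilon)$.

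Two technical points deserve care in the writeup. First, the ellipsoid method needs the feasible region to be either empty or full-dimensional after a small perturbation, or one argues via the standard sliding-objective / binary-search reduction from optimization to feasibility for LPs with a separation oracle; I would invoke the Grötschel–Lovász–Schrijver framework to handle this, noting that feasibility is guaranteed because $F_\epsilon$ is only ever evaluated where it is well-defined (by Lemma~\ref{lem:duality} and Theorem~\ref{thm:existence} the relevant region is nonempty for the $\etavec_i$ attaining the outer min, and in general one returns $-\infty$ or detects infeasibility within the same time bound). Second, the inner LP over $\Phi_i$ must itself be solvable in polynomial time and return a vertex solution of polynomial bit-size; this is immediate from Assumption~\ref{ass:solvablePolytope} since $\Phi_i\subseteq[0,1]^{s\times s}$ is a bounded polytope with an explicit polynomial-size inequality description, so its vertices have polynomial bit-complexity and a vertex optimum always exists. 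The main obstacle is thus purely bookkeeping: verifying that every magnitude entering the ellipsoid's running-time bound — number of variables $O(s^n)$ with $n$ fixed, bit-size of $1/\rho$, the accuracy parameter driven by $\epsilon$ — is $\poly(|I|,1/\epsilon)$, so that the polynomially many calls to the polynomial-time separation oracle (each of which itself solves a polynomial-size LP) compose to an overall $\poly(|I|,1/\epsilon)$ bound.
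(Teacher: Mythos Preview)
Your proposal is correct and follows essentially the same approach as the paper: cast $F_\epsilon(\etavec_1,\dots,\etavec_n)$ as an LP in $\zvec$ with exponentially many linearized incentive constraints (one per vertex of each $\Phi_i$), and solve it by the ellipsoid method using a separation oracle that, for each player $i$, solves the polynomial-size inner LP $\max_{\phi_i\in\Phi_i}\big(u_i(\phi_i\diamond\zvec)-\etavec_i^\top\cvec_i(\phi_i\diamond\zvec)\big)$ via Assumption~\ref{ass:solvablePolytope} and returns the corresponding violated inequality. Your write-up is in fact more careful than the paper's about the GLS bookkeeping (bit sizes, feasibility, vertex bit-complexity of $\Phi_i$), but the underlying argument is identical.
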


\begin{proof}
	We show that $F_{\epsilon}(\etavec_1,\ldots,\etavec_n)$ can be solved in polynomial time by means of the {ellipsoid algorithm}.
	Let us notice that Constraints~\eqref{eq:LP2} can be equivalently encoded by a set of linear inequalities, one for each player $i\in \N$ and deviation $\phi_i\in vert(\Phi_i)$, where $vert(\Phi_i)$ denotes the set of vertexes of the polytope $\Phi_{i}$ (recall Assumption~\ref{ass:solvablePolytope}).
	Thus, solving $F_{\epsilon}(\etavec_1,\ldots,\etavec_n)$ is equivalent to solving an LP with a (possibly) exponential number of constraints, but polynomially-many variables.
	Such an LP can be solved in polynomial time by means of the ellipsoid algorithm, provided that a polynomial-time separation oracle for the linearized version of Constraints~\eqref{eq:LP2} is available.
	Such an oracle can be implemented by solving the maximization in the right-hand side of Constraints~\eqref{eq:LP2} for a correlated strategy $\zvec\in\Delta_{\A}$ given as input.
	Formally, the separation oracle solves the following problem for each player $i \in \N$:
	\[
		\phi_i^\star\in\arg\max\limits_{\phi_i\in\Phi_i}\left(u_i(\phi_i\diamond\zvec)-\etavec_i^\top\cvec_i(\phi_i\diamond\zvec)\right),
	\]
	which can be done efficiently thanks to Assumption~\ref{ass:solvablePolytope}.
	Then, if the separation oracle finds any $\phi_i^\star$ such that:
	\[
	u_i(\zvec)\ge u_i(\phi^\star_i\diamond\zvec)-\etavec_i^\top\cvec_i(\phi^\star_i\diamond\zvec),
	\]
	it outputs the above inequality as a separating hyperplane to be used in the ellipsoid algorithm.
\end{proof}
%
%

Lemma~\ref{lem:efficientF} is \emph{not} enough to complete our algorithm, since we need an efficient way of optimizing $F_\epsilon (\etavec_1,\ldots,\etavec_n)$ over all the tuples of Lagrange multipliers.
This problem is non-trivial, since $F_\epsilon (\etavec_1,\ldots,\etavec_n)$ is non-concave in $\etavec_i$.
Nevertheless, we show that, by restricting the domain $\D$ of the Lagrange multipliers to a suitably-defined finite ``small'' subset, we can still find a constrained $\epsilon$-Phi-equilibrium whose value of $\ell$ is at least as large as that of any constrained (exact) Phi-equilibrium.
This is enough to solve \textsc{ApxCPE}$(1,\epsilon)$.
In particular, we need a finite subset of ``good'' Lagrange multipliers, in the sense of the following definition.
%
%
\begin{definition}
	Given any $\delta > 0$, a set $\tilde \D \subseteq \D$ is \emph{$\delta$-optimal} if, for every $\zvec \in \Delta_{\A}$ and $i \in \N$, the following holds:
	\begin{align*}
		\min\limits_{\etavec_i \in \tilde \D}\max\limits_{\phi_i\in\Phi_i}&\left(u_i(\phi_i\diamond \zvec)-\etavec_i^\top \cvec_i(\phi_i \diamond \zvec)\right) \leq \max\limits_{\phi_i \in \Phi^\mcS_i(\zvec)} u_i(\phi_i\diamond\zvec)+\delta.
	\end{align*}
	%
\end{definition}
Intuitively, thanks to Lemma~\ref{lem:duality}, if we optimize the Lagrange multipliers over a $\delta$-optimal set $\tilde \D \subseteq \D$, instead of optimizing them over $\D$, then we are allowing the players to violate incentive constraints by at most $\delta$.

In the following, we assume that a finite $\delta$-optimal set $\tilde \D \subseteq \D$ is available. 
In Section~\ref{sec:Dsets}, se show how to design two particular $\delta$-optimal sets that allow to prove our main results.
For ease of presentation, we let 
\[
	L_{\tilde \D, \epsilon}:=\max\limits_{(\etavec_1, \ldots, \etavec_n )\in  \tilde \D^n} F_{\epsilon}(\etavec_1, \ldots, \etavec_n)
\]
be the optimal value of $F_\epsilon (\etavec_1,\ldots,\etavec_n)$ when the Lagrange multipliers are constrained to be in a $\delta$-optimal set $\tilde \D \subseteq \D$.  
%
%
%
%
Next, we show that, given any $\delta$-optimal set $\tilde \D$ with $\delta \le \epsilon$, the value of $L_{\tilde \D,\epsilon}$ is at least that achieved by constrained (exact) Phi-equilibria, namely $L_{\D,0}$.
Formally:
%
%
\begin{restatable}{lemma}{deltaOPTProblem}\label{lem:deltaOPT}
	Given any $0 < \delta \leq \epsilon$ and a $\delta$-optimal set $\tilde \D \subseteq \D$, the following holds: $L_{\tilde \D, \epsilon}\ge L_{\D, 0}$.
	%
\end{restatable}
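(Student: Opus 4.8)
The goal is to show $L_{\tilde\D,\epsilon}\ge L_{\D,0}$, i.e.\ that restricting the Lagrange multipliers to a $\delta$-optimal finite subset (with $\delta\le\epsilon$) does not decrease the optimum below the value achievable by an exact constrained Phi-equilibrium. The natural strategy is: take an optimal exact constrained Phi-equilibrium $\zvec^\star$ — one achieving $\ell(\zvec^\star)=L_{\D,0}$ (such a $\zvec^\star$ exists by Theorem~\ref{thm:existence} together with compactness of the feasible set and continuity of $\ell$) — and exhibit a tuple $(\etavec_1,\dots,\etavec_n)\in\tilde\D^n$ for which $\zvec^\star$ is feasible in the program defining $F_\epsilon(\etavec_1,\dots,\etavec_n)$. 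Then $L_{\tilde\D,\epsilon}\ge F_\epsilon(\etavec_1,\dots,\etavec_n)\ge\ell(\zvec^\star)=L_{\D,0}$, which is exactly the claim.

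\textbf{Choosing the multipliers.} For each player $i\in\N$, I would pick $\etavec_i$ to be a minimizer (over $\tilde\D$) of $\max_{\phi_i\in\Phi_i}\bigl(u_i(\phi_i\diamond\zvec^\star)-\etavec_i^\top\cvec_i(\phi_i\diamond\zvec^\star)\bigr)$, evaluated at the specific strategy $\zvec^\star$. Since $\tilde\D$ is $\delta$-optimal, this choice guarantees
\[
\max_{\phi_i\in\Phi_i}\bigl(u_i(\phi_i\diamond\zvec^\star)-\etavec_i^\top\cvec_i(\phi_i\diamond\zvec^\star)\bigr)\le \max_{\phi_i\in\Phi_i^\mcS(\zvec^\star)}u_i(\phi_i\diamond\zvec^\star)+\delta.
\]
Now use that $\zvec^\star$ is an \emph{exact} constrained Phi-equilibrium: by definition $u_i(\zvec^\star)\ge u_i(\phi_i\diamond\zvec^\star)$ for all $\phi_i\in\Phi_i^\mcS(\zvec^\star)$, hence $u_i(\zvec^\star)\ge\max_{\phi_i\in\Phi_i^\mcS(\zvec^\star)}u_i(\phi_i\diamond\zvec^\star)$. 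Chaining the two inequalities and using $\delta\le\epsilon$ yields
\[
u_i(\zvec^\star)\ge \max_{\phi_i\in\Phi_i}\bigl(u_i(\phi_i\diamond\zvec^\star)-\etavec_i^\top\cvec_i(\phi_i\diamond\zvec^\star)\bigr)-\epsilon,
\]
which is precisely Constraint~\eqref{eq:LP2} for player $i$ at the point $\zvec^\star$ with this $\etavec_i$. Since also $\zvec^\star\in\mcS$ (it is a constrained Phi-equilibrium, hence safe), $\zvec^\star$ is feasible for the program defining $F_\epsilon(\etavec_1,\dots,\etavec_n)$.

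\textbf{Conclusion and the delicate point.} Feasibility of $\zvec^\star$ gives $F_\epsilon(\etavec_1,\dots,\etavec_n)\ge\ell(\zvec^\star)=L_{\D,0}$, and since $(\etavec_1,\dots,\etavec_n)\in\tilde\D^n$ we get $L_{\tilde\D,\epsilon}=\max_{(\etavec_1,\dots,\etavec_n)\in\tilde\D^n}F_\epsilon(\etavec_1,\dots,\etavec_n)\ge L_{\D,0}$, as desired. The only subtle points to be careful about are: (i) that the optimum defining $L_{\D,0}$ is actually attained by some $\zvec^\star$ — this needs the feasible set of exact constrained Phi-equilibria to be nonempty (Theorem~\ref{thm:existence}) and closed/compact, which follows since it is the intersection of $\mcS$ with finitely many closed incentive conditions inside the simplex; and (ii) that the $\delta$-optimality inequality is applied at the \emph{particular} point $\zvec^\star$, which is legitimate because the definition of $\delta$-optimal set quantifies over \emph{all} $\zvec\in\Delta_\A$ and all $i\in\N$. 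No genuine obstacle remains beyond bookkeeping; the main thing is to invoke the exact-equilibrium property of $\zvec^\star$ in exactly the right place so that the $\delta\le\epsilon$ slack absorbs the relaxation introduced by the finite multiplier set.
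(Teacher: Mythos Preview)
Your proof is correct and follows essentially the same approach as the paper: take an optimal exact constrained Phi-equilibrium $\zvec^\star$, use $\delta$-optimality of $\tilde\D$ at that point to find multipliers $(\etavec_1,\dots,\etavec_n)\in\tilde\D^n$ for which $\zvec^\star$ satisfies Constraint~\eqref{eq:LP2} with slack $\epsilon$, and conclude $L_{\tilde\D,\epsilon}\ge F_\epsilon(\etavec_1,\dots,\etavec_n)\ge\ell(\zvec^\star)=L_{\D,0}$. If anything, your write-up is cleaner than the paper's own argument, which contains a couple of notational slips (e.g.\ writing the inequality with two different $\zvec$'s and the typo ``$\delta\ge\epsilon$'').
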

Intuitively, Lemma~\ref{lem:deltaOPT} is proved by noticing that, provided that $\delta \leq \epsilon$, the incentive constraints violation introduced by using $\tilde \D$ instead of $\D$ is at most $\epsilon$.
Moreover, the set of feasible correlated strategies can only expand by allowing incentive constraints to be violated, and, thus, the value of the objective $\ell$ can only increase.

Lemma~\ref{lem:deltaOPT} suggests a way of solving \textsc{ApxCPE}$(1,\epsilon)$.
%
%
Indeed, given a finite $\delta$-optimal set $\tilde \D \subseteq \D$ with $\delta \leq \epsilon$, by enumerating over all the tuples of Lagrange multipliers $\etavec_i\in \tilde \D$, one per player $i \in \N$, we can find the desired constrained $\epsilon$-Phi-equilibrium.
The following theorem shows that this procedure gives an algorithm for \textsc{ApxCPE}$(1,\epsilon)$ that runs in time polynomial in the instance size, $|\tilde \D|$, and $\frac{1}{\epsilon}$.
%
%
\begin{theorem}\label{th:polyalgo}
	Given a finite $\delta$-optimal set $\tilde \D \subseteq \D$ with $\delta \leq \epsilon$, there exists an algorithm that solves \textsc{ApxCPE}$(1,\epsilon)$ and runs in time polynomial in the instance size $ |I| $, the number $|\tilde \D|$ of elements in $\tilde \D$, and $\frac{1}{\epsilon}$ for every $\epsilon > 0$.
	%
	%
\end{theorem}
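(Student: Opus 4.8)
The plan is to convert the ``Lagrangified'' reformulation into a brute-force enumeration: since $F_\epsilon$ is efficiently computable (Lemma~\ref{lem:efficientF}) but non-concave in the multipliers, I would optimize it over the finite $\delta$-optimal grid $\tilde{\D}$ rather than over the continuum $\D$, using one multiplier vector per player. Concretely, the algorithm loops over all tuples $(\etavec_1,\dots,\etavec_n)\in\tilde{\D}^n$; for each tuple it runs the ellipsoid-based routine of Lemma~\ref{lem:efficientF} to compute $F_\epsilon(\etavec_1,\dots,\etavec_n)$ together with an optimal correlated strategy attaining it (the ellipsoid method also returns an optimizer of the underlying LP); it then keeps a tuple $(\etavec_1^\star,\dots,\etavec_n^\star)$ achieving the maximum value $L_{\tilde{\D},\epsilon}=\max_{(\etavec_1,\dots,\etavec_n)\in\tilde{\D}^n}F_\epsilon(\etavec_1,\dots,\etavec_n)$ and returns the associated strategy $\zvec^\star$.

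Next I would argue that $\zvec^\star$ is a constrained $\epsilon$-Phi-equilibrium. By construction $\zvec^\star\in\mcS\subseteq\Delta_{\A}$, so it is safe, and it satisfies Constraint~\eqref{eq:LP2} for the tuple $(\etavec_1^\star,\dots,\etavec_n^\star)$. Fix $i\in\N$ and a safe deviation $\phi_i\in\Phi_i^{\mcS}(\zvec^\star)$; then $\phi_i\diamond\zvec^\star\in\mcS_i$, i.e. $\cvec_i(\phi_i\diamond\zvec^\star)\preceq\zerovec$, while $\etavec_i^\star\in\tilde{\D}\subseteq\D\subseteq\Reals^m_+$, hence $(\etavec_i^\star)^\top\cvec_i(\phi_i\diamond\zvec^\star)\le 0$ and therefore
\[
	u_i(\phi_i\diamond\zvec^\star)\le u_i(\phi_i\diamond\zvec^\star)-(\etavec_i^\star)^\top\cvec_i(\phi_i\diamond\zvec^\star)\le\max_{\phi_i'\in\Phi_i}\left(u_i(\phi_i'\diamond\zvec^\star)-(\etavec_i^\star)^\top\cvec_i(\phi_i'\diamond\zvec^\star)\right)\le u_i(\zvec^\star)+\epsilon ,
\]
where the last inequality is exactly~\eqref{eq:LP2}. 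Since this holds for every $\phi_i\in\Phi_i^{\mcS}(\zvec^\star)$ and every $i$, $\zvec^\star$ satisfies the incentive constraints of the equilibrium definition with tolerance $\epsilon$.

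For optimality, note first that $\ell(\zvec^\star)=L_{\tilde{\D},\epsilon}$ by construction; since $0<\delta\le\epsilon$ and $\tilde{\D}$ is $\delta$-optimal, Lemma~\ref{lem:deltaOPT} yields $L_{\tilde{\D},\epsilon}\ge L_{\D,0}$. Finally $L_{\D,0}$ upper-bounds the $\ell$-value of every constrained (exact) Phi-equilibrium $\zvec'$: by Lemma~\ref{lem:duality}, each of $\zvec'$'s incentive constraints can be written as $u_i(\zvec')\ge\max_{\phi_i\in\Phi_i}\big(u_i(\phi_i\diamond\zvec')-(\etavec_i')^\top\cvec_i(\phi_i\diamond\zvec')\big)$ for an $\etavec_i'\in\D$ attaining the min, so the safe strategy $\zvec'$ is feasible for $F_0(\etavec_1',\dots,\etavec_n')$ and $\ell(\zvec')\le F_0(\etavec_1',\dots,\etavec_n')\le L_{\D,0}$. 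Hence $\ell(\zvec^\star)\ge\ell(\zvec')$ for all constrained Phi-equilibria $\zvec'$, which is precisely what \textsc{ApxCPE}$(1,\epsilon)$ asks for. As for running time, there are $|\tilde{\D}|^n$ tuples, which is polynomial in $|\tilde{\D}|$ because $n$ is fixed; each evaluation of $F_\epsilon$ costs time polynomial in $|I|$ and $\frac1\epsilon$ by Lemma~\ref{lem:efficientF}; and selecting the best value is negligible, so the overall running time is polynomial in $|I|$, $|\tilde{\D}|$, and $\frac1\epsilon$.

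The routine bookkeeping aside, the two substantive points are: (i) showing that a strategy feasible only for the Lagrangian-relaxed constraints~\eqref{eq:LP2} is a genuine $\epsilon$-equilibrium — this is exactly where the sign condition $\cvec_i(\phi_i\diamond\zvec)\preceq\zerovec$ that defines a safe deviation is exploited, letting us drop the non-negative penalty $(\etavec_i^\star)^\top\cvec_i(\phi_i\diamond\zvec^\star)$ and only increase the deviation's value; and (ii) the fact, packaged into Lemma~\ref{lem:deltaOPT} through the requirement $\delta\le\epsilon$, that replacing the continuum $\D$ by the finite grid $\tilde{\D}$ loses nothing against the exact-equilibrium benchmark $L_{\D,0}$. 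Everything else is a direct consequence of Lemmas~\ref{lem:efficientF} and~\ref{lem:deltaOPT} together with the reformulation established via Lemma~\ref{lem:duality}.
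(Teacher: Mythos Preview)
Your proposal is correct and follows essentially the same approach as the paper's proof: enumerate all tuples in $\tilde\D^n$, compute $F_\epsilon$ for each via Lemma~\ref{lem:efficientF}, and return the best, with correctness coming from Lemma~\ref{lem:deltaOPT}. In fact, you supply details the paper leaves implicit---namely the verification that any $\zvec$ satisfying~\eqref{eq:LP2} is a genuine constrained $\epsilon$-Phi-equilibrium, and the argument (via Lemma~\ref{lem:duality}) that $L_{\D,0}$ upper-bounds $\ell(\zvec')$ for every exact equilibrium $\zvec'$.
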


\begin{proof}
	The algorithm works by enumerating over all the possible tuples of Lagrange multipliers $\etavec_i\in \tilde \D$, one per player $ i \in \N$.
	These are polynomially many in the size $|\tilde \D|$ when the number of players $n$ is fixed. 
	For every tuple $(\etavec_1,\ldots,\etavec_n) \in \tilde \D^n$, the algorithm solves $F_{\epsilon}(\etavec_1,\ldots,\etavec_n)$, which can be done in time polynomial in $|I|$ and $\frac{1}{\epsilon}$ thanks to Lemma~\ref{lem:efficientF}.
	Finally, the algorithm returns the correlated strategy $\zvec \in \Delta_{\A}$ with the highest value of $\ell$ among those computed while solving $F_{\epsilon}(\etavec_1,\ldots,\etavec_n)$.
	It is easy to see that the returned solution solves problem \textsc{ApxCPE}$(1,\epsilon)$ by applying Lemma~\ref{lem:deltaOPT}.
	This concludes the proof.
\end{proof}

\subsection{Instantiating the General Algorithm}\label{sec:Dsets}

Next, we show how to build $\delta$-optimal sets $\tilde \D$ that, when they are plugged in the algorithm in Theorem~\ref{th:polyalgo}, allow us to derive our results.
In particular, we consider the set:
\[
	\D_\tau := \left\{ \etavec \in\D \, \Big\vert \,  \eta_{j} = k \tau, k \in \left\{0,\dots,\lfloor\nicefrac{1}{\tau \rho} \rfloor \right\}  \, \forall j \in [m] \right\},
\]
which is a discretization of $\D$ with a regular lattice of step $\tau \in \mathbb{R}_+$ (notice that $\eta_j$ is the $j$-th component of $\etavec$).
%
%
By a simple stars and bars combinatorial argument, we have that $|\D_\tau|=\binom{\lfloor\nicefrac{1}{\tau \rho} \rfloor+m}{m}$.
Thus, since it holds that $|\D_\tau|=O(\left(\nicefrac{1}{\tau \rho}\right)^{m})$, if the number of constraints $m$ is fixed, $|\D_\tau|$ is bounded by a polynomial in $\nicefrac{1}{\tau \rho}$.
%
Moreover, simple combinatorial arguments show that $|\D_\tau| \le (1+m)^{\lfloor\nicefrac{1}{\tau \rho}\rfloor}$.\footnote{See Appendix~\ref{sec:app_optimal} for a formal proof.}
%
%
Thus, it also holds that $|\D_\tau| = O(m^{\nicefrac{1}{\tau \rho} })$.
Notice that the two bounds on $|\D_\tau|$ are non-comparable, and, thus, they give rise to two distinct results, as we show in the following.
%
%
%

By using the first bound on $|\D_\tau|$, we can show that the set $\D_\tau$ is $\delta$-optimal for $\delta = m \tau$.
Formally:
\begin{restatable}{lemma}{gridisemopt}\label{lem:bound_1}
	For any $\tau>0$, the set $\D_\tau$ is $\left(\tau m\right)$-optimal.
\end{restatable}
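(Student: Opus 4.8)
The plan is to fix an arbitrary $\zvec \in \Delta_\A$ and player $i \in \N$, and exhibit a single multiplier vector $\etavec_i \in \D_\tau$ whose associated Lagrangian value overshoots $\max_{\phi_i \in \Phi_i^\mcS(\zvec)} u_i(\phi_i \diamond \zvec)$ by at most $\tau m$. By Lemma~\ref{lem:duality}, there is an optimal dual vector $\etavec_i^\star \in \D$, i.e. $\|\etavec_i^\star\|_1 \le 1/\rho$, achieving
\[
	\max_{\phi_i \in \Phi_i^\mcS(\zvec)} u_i(\phi_i \diamond \zvec) = \max_{\phi_i \in \Phi_i}\left(u_i(\phi_i \diamond \zvec) - (\etavec_i^\star)^\top \cvec_i(\phi_i \diamond \zvec)\right).
\]
First I would round $\etavec_i^\star$ \emph{up} to the lattice: define $\etavec_i$ componentwise by $\eta_{i,j} \defeq \lceil \eta_{i,j}^\star / \tau \rceil \, \tau$, so that $\eta_{i,j}^\star \le \eta_{i,j} \le \eta_{i,j}^\star + \tau$ for every $j \in [m]$. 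I must check $\etavec_i \in \D_\tau$: each component is a nonnegative integer multiple of $\tau$, and I need the multiple to be at most $\lfloor 1/(\tau\rho)\rfloor$; since $\eta_{i,j}^\star \le \|\etavec_i^\star\|_1 \le 1/\rho$ we get $\eta_{i,j}^\star/\tau \le 1/(\tau\rho)$, and $\lceil \eta_{i,j}^\star/\tau \rceil \le \lceil 1/(\tau\rho)\rceil$ — here I should be slightly careful, but rounding \emph{down} $\etavec^\star$ instead (i.e. $\eta_{i,j} \defeq \lfloor \eta_{i,j}^\star/\tau\rfloor\tau$) sidesteps the ceiling issue entirely and gives $0 \le \eta_{i,j}^\star - \eta_{i,j} \le \tau$, which is the bound I actually use below; I will go with the floor version.

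The core estimate is then a Lipschitz-type bound in the multiplier. Let $\phi_i^\dagger$ attain $\max_{\phi_i \in \Phi_i}(u_i(\phi_i \diamond \zvec) - \etavec_i^\top \cvec_i(\phi_i \diamond \zvec))$. Then
\[
	\min_{\etavec_i' \in \D_\tau}\max_{\phi_i \in \Phi_i}\!\left(u_i(\phi_i \diamond \zvec) - (\etavec_i')^\top \cvec_i(\phi_i \diamond \zvec)\right) \le u_i(\phi_i^\dagger \diamond \zvec) - \etavec_i^\top \cvec_i(\phi_i^\dagger \diamond \zvec),
\]
and I would add and subtract $(\etavec_i^\star)^\top \cvec_i(\phi_i^\dagger \diamond \zvec)$ to compare with $\max_{\phi_i}(u_i(\phi_i\diamond\zvec) - (\etavec_i^\star)^\top\cvec_i(\phi_i\diamond\zvec))$, which equals the quantity we want to be close to. The cross term is $(\etavec_i^\star - \etavec_i)^\top \cvec_i(\phi_i^\dagger \diamond \zvec)$, bounded in absolute value by $\|\etavec_i^\star - \etavec_i\|_1 \cdot \|\cvec_i(\phi_i^\dagger \diamond \zvec)\|_\infty \le (m\tau)\cdot 1 = m\tau$, using that each of the $m$ coordinates of $\etavec_i^\star - \etavec_i$ lies in $[0,\tau]$ and that costs are valued in $[-1,1]$ so $\|\cvec_i(\phi_i^\dagger\diamond\zvec)\|_\infty \le 1$ (since $\phi_i^\dagger \diamond \zvec$ is a probability distribution over $\A$). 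Finally, since we rounded down, $-\etavec_i^\top \cvec_i(\phi_i^\dagger\diamond\zvec) \le -(\etavec_i^\star)^\top\cvec_i(\phi_i^\dagger\diamond\zvec) + m\tau$ only when the cross term has the right sign; to be safe I simply bound $|(\etavec_i^\star - \etavec_i)^\top \cvec_i(\phi_i^\dagger\diamond\zvec)| \le m\tau$ and conclude
\[
	u_i(\phi_i^\dagger \diamond \zvec) - \etavec_i^\top \cvec_i(\phi_i^\dagger \diamond \zvec) \le \max_{\phi_i\in\Phi_i}\!\left(u_i(\phi_i\diamond\zvec) - (\etavec_i^\star)^\top\cvec_i(\phi_i\diamond\zvec)\right) + m\tau = \max_{\phi_i \in \Phi_i^\mcS(\zvec)}u_i(\phi_i\diamond\zvec) + m\tau,
\]
which is exactly the $(\tau m)$-optimality inequality. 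The claim follows since $\zvec$ and $i$ were arbitrary.

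The main obstacle, such as it is, is purely bookkeeping: making sure the rounded vector genuinely lands in $\D_\tau$ (the range constraint $k \le \lfloor 1/(\tau\rho)\rfloor$ on each coordinate), which is why rounding down rather than up is the cleaner choice, and correctly tracking that it is the \emph{per-coordinate} deviation that is at most $\tau$ so that the $\ell_1$ error is at most $m\tau$ — the $m$ factor in the bound comes precisely from the $m$ coordinates, not from $\|\etavec^\star\|_1$. Everything else is the standard observation that the Lagrangian value is $1$-Lipschitz in $\etavec_i$ with respect to $\|\cdot\|_1$ because the cost vectors in play have $\ell_\infty$-norm at most $1$, combined with strong duality from Lemma~\ref{lem:duality} to identify the optimum over $\D$ with the true constrained best-deviation value.
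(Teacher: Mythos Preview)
Your proof is correct and follows essentially the same approach as the paper: invoke Lemma~\ref{lem:duality} to obtain an optimal dual vector $\etavec_i^\star \in \D$, round it componentwise onto the lattice $\D_\tau$, and bound the resulting Lagrangian perturbation via H\"older (the paper uses the dual pairing $\|\cvec_i\|_1\cdot\|\etavec_i^\star-\bar\etavec_i\|_\infty \le m\cdot\tau$ while you use $\|\etavec_i^\star-\etavec_i\|_1\cdot\|\cvec_i\|_\infty \le m\tau\cdot 1$, which is equivalent). Your rounding-down choice and the explicit verification that the rounded vector lands in $\D_\tau$ (in particular, that $\|\etavec_i\|_1 \le \|\etavec_i^\star\|_1 \le 1/\rho$) are in fact more careful than the paper, which simply asserts the existence of a suitable $\bar\etavec_i \in \D_\tau$ with $\|\bar\etavec_i - \etavec_i^\star\|_\infty \le \tau$.
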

Thus, whenever the number $m$ of cost constraints is fixed, Lemma~\ref{lem:bound_1}, together with Theorem~\ref{th:polyalgo}, allows us to provide a polynomial-time algorithm. 
%
%
Indeed, it is sufficient to apply Theorem~\ref{th:polyalgo} for the $(\tau m)$-optimal set $\D_\tau$ with $\tau := \epsilon / m$ to obtain the following first main result:
\begin{corollary}\label{cor:res_1}
	There exists an algorithm that solves problem \textsc{ApxCPE}$(1,\epsilon)$ in time polynomial in $|I|$ and $\frac{1}{\epsilon}$ for every $\epsilon > 0$, when the number $m$ of cost constraints is fixed.
	%
\end{corollary}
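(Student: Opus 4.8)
The plan is to simply specialize Theorem~\ref{th:polyalgo} with the $\delta$-optimal set $\D_\tau$ constructed in Section~\ref{sec:Dsets}, choosing the discretization step $\tau$ appropriately. First I would set $\tau := \epsilon/m$. By Lemma~\ref{lem:bound_1}, the set $\D_\tau$ is $(\tau m)$-optimal, i.e.\ it is $\delta$-optimal for $\delta = \tau m = \epsilon$, which satisfies the hypothesis $\delta \le \epsilon$ required by Theorem~\ref{th:polyalgo}. Hence the algorithm of Theorem~\ref{th:polyalgo}, instantiated with $\tilde\D = \D_\tau$, correctly solves \textsc{ApxCPE}$(1,\epsilon)$.

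It remains to verify that the running time is polynomial in $|I|$ and $\frac{1}{\epsilon}$ when $m$ is fixed. Theorem~\ref{th:polyalgo} guarantees a running time polynomial in $|I|$, $|\tilde\D|$, and $\frac{1}{\epsilon}$. So the only thing to check is that $|\D_\tau|$ itself is polynomial in $|I|$ and $\frac{1}{\epsilon}$. Using the first of the two bounds derived in Section~\ref{sec:Dsets}, $|\D_\tau| = O\!\left((1/(\tau\rho))^m\right)$. Substituting $\tau = \epsilon/m$ gives $|\D_\tau| = O\!\left((m/(\epsilon\rho))^m\right)$. Since $m$ is fixed and, by the second part of Assumption~\ref{ass:strictly}, $1/\rho$ is $O(\textnormal{poly}(|I|))$, this quantity is polynomial in $|I|$ and $\frac{1}{\epsilon}$. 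Plugging this bound into the running-time expression of Theorem~\ref{th:polyalgo} yields an overall running time polynomial in $|I|$ and $\frac{1}{\epsilon}$, as claimed.

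There is no real obstacle here: the corollary is a direct corollary, and all the work has been front-loaded into Lemma~\ref{lem:efficientF}, Lemma~\ref{lem:deltaOPT}, Theorem~\ref{th:polyalgo}, and Lemma~\ref{lem:bound_1}. The only points requiring a line of care are (i) confirming the choice $\tau = \epsilon/m$ makes $\D_\tau$ exactly $\epsilon$-optimal (so $\delta \le \epsilon$ holds with equality), and (ii) tracking the $1/\rho$ factor through the cardinality bound and invoking Assumption~\ref{ass:strictly} to keep it polynomial in $|I|$. Both are immediate. One stylistic remark: because the exponent $m$ is a fixed constant, the dependence on $1/\epsilon$ in $|\D_\tau|$ is $(1/\epsilon)^m$, which is indeed polynomial; it is worth stating this explicitly so the reader sees that fixing $m$ is what prevents the blow-up.
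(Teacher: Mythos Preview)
Your proposal is correct and matches the paper's own argument essentially verbatim: set $\tau := \epsilon/m$, invoke Lemma~\ref{lem:bound_1} to get $\epsilon$-optimality of $\D_\tau$, apply Theorem~\ref{th:polyalgo}, and use the bound $|\D_\tau| = O((1/(\tau\rho))^m)$ together with Assumption~\ref{ass:strictly} to conclude polynomiality when $m$ is fixed. If anything, your write-up is slightly more explicit than the paper's one-line justification preceding the corollary.
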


On the other hand, by using the second bound on $|\D_\tau|$, we can show that $\D_\tau$ is $\delta$-optimal for $\delta$ depending logarithmically on the number of players' actions.
Formally:
\begin{restatable}{lemma}{uniform}\label{lem:bound_2}
	For any $\tau>0$, the set $ \D_\tau$ is $\delta$-optimal for $\delta =2\sqrt{{2\tau}\log s / \rho}$, where $s$ is the number of players' actions.
\end{restatable}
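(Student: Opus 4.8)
The plan is to bound the sub-optimality of the best multiplier in $\D_\tau$ against the continuous optimum by exhibiting, for every $\zvec\in\Delta_{\A}$ and every player $i$, a \emph{specific} near-optimal multiplier in $\D$ and then rounding it onto the lattice $\D_\tau$. By Lemma~\ref{lem:duality}, there is some $\etavec_i^\star\in\D$ attaining $\min_{\etavec_i\in\D}\max_{\phi_i\in\Phi_i}(u_i(\phi_i\diamond\zvec)-\etavec_i^\top\cvec_i(\phi_i\diamond\zvec)) = \max_{\phi_i\in\Phi^\mcS_i(\zvec)}u_i(\phi_i\diamond\zvec)$. The first issue is that Lemma~\ref{lem:bound_1}'s crude rounding loses $\|\etavec_i^\star-\tilde\etavec_i\|_1\cdot\max_{\phi_i}\|\cvec_i(\phi_i\diamond\zvec)\|_\infty = O(m\tau)$, which is linear in $m$. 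To get the $\log s$ bound instead, the idea is to \emph{not} use $\etavec_i^\star$ directly but rather to replace the inner $\max$ over $\Phi_i$ (equivalently over $vert(\Phi_i)$) by a smoothed/regularized surrogate — a ``soft-max'' (log-sum-exp) over the finitely many vertices of $\Phi_i$ — whose subgradient in $\etavec_i$ is a convex combination of the vectors $\cvec_i(\phi_i\diamond\zvec)$, hence has $\ell_\infty$-norm at most $1$ uniformly. This is the key structural observation: along coordinates, the Lipschitz constant of the value function is governed by $\|\cvec_i(\phi_i\diamond\zvec)\|_\infty\le 1$ rather than by the dimension $m$, so a multiplicative/entropic argument beats the naive union bound.

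Concretely, I would proceed as follows. First, fix $\zvec$ and $i$ and write $g_i(\etavec):=\max_{\phi_i\in\Phi_i}(u_i(\phi_i\diamond\zvec)-\etavec^\top\cvec_i(\phi_i\diamond\zvec))$, a convex, piecewise-linear function of $\etavec$ whose pieces are indexed by $vert(\Phi_i)$, with each linear piece having gradient $-\cvec_i(\phi_i\diamond\zvec)\in[-1,1]^m$. Introduce the temperature-$\beta$ smoothing $g_i^\beta(\etavec):=\tfrac{1}{\beta}\log\sum_{\phi_i\in vert(\Phi_i)}\exp\!\big(\beta(u_i(\phi_i\diamond\zvec)-\etavec^\top\cvec_i(\phi_i\diamond\zvec))\big)$. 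Standard log-sum-exp estimates give $g_i(\etavec)\le g_i^\beta(\etavec)\le g_i(\etavec)+\tfrac{\log |vert(\Phi_i)|}{\beta}$, and the gradient $\nabla g_i^\beta(\etavec)$ is a convex combination of the vectors $-\cvec_i(\phi_i\diamond\zvec)$, hence $\|\nabla g_i^\beta(\etavec)\|_\infty\le 1$, so $g_i^\beta$ is $\sqrt m$-Lipschitz in the $\ell_2$ norm and, more usefully, $1$-Lipschitz coordinate-wise. Take $\etavec_i^\star\in\arg\min_{\etavec\in\D}g_i(\etavec)$ from Lemma~\ref{lem:duality} and let $\tilde\etavec_i\in\D_\tau$ be its coordinate-wise rounding down to the nearest lattice point (which stays in $\D_\tau$ since rounding down preserves the $\ell_1$-ball constraint). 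Then $\min_{\etavec\in\D_\tau}g_i(\etavec)\le g_i(\tilde\etavec_i)\le g_i^\beta(\tilde\etavec_i)\le g_i^\beta(\etavec_i^\star)+\|\tilde\etavec_i-\etavec_i^\star\|_1\le g_i(\etavec_i^\star)+\tfrac{\log|vert(\Phi_i)|}{\beta}+m\tau$ — but this still has the bad $m\tau$ term, so the naive rounding is not enough and I must route the argument differently.

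The fix for the lattice step is to not round a fixed point but to observe that $g_i^\beta$, restricted to the line where all coordinates of $\etavec$ move together, or more carefully via a \emph{randomized} rounding of $\etavec_i^\star$, can be made to lose only $O(\sqrt{\tau\log s/\rho})$. The cleanest route: since $g_i^\beta$ is convex with coordinate-wise Lipschitz constant $1$, for any $\etavec^\star$ and any lattice point $\tilde\etavec$ within $\ell_\infty$-distance $\tau$, one has $|g_i^\beta(\tilde\etavec)-g_i^\beta(\etavec^\star)|\le\|\tilde\etavec-\etavec^\star\|_1$; instead we want to use that we may pick $\tilde\etavec$ as a lattice point on the \emph{segment} from $\etavec^\star$ toward a direction that only needs one coordinate perturbed, but the honest way that yields $\sqrt{\tau\log s/\rho}$ is to balance $\beta$: take $\beta$ so that $\tfrac{\log s}{\beta}$ matches the rounding loss. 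Specifically, one should argue that there exists a lattice point $\tilde\etavec\in\D_\tau$ with $g_i(\tilde\etavec)\le g_i(\etavec^\star)+\tau\cdot(\text{number of coordinates that change})$, and control that the scaling $1/\rho$ of the domain together with step $\tau$ forces the relevant perturbation budget, then optimize: setting $\delta=\tfrac{\log s}{\beta}+(\text{rounding term involving }\tau,\rho)$ and choosing $\beta=\sqrt{\tfrac{\rho\log s}{2\tau}}$ yields $\delta=2\sqrt{2\tau\log s/\rho}$ after absorbing $\log|vert(\Phi_i)|\le s\log s$ or the relevant $\log s$ bound into constants. I would also need $\log|vert(\Phi_i)|=O(\log s)$, which holds because each deviation is an $s\times s$ stochastic-type matrix and the relevant vertex count is polynomial in $s$ — or, better, replace the soft-max over vertices by a soft-max over the $s^2$ entries via a direct entropic-regularization of the inner LP $\max_{\phi_i\in\Phi_i}$, so that only $\log(s^2)=2\log s$ enters.

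The main obstacle is exactly this last calibration: showing that the coordinate-wise-$1$-Lipschitz smoothing plus lattice rounding, when $\beta$ is tuned against $\tau$ and $\rho$, produces the clean $2\sqrt{2\tau\log s/\rho}$ bound rather than something with an extra $m$ or an extra $s$ factor. Concretely I expect the delicate point to be arguing that the entropic surrogate's gradient is bounded coordinate-wise by $1$ \emph{uniformly over $\zvec$ and over $\etavec\in\D$} (this uses only $\cvec_i\in[-1,1]^m$, so it is fine), and then that the minimizer of the surrogate over $\D$ can be moved to $\D_\tau$ at cost $O(\sqrt{\tau/\rho})$ per the domain scaling — here I would use that the surrogate is also $(1/\tau)$-smooth-ish after tuning, or simply that a single coordinate move of size $\tau$ suffices when one picks the rounding direction adaptively, giving loss $\le\tau$ times a quantity bounded by $1/(\tau\rho)$ only in the worst case but $\sqrt{1/(\tau\rho)}$ after the $\beta$-balancing. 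Once $\beta=\Theta(\sqrt{\rho\log s/\tau})$ is plugged in, both the smoothing error $\log s/\beta$ and the rounding error $\Theta(\tau\beta/\rho)$ (coming from moving within the $\ell_1$-ball of radius $1/\rho$ against a $\beta$-Lipschitz-gradient surrogate) equal $\Theta(\sqrt{\tau\log s/\rho})$, and summing gives the stated $2\sqrt{2\tau\log s/\rho}$.
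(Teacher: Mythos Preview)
Your proposal has a genuine gap at the rounding step, and the smoothing machinery you introduce does not close it.

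You correctly observe that coordinate-wise rounding of $\etavec_i^\star$ onto $\D_\tau$ loses $\|\tilde\etavec_i-\etavec_i^\star\|_1\le m\tau$, which just reproduces Lemma~\ref{lem:bound_1}. Your fix is to smooth $g_i(\etavec)=\max_{\phi_i}(\,\cdot\,)$ into $g_i^\beta$ and then claim a rounding error of order $\tau\beta/\rho$ ``from moving within the $\ell_1$-ball of radius $1/\rho$ against a $\beta$-Lipschitz-gradient surrogate.'' This step is not justified. The fact that $\nabla g_i^\beta$ is $\beta$-Lipschitz would let you upper-bound $g_i^\beta(\tilde\etavec)-g_i^\beta(\etavec^\star)$ by a quadratic \emph{only if} $\nabla g_i^\beta(\etavec^\star)=0$, but the minimum of $g_i^\beta$ over $\D$ is generically attained on the boundary (the unsmoothed $g_i$ is piecewise linear and non-increasing in the direction of any active cost), so the first-order term $\nabla g_i^\beta(\etavec^\star)^\top(\tilde\etavec-\etavec^\star)$ does not vanish and you are back to the $\ell_1$ bound $m\tau$. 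No choice of $\beta$ rescues this: smoothing the inner $\max$ over $\phi_i$ controls the \emph{slope} of $g_i^\beta$ in $\etavec$ (which was already $\le 1$ coordinate-wise for $g_i$ itself), not the $\ell_1$ distance to the lattice. Separately, the smoothing error $\log|vert(\Phi_i)|/\beta$ is not $O(\log s/\beta)$ in general, and your suggested ``entropic regularization of the inner LP over the $s^2$ entries'' is left as a sketch.

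The paper's proof uses a completely different mechanism that sidesteps the $m\tau$ obstacle: it does \emph{randomized} rounding of $\etavec^\star$ rather than deterministic rounding. Write $\etavec^\star\in\D$ as a convex combination of the $m{+}1$ vertices $\{\zerovec\}\cup\{\tfrac{1}{\rho}\evec_j\}_{j\in[m]}$, draw $K=\lfloor 1/(\tau\rho)\rfloor$ i.i.d.\ samples from this mixture, and let $\tilde\etavec$ be their empirical mean; by construction $\tilde\etavec\in\D_\tau$. The key is that $\etavec^\top\cvec_i(\phi_i\diamond\zvec)$ decomposes as $\sum_{a,b\in\A_i}\phi_i[b,a]\,\big(\sum_{\avec_{-i}}\etavec^\top\cvec_i(a,\avec_{-i})\zvec[b,\avec_{-i}]\big)$, a weighted sum of only $s^2$ scalar linear functionals of $\etavec$, each with range at most $\tfrac{1}{\rho}\sum_{\avec_{-i}}\zvec[b,\avec_{-i}]$. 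Hoeffding plus a union bound over these $s^2$ functionals shows that with positive probability $\tilde\etavec$ satisfies $|(\tilde\etavec-\etavec^\star)^\top\cvec_i(\phi_i\diamond\zvec)|\le 2\sqrt{2\tau\log s/\rho}$ simultaneously for all $\phi_i$ (the $\phi_i$-weights collapse because the row-marginals of $\zvec$ reassemble to $1$). The $\log s$ arises from the union bound over $s^2$ test functionals, and the $\sqrt{\tau/\rho}$ from $K$ samples of range $1/\rho$; the dimension $m$ never enters. Your smoothing idea is aimed at the wrong side of the bilinear form: the correct move is a concentration argument on $\etavec$, not a regularity argument on $\phi_i$.
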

Lemma~\ref{lem:bound_2} (together with Theorem~\ref{th:polyalgo}) immediately gives us a quasi-polynomial-time for solving \textsc{ApxCPE}$(1,\epsilon)$ for a given constant $\epsilon > 0$.
Moreover, its running time becomes polynomial when the number of players' actions is fixed.
\begin{corollary}\label{cor:res_2}
	For any constant $\epsilon > 0$, there exists an algorithm that solves \textsc{ApxCPE}$(1,\epsilon)$ in time $O(|I|^{\log s})$.
	Moreover, when the number $s$ of players' actions is fixed, the algorithm runs in time polynomial in $|I|$.
	%
\end{corollary}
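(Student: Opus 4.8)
The plan is to invoke Theorem~\ref{th:polyalgo} with the discretized multiplier set $\D_\tau$, choosing the step $\tau$ so that the $\delta$-optimality guarantee of Lemma~\ref{lem:bound_2} meets the requirement $\delta \le \epsilon$. Concretely, since Lemma~\ref{lem:bound_2} tells us that $\D_\tau$ is $\delta$-optimal for $\delta = 2\sqrt{2\tau \log s / \rho}$, I would set $\tau$ to be the largest value for which $2\sqrt{2\tau \log s/\rho} \le \epsilon$, namely $\tau = \epsilon^2 \rho / (8 \log s)$. With this choice $\delta = \epsilon$, so Theorem~\ref{th:polyalgo} applies and produces an algorithm solving \textsc{ApxCPE}$(1,\epsilon)$ that runs in time polynomial in $|I|$, in $1/\epsilon$, and in $|\D_\tau|$.

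The remaining work is to bound $|\D_\tau|$ for this particular $\tau$. Using the second bound from Section~\ref{sec:Dsets}, namely $|\D_\tau| = O(m^{1/(\tau\rho)})$, and substituting $1/(\tau\rho) = 8\log s / \epsilon^2$, we get $|\D_\tau| = O\!\left(m^{8\log s/\epsilon^2}\right)$. Since $\epsilon$ is a fixed constant, the exponent is $O(\log s)$, so $|\D_\tau| = O(m^{c\log s})$ for a constant $c$ depending only on $\epsilon$. Because $m \le |I|$ (the number of constraints is part of the instance description), this is $O(|I|^{O(\log s)})$, and combining with the polynomial factors from Theorem~\ref{th:polyalgo} gives an overall running time of $O(|I|^{\log s})$ up to constants absorbed into the exponent, which is the claimed quasi-polynomial bound. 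For the second part of the statement, if $s$ is fixed then $\log s$ is a constant, so $|\D_\tau| = O(m^{O(1)})$ is polynomial in $|I|$, and hence the whole algorithm runs in time polynomial in $|I|$.

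I expect the only mild subtlety to be bookkeeping about where the various constants land: one must be careful that $\tau$ as chosen is positive and that $\lfloor 1/(\tau\rho)\rfloor$ is well-defined and polynomially bounded in $|I|$ (this uses Assumption~\ref{ass:strictly}, which guarantees $1/\rho = O(\poly(|I|))$, so $1/(\tau\rho) = O(\poly(|I|)/\epsilon^2)$, which for constant $\epsilon$ is polynomial — but the exponent in $|\D_\tau| = O(m^{1/(\tau\rho)})$ is what matters, and there the dependence on $\rho$ cancels as shown). The main conceptual step is simply recognizing that the logarithmic-in-$s$ dependence of $\delta$ in Lemma~\ref{lem:bound_2} is exactly what converts the a~priori exponential-in-$(1/\tau\rho)$ size of $\D_\tau$ into a quasi-polynomial quantity once $\tau$ is tuned to a constant accuracy; there is no genuine obstacle beyond plugging the pieces together.
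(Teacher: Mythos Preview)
Your approach is exactly what the paper intends: instantiate Theorem~\ref{th:polyalgo} with the discretized set $\D_\tau$, choose $\tau$ so that the $\delta$ furnished by Lemma~\ref{lem:bound_2} equals $\epsilon$, and bound $|\D_\tau|$ via the second combinatorial estimate $|\D_\tau|=O(m^{1/(\tau\rho)})$. The paper gives no proof beyond the sentence preceding the corollary, so on the level of strategy you match it.

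There is, however, a concrete arithmetic slip that breaks your argument as written. From $2\sqrt{2\tau\log s/\rho}=\epsilon$ you correctly obtain $\tau=\epsilon^2\rho/(8\log s)$, but then $\tau\rho=\epsilon^2\rho^2/(8\log s)$ and hence
\[
\frac{1}{\tau\rho}\;=\;\frac{8\log s}{\epsilon^2\,\rho^2},
\]
not $8\log s/\epsilon^2$ as you state. So the dependence on $\rho$ does \emph{not} cancel in the exponent of $|\D_\tau|$: you get $|\D_\tau|=O\bigl(m^{8\log s/(\epsilon^2\rho^2)}\bigr)$. Under Assumption~\ref{ass:strictly} one only knows $1/\rho=O(\poly(|I|))$, so the exponent can itself be polynomial in $|I|$; this yields neither $O(|I|^{\log s})$ in general nor a polynomial bound for fixed $s$. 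Your own parenthetical in fact notes that $1/(\tau\rho)=O(\poly(|I|)/\epsilon^2)$, which already contradicts the line ``there the dependence on $\rho$ cancels as shown.''

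To be fair to you, the paper states Corollary~\ref{cor:res_2} without a detailed proof and is silent on precisely this point; the claimed $O(|I|^{\log s})$ bound appears to implicitly treat $\rho$ as an absolute constant. If you adopt that convention explicitly, your computation is correct and the two arguments coincide. As written, though, the claimed cancellation is false.
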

Notice that it is in general \emph{not} possible to design an algorithm that runs in time polynomial in $\frac{1}{\epsilon}$, since this would contradict the hardness result in Theorem~\ref{thm:hardness}.

	\section{A Special Case: Deviation-dependent Costs}\label{sec:easy}

We complete our computational study of constrained Phi-equilibria by considering a special case in which player $i$'s costs associated to a deviation $\phi_i$ only depend on $\phi_i$ and \emph{not} on the (overall) modified correlated strategy $\phi_i\diamond\zvec$.

We consider instances satisfying the following assumption.
\begin{assumption}\label{ass:cce}
	For every player $i\in \N$ and player $i$'s deviation $\phi_i\in\Phi_i$, there exists a function $\tilde\cvec_i: \Phi_i\to[-1,1]^m$ such that $\tilde\cvec_i(\phi_i) :=\cvec_i(\phi_i \diamond\zvec)$ for every $\zvec \in \Delta_{\A}$.
	%
\end{assumption}
%
%
Notice that, whenever Assumption~\ref{ass:cce} holds, the set $\Phi_i^\mcS(\zvec)$ of safe deviations does \emph{not} depend on $\zvec$.
Thus, in the rest of this section, we write w.l.o.g. $\Phi_i^\mcS$ rather than $\Phi_i^\mcS(\zvec)$.
%

A positive effect of Assumption~\ref{ass:cce} is that it recovers the convexity of the set of constrained Phi-equilibria, rendering them more akin to unconstrained ones.
Formally:
\begin{restatable}{proposition}{convex}\label{thm:convex}
	For instances $I := (\Gamma,\Phi)$ satisfying Assumption~\ref{ass:cce}, the set of constrained $\epsilon$-Phi-equilibria is convex.
	%
\end{restatable}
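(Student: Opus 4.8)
The plan is to show that the set of constrained $\epsilon$-Phi-equilibria is the intersection of two convex sets: the set $\mcS$ of safe correlated strategies, and, for each player $i \in \N$, the set of $\zvec$ satisfying player $i$'s incentive constraint. The set $\mcS$ is convex because each $\mcS_i = \{\zvec \in \Delta_\A \mid \cvec_i(\zvec) \preceq \zerovec\}$ is defined by linear inequalities in $\zvec$ (recall $\cvec_i(\zvec) = \sum_{\avec} \cvec_i(\avec)\zvec[\avec]$ is linear), hence $\mcS = \bigcap_i \mcS_i$ is a polytope. So the only real work is to handle the incentive constraints, and the key leverage is that under Assumption~\ref{ass:cce} the safe-deviation set $\Phi_i^\mcS$ is a \emph{fixed} polytope, independent of $\zvec$.

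Concretely, I would rewrite player $i$'s incentive constraint. A safe correlated strategy $\zvec$ is a constrained $\epsilon$-Phi-equilibrium iff for all $i$ and all $\phi_i \in \Phi_i^\mcS$ we have $u_i(\zvec) \ge u_i(\phi_i \diamond \zvec) - \epsilon$, i.e.
\[
	\max_{\phi_i \in \Phi_i^\mcS} \big( u_i(\phi_i \diamond \zvec) - u_i(\zvec) \big) \le \epsilon.
\]
Now the crucial observations: (i) the map $\zvec \mapsto \phi_i \diamond \zvec$ is linear (its matrix-form expression $(\phi_i\diamond\zvec)[a_i,\avec_{-i}] = \sum_b \phi_i[b,a_i]\zvec[b,\avec_{-i}]$ is linear in $\zvec$), so $\zvec \mapsto u_i(\phi_i \diamond \zvec) - u_i(\zvec)$ is linear (affine) in $\zvec$ for each fixed $\phi_i$; (ii) a pointwise supremum of affine functions is convex; and (iii) because $\Phi_i^\mcS$ does not depend on $\zvec$, the index set of this supremum is the same for every $\zvec$. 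Hence $g_i(\zvec) := \max_{\phi_i \in \Phi_i^\mcS}\big(u_i(\phi_i\diamond\zvec) - u_i(\zvec)\big)$ is a convex function of $\zvec$ on $\Delta_\A$, and the incentive constraint $g_i(\zvec) \le \epsilon$ defines a convex set. (One should also note $\Phi_i^\mcS = \{\phi_i \in \Phi_i \mid \tilde\cvec_i(\phi_i) \preceq \zerovec\}$ is nonempty by Assumption~\ref{ass:strictly}, so the max is over a nonempty set; nonemptiness isn't needed for convexity but keeps the objects well-defined.) Intersecting over $i \in \N$ and with $\mcS$ gives a convex set, proving the claim.

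The step that needs the most care — and the whole reason the unconstrained-style argument goes through here but \emph{not} in general (cf.\ Example~\ref{example}) — is point (iii): in the general definition, $\Phi_i^\mcS(\zvec)$ varies with $\zvec$, so $g_i$ is a supremum over a \emph{moving} feasible set and need not be convex; Assumption~\ref{ass:cce} is exactly what freezes that set. I would make this explicit in the write-up. A minor technical point worth stating cleanly: to justify that a pointwise sup of affine functions is convex even when the index set $\Phi_i^\mcS$ is infinite (it is a polytope, hence a continuum of points), one can either invoke the standard fact directly, or reduce the max to its finitely many vertices $vert(\Phi_i^\mcS)$ since $u_i(\phi_i\diamond\zvec)$ is linear in $\phi_i$ as well — then $g_i$ is a max of finitely many affine functions, manifestly convex and piecewise linear. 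Everything else is bookkeeping.
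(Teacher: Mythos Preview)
Your proof is correct and takes essentially the same approach as the paper: both hinge on the fact that, under Assumption~\ref{ass:cce}, the set $\Phi_i^\mcS$ is fixed, so $\zvec \mapsto \max_{\phi_i \in \Phi_i^\mcS} u_i(\phi_i\diamond\zvec)$ is convex (a supremum of linear functions over a fixed index set), and the incentive constraint is therefore a convex sublevel set. The paper simply unrolls this by taking two equilibria $\zvec',\zvec''$ and verifying directly that any convex combination satisfies the incentive constraint via linearity of $u_i$ and convexity of the max, whereas you phrase it as an intersection of $\mcS$ with sublevel sets of the convex functions $g_i$; the content is identical.
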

Proposition~\ref{thm:convex} suggests that constrained Phi-equilibria are much more computationally appealing under Assumption~\ref{ass:cce} than in general, as we indeed show in the rest of this section.

%
First, in Section~\ref{sec:easy_optimal}, we show that \textsc{ApxCPE}$(1,0)$ admits a polynomial-time algorithm under Assumption~\ref{ass:cce}.
Then, in Section~\ref{sec:easy_learning}, we design a no-regret learning algorithm that efficiently computes \emph{one} constrained $\epsilon$-Phi equilibrium with $\epsilon= O(\nicefrac{1}{\sqrt{T}})$ as the number of rounds $T$ grows.
Finally, in Section~\ref{sec:easy_cce}, we provide a natural example of constrained Phi-equilibria satisfying Assumption~\ref{ass:cce}.
%
%
%

\subsection{A Poly-time Algorithm for Optimal Equilibria}\label{sec:easy_optimal}

We prove that, whenever Assumption~\ref{ass:cce} holds, the problem of computing an (exact) Phi-equilibrium maximizing a given linear function can be solved in polynomial time.
This is done by formulating the problem as an LP with polynomially-many variables and exponentially-many constraints, which can be solved by means of the ellipsoid method, similarly to how we compute $F_\epsilon(\etavec_1,\ldots\etavec_n)$ in Section~\ref{sec:optimal} (see the proof of Lemma~\ref{lem:efficientF}). 
%
%
Formally:
\begin{restatable}{theorem}{optimalsimple}\label{th:optimal_MCCCE}
	Restricted to instances $I := (\Gamma,\Phi)$ which satisfy Assumption~\ref{ass:cce}, \textsc{ApxCPE}$(1,0)$ admits a polynomial-time algorithm.
	%
\end{restatable}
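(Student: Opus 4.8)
The plan is to formulate \textsc{ApxCPE}$(1,0)$ under Assumption~\ref{ass:cce} as a linear program with polynomially many variables—namely the $|\A|$ entries of the correlated strategy $\zvec$—and a (possibly) exponential number of linear constraints, and then to solve it via the ellipsoid method using a polynomial-time separation oracle. The objective is the given linear function $\ell(\zvec)$. The feasibility region is $\zvec \in \Delta_{\A}$ (a polynomial number of explicit linear constraints), the safety constraints $\cvec_i(\zvec) \preceq \zerovec$ for every $i \in \N$ (again polynomially many, $nm$ in total), and the incentive constraints $u_i(\zvec) \ge u_i(\phi_i \diamond \zvec)$ for every $i \in \N$ and every safe deviation $\phi_i \in \Phi_i^\mcS$. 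The key simplification afforded by Assumption~\ref{ass:cce} is that $\Phi_i^\mcS = \{\phi_i \in \Phi_i \mid \tilde\cvec_i(\phi_i)\preceq\zerovec\}$ does \emph{not} depend on $\zvec$; hence $\Phi_i^\mcS$ is a fixed polytope (intersection of the polytope $\Phi_i$ from Assumption~\ref{ass:solvablePolytope} with $m$ extra linear inequalities), and it suffices to impose the incentive inequality only at the vertices $\phi_i \in \mathrm{vert}(\Phi_i^\mcS)$. Since each such inequality $u_i(\zvec) \ge u_i(\phi_i\diamond\zvec)$ is linear in $\zvec$ (the map $\zvec \mapsto \phi_i \diamond \zvec$ is linear), this is a genuine LP in $\zvec$.

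Next I would supply the separation oracle required by the ellipsoid method. Given a candidate $\zvec$, the oracle first checks the polynomially many explicit constraints ($\zvec \in \Delta_\A$ and $\cvec_i(\zvec)\preceq\zerovec$) directly. For the incentive constraints, for each player $i \in \N$ it solves
\[
	\phi_i^\star \in \arg\max_{\phi_i \in \Phi_i^\mcS} u_i(\phi_i \diamond \zvec),
\]
which is the maximization of a linear function of $\phi_i$ over the polytope $\Phi_i^\mcS$; by Assumption~\ref{ass:solvablePolytope} (plus the $m$ extra linear inequalities from Assumption~\ref{ass:cce}) this is a polynomial-size LP, hence solvable in polynomial time. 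If $u_i(\zvec) < u_i(\phi_i^\star \diamond \zvec)$, the oracle returns the violated hyperplane $u_i(\zvec) \ge u_i(\phi_i^\star \diamond \zvec)$; otherwise all incentive constraints of player $i$ hold, because $\phi_i^\star$ is a maximizer. This is exactly the oracle already used in the proof of Lemma~\ref{lem:efficientF}, specialized to $\etavec_i = \zerovec$ and with the feasible set $\Phi_i^\mcS$ in place of $\Phi_i$; the only new ingredient is that the safe-deviation polytope is now fixed and explicitly described, so the inner optimization stays polynomial.

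Having a polynomial-time separation oracle and polynomially many variables, the ellipsoid method solves the LP in polynomial time, and by standard arguments (bounding the bit complexity of vertices of a rational polytope whose defining inequalities have polynomially bounded encoding) the optimum is attained at a vertex $\zvec^\star$ of polynomial bit size; this $\zvec^\star$ is an exact constrained Phi-equilibrium maximizing $\ell$, i.e.\ a solution to \textsc{ApxCPE}$(1,0)$. I should also note why the LP is feasible and the optimum is well defined: Assumption~\ref{ass:strictly} (which under Assumption~\ref{ass:cce} becomes a statement purely about the fixed sets $\Phi_i^\mcS$) guarantees $\Phi_i^\mcS \neq \varnothing$ and, together with Theorem~\ref{thm:existence}, that a constrained Phi-equilibrium exists, so the feasible region is non-empty; $\ell$ is linear and $\Delta_\A$ is compact, so a maximizer exists. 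The main obstacle I anticipate is not conceptual but bookkeeping: one must verify carefully that the incentive constraints $u_i(\zvec)\ge u_i(\phi_i\diamond\zvec)$ are indeed linear in $\zvec$ for fixed $\phi_i$ (immediate from the bilinearity of $(\phi_i,\zvec)\mapsto\phi_i\diamond\zvec$ and linearity of $u_i$), and that reducing the deviation quantifier from all of $\Phi_i^\mcS$ to $\mathrm{vert}(\Phi_i^\mcS)$ is valid—which holds because $u_i(\phi_i\diamond\zvec)$ is linear in $\phi_i$, so its maximum over the polytope $\Phi_i^\mcS$ is attained at a vertex. Everything else is a direct reprise of the ellipsoid argument in Lemma~\ref{lem:efficientF}.
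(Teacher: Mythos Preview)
Your proposal is correct and follows essentially the same approach as the paper: formulate \textsc{ApxCPE}$(1,0)$ as an LP in $\zvec$ with exponentially many incentive constraints indexed by vertices of the fixed polytope $\Phi_i^\mcS$, and solve it via the ellipsoid method with a separation oracle that computes $\phi_i^\star\in\arg\max_{\phi_i\in\Phi_i^\mcS} u_i(\phi_i\diamond\zvec)$ by linear programming. Your write-up is in fact more careful than the paper's (you explicitly check that $\tilde\cvec_i$ is linear so that $\Phi_i^\mcS$ has a polynomial-size description, and you address feasibility and bit-complexity), but the argument is the same.
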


\subsection{An Efficient No-regret Learning Algorithm}\label{sec:easy_learning}

Next, we show how Assumption~\ref{ass:cce} allows us to find a constrained $\epsilon$-Phi-equilibrium by means of a polynomial-time decentralized no-regret learning algorithm.
Our algorithm is based on the Phi-regret minimization framework introduced by~\citet{greenwald2003general}, which needs to be extended in order to be able to work with polytopal sets $\Phi_i^\mcS$ of safe deviations, rather than finite sets of ``pure'' deviations.
%

\begin{algorithm}[!htp]\caption{{Learning a Constrained $\epsilon$-Phi-equilibria}}\label{alg:noregret}
	\begin{algorithmic}[1]
		\REQUIRE Regret minimizers $\mfR_i$ for the sets $\Phi_i^\mcS$, for $i\in \N$
		\STATE Initialize the regret minimizers $\mfR_i$
		\FOR{$t=1, \ldots, T$}
		\FOR{each player $i \in \N$}
		\STATE $\phi_{i,t}\gets \mfR_i.\textsc{Recommend}()$\label{lin:recc}
		\STATE Play according to a distribution $\xvec_{i,t}\in\Delta_{\A_i}$ s.t. $$\xvec_{i,t}[a] = \sum_{b\in\A_i}\phi_{i,t}[b,a]\xvec_{i,t}[b] \quad \forall a\in\A_i$$
		\ENDFOR
		\STATE $\zvec_t\gets\otimes_{i\in \N} \, \xvec_{i,t}$
		\STATE $\mfR_i.\textsc{Observe}( \phi_i \mapsto u_i(\phi_i \diamond \zvec_t) )$\label{lin:obs}
		\ENDFOR
		\STATE \textbf{return} $\bar\zvec_T := \frac{1}{T}\sum_{t=1}^T\zvec_t$
	\end{algorithmic}
\end{algorithm}

Algorithm~\ref{alg:noregret} outlines our no-regret algorithm.
It instantiates a regret minimizer $\mfR_i$ for the polytope $\Phi_i^\mcS$ for each $i\in \N$.
$\mfR_i$ is an object that, at each round $t \in [T]$, recommends a safe deviation $\phi_{i,t} \in \Phi_i^\mcS$ to player $i$ (Line~4 of Algorithm~\ref{alg:noregret}), and, then, observes a function $\phi_i \mapsto u_i(\phi_i \diamond \zvec_t)$ that specifies the utility that would have been obtained by selecting any safe deviation $\phi_i \in \Phi_i^\mcS$ at round $t$  (Line~8 of Algorithm~\ref{alg:noregret}).
$\mfR_i$ guarantees that the regret $R_i^T$ cumulated by player $i$ over $[T]$ grows sublinearly, \ie $R_i^T = o(T)$, where:
\[
	R_i^T := \max_{\phi_i \in \Phi_i} \sum_{t=1}^T  u_i(\phi_i\diamond\zvec_t)-\sum_{t=1}^T u_i(\phi_{i,t}\diamond\zvec_t),
\]
which is how much player $i$ loses by selecting $\phi_{i,t}$ at each $t$ rather than choosing the same best-in-hindsight deviation at all rounds.
Notice that, by taking inspiration from the Phi-regret framework~\citep{greenwald2003general}, given a recommended deviation $\phi_{i,t}$, player $i$ actually plays according to a probability distribution $\xvec_{i,t} \in \Delta_{\A_i}$, which is a stationary distribution of the matrix representing $\phi_{i,t}$.
This is crucial in order to implement the algorithm in a decentralized fashion and to provide convergence guarantees to constrained $\epsilon$-Phi-equilibria (see Theorem~\ref{th:learning_MCCCE}). 
All the distributions $\xvec_{i,t}$ jointly determine a correlated strategy $\zvec_t \in \Delta_{\A}$ at each round $t \in [T]$, defined as $\zvec_t := \otimes_{i\in \N} \, \xvec_{i,t}$, where $\otimes$ denotes the product among distributions; formally, $\zvec_t[\avec] := \prod_{i \in \N} \xvec_{i,t}[a_i]$ for all $\avec \in \A$.

Algorithm~\ref{alg:noregret} provides the following guarantees:
\begin{restatable}{theorem}{learning}\label{th:learning_MCCCE}
	Given an instance $I := (\Gamma,\Phi)$ satisfying Assumption~\ref{ass:cce}, after $T \in \mathbb{N}_{>0}$ rounds, Algorithm~\ref{alg:noregret} returns a correlated strategy $\bar\zvec_T \in \Delta_{\A}$ that is a constrained $\epsilon_T$-Phi-equilibrium with $\epsilon_T = O(\nicefrac{1}{\sqrt{T}})$.
	Moreover, each round of Algorithm~\ref{alg:noregret} runs in polynomial time.
	%
\end{restatable}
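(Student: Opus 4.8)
The plan is to show two things: (i) each round runs in polynomial time, and (ii) the average strategy $\bar\zvec_T$ is a constrained $\epsilon_T$-Phi-equilibrium with $\epsilon_T = O(1/\sqrt T)$. For (i), the key observation is that under Assumption~\ref{ass:cce} the set $\Phi_i^\mcS = \{\phi_i \in \Phi_i \mid \tilde\cvec_i(\phi_i) \preceq \zerovec\}$ is a \emph{fixed} polytope, independent of $\zvec$, described by a polynomial number of linear inequalities (Assumption~\ref{ass:solvablePolytope} plus the cost constraints). Hence one can instantiate a standard linear regret minimizer over a polytope — e.g.\ projected online gradient descent or Follow-the-Regularized-Leader — which yields $R_i^T = O(\sqrt T)$ with per-round cost polynomial in $|I|$; the projection step is a convex QP over $\Phi_i^\mcS$, solvable in polynomial time. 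I also need to argue the stationary distribution $\xvec_{i,t}$ in Line~5 exists and is computable efficiently: since $\phi_{i,t}$ is a (column-)stochastic matrix, it has a stationary distribution, which can be found by solving the linear system $(\phi_{i,t} - \Imat)\xvec = \zerovec$, $\onevec^\top \xvec = 1$ in polynomial time. The feasibility of the observed utility function $\phi_i \mapsto u_i(\phi_i \diamond \zvec_t)$ as input to $\mfR_i$ is immediate since it is linear in $\phi_i$.

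For (ii), I would first establish the \textbf{safety} of $\bar\zvec_T$: for each $i$ and each $t$, the played deviation $\phi_{i,t} \in \Phi_i^\mcS$, so $\tilde\cvec_i(\phi_{i,t}) = \cvec_i(\phi_{i,t}\diamond\zvec_t) \preceq \zerovec$; moreover, since $\xvec_{i,t}$ is stationary for $\phi_{i,t}$, the marginal of $\zvec_t$ on player $i$ equals $\xvec_{i,t}$, and by Assumption~\ref{ass:cce} $\cvec_i(\zvec_t) = \cvec_i(\phi_{i,t}\diamond\zvec_t) \preceq \zerovec$. Wait — I should be careful here: I need that $\phi_{i,t}\diamond\zvec_t$ has the \emph{same} player-$i$ marginal structure; the point of using the stationary distribution is precisely that $\phi_{i,t}\diamond\zvec_t = \zvec_t$ as far as player $i$'s marginal is concerned, so $\cvec_i(\zvec_t) = \tilde\cvec_i(\phi_{i,t}) \preceq \zerovec$. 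Then $\cvec_i(\bar\zvec_T) = \frac1T\sum_t \cvec_i(\zvec_t) \preceq \zerovec$ by linearity, giving $\bar\zvec_T \in \mcS$.

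Next the \textbf{incentive} part. Fix $i$ and any $\phi_i \in \Phi_i^\mcS(\bar\zvec_T)$; I want $u_i(\bar\zvec_T) \ge u_i(\phi_i \diamond \bar\zvec_T) - \epsilon_T$. By the no-regret guarantee of $\mfR_i$, for every fixed $\phi_i \in \Phi_i^\mcS$,
\[
	\sum_{t=1}^T u_i(\phi_i \diamond \zvec_t) - \sum_{t=1}^T u_i(\phi_{i,t}\diamond\zvec_t) \le R_i^T = O(\sqrt T).
\]
Since $\xvec_{i,t}$ is stationary for $\phi_{i,t}$, we have $u_i(\phi_{i,t}\diamond\zvec_t) = u_i(\zvec_t)$ for each $t$, so dividing by $T$ and using linearity of $u_i$ and of the modification map in $\zvec$,
\[
	u_i(\phi_i \diamond \bar\zvec_T) - u_i(\bar\zvec_T) = \frac1T\sum_{t=1}^T\big(u_i(\phi_i\diamond\zvec_t) - u_i(\zvec_t)\big) \le \frac{R_i^T}{T} = O(1/\sqrt T).
\]
Crucially, because under Assumption~\ref{ass:cce} the safe-deviation set is the fixed polytope $\Phi_i^\mcS$ and does \emph{not} depend on the strategy, we have $\Phi_i^\mcS(\bar\zvec_T) = \Phi_i^\mcS$, so the bound above, which holds for all $\phi_i \in \Phi_i^\mcS$, holds in particular for every $\phi_i \in \Phi_i^\mcS(\bar\zvec_T)$. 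Setting $\epsilon_T := \max_i R_i^T / T = O(1/\sqrt T)$ completes the argument.

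\textbf{Main obstacle.} The delicate point — and the place where Assumption~\ref{ass:cce} is doing essential work — is matching the deviation set over which $\mfR_i$ has low regret with the set $\Phi_i^\mcS(\bar\zvec_T)$ appearing in the equilibrium definition: in the general (deviation-\emph{dependent}-cost) case these sets differ and vary with the iterate, so the telescoping no-regret bound would not translate into a bound against all safe deviations at $\bar\zvec_T$. I also need the stationary-distribution trick to be airtight — namely that playing $\xvec_{i,t}$ makes $u_i(\phi_{i,t}\diamond\zvec_t) = u_i(\zvec_t)$ and makes $\zvec_t$'s marginal consistent — and that this does not conflict with the product structure $\zvec_t = \otimes_{i\in\N}\xvec_{i,t}$ when computing $\cvec_i(\phi_{i,t}\diamond\zvec_t)$; these follow from Assumption~\ref{ass:cce} but deserve a careful line. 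The per-round polynomial running time reduces to the efficiency of the projection/best-response step over $\Phi_i^\mcS$, which is standard given Assumption~\ref{ass:solvablePolytope}.
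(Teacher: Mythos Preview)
Your proposal is correct and follows essentially the same route as the paper: instantiate a standard linear regret minimizer (e.g.\ online gradient descent) over the fixed polytope $\Phi_i^\mcS$, use the stationary-distribution trick to get $\phi_{i,t}\diamond\zvec_t=\zvec_t$, and then push the regret bound and the safety check through linearity and Assumption~\ref{ass:cce}. One small tightening: since $\zvec_t=\otimes_{j}\xvec_{j,t}$ and $\xvec_{i,t}$ is stationary for $\phi_{i,t}$, one has $\phi_{i,t}\diamond\zvec_t=\zvec_t$ \emph{exactly} (not just on player $i$'s marginal), which makes $\cvec_i(\zvec_t)=\cvec_i(\phi_{i,t}\diamond\zvec_t)=\tilde\cvec_i(\phi_{i,t})\preceq\zerovec$ immediate and removes the need for the hedged ``as far as player $i$'s marginal'' step.
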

Let us remark that the crucial property which allows us to design Algorithm~\ref{alg:noregret} is that the sets $\Phi_i^\mcS$ of safe deviations do \emph{not} depend on players other than $i$.
Finally, from Theorem~\ref{th:learning_MCCCE}, the following result follows:
\begin{corollary}\label{cor:easy_main}
	In instances $I := (\Gamma,\Phi)$ satisfying Assumption~\ref{ass:cce}, a constrained $\epsilon$-Phi-equilibrium can be computed in time polynomial in the instance size and $\frac{1}{\epsilon}$ by means of a decentralized learning algorithm.
\end{corollary}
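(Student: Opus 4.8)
The plan is to obtain Corollary~\ref{cor:easy_main} as an immediate consequence of Theorem~\ref{th:learning_MCCCE} by running Algorithm~\ref{alg:noregret} for a suitable number of rounds. First I would unpack the guarantee $\epsilon_T = O(\nicefrac{1}{\sqrt{T}})$ from Theorem~\ref{th:learning_MCCCE}: tracking the hidden constant through its proof, there is an explicit $c>0$ — built from the utility range $[0,1]$, the (fixed) number $n$ of players, the number $s$ of actions, and the regret bound of the regret minimizers $\mfR_i$ over the polytopes $\Phi_i^\mcS$ — such that $\epsilon_T \le c/\sqrt{T}$ and, crucially, $c = \textnormal{poly}(|I|)$. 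I would then set $T := \lceil c^2 / \epsilon^2 \rceil$, so that $\epsilon_T \le \epsilon$ and the returned $\bar\zvec_T$ is a constrained $\epsilon$-Phi-equilibrium.

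Next I would bound the total running time. By Theorem~\ref{th:learning_MCCCE}, each of the $T$ rounds runs in $\textnormal{poly}(|I|)$ time: this covers the calls $\mfR_i.\textsc{Recommend}()$ and $\mfR_i.\textsc{Observe}(\cdot)$ (a projection / linear-optimization step over $\Phi_i^\mcS$, which has a polynomial-size description under Assumptions~\ref{ass:solvablePolytope} and~\ref{ass:cce}), the computation of a stationary distribution $\xvec_{i,t}$ of the stochastic matrix $\phi_{i,t}$, and the formation of the product strategy $\zvec_t = \otimes_{i\in\N}\xvec_{i,t}$ (polynomially many entries, since $n$ is fixed). Multiplying, the algorithm runs in $T\cdot\textnormal{poly}(|I|) = \textnormal{poly}(|I|,\nicefrac{1}{\epsilon})$ time. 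Decentralization is automatic: each player $i$ maintains only $\mfR_i$ and, at round $t$, needs just $\zvec_t$ (equivalently the others' $\xvec_{j,t}$) in order to feed the observed utility function $\phi_i \mapsto u_i(\phi_i\diamond\zvec_t)$ to $\mfR_i$; moreover, by Assumption~\ref{ass:cce}, the safe-deviation set $\Phi_i^\mcS$ does not depend on the other players, so $\mfR_i$ is well defined locally.

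The only point requiring genuine care — and hence the main obstacle — is verifying that the constant $c$ in $\epsilon_T \le c/\sqrt{T}$ is truly polynomial in $|I|$ rather than, say, exponential in the dimension of $\Phi_i^\mcS$. This reduces to checking that the regret minimizers $\mfR_i$ employed inside Algorithm~\ref{alg:noregret} achieve regret $R_i^T \le \textnormal{poly}(|I|)\cdot\sqrt{T}$ over the polytopes $\Phi_i^\mcS$, which holds for standard online convex optimization methods (e.g.\ online gradient descent) thanks to the polynomially-bounded diameter and gradient norms implied by $u_i\in[0,1]$ and the matrix encoding of deviations. Once this is in place, the rest is the routine substitution $T = \Theta(\nicefrac{1}{\epsilon^2})$, and the corollary follows.
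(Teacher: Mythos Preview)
Your proposal is correct and follows exactly the route the paper takes: the paper states Corollary~\ref{cor:easy_main} as an immediate consequence of Theorem~\ref{th:learning_MCCCE} with no further proof, and your argument---setting $T = \Theta(1/\epsilon^2)$ and multiplying the per-round polynomial cost by $T$---is precisely the intended derivation. Your additional care in checking that the regret constant is polynomial in $|I|$ (via standard OCO bounds, as also noted in the paper's proof of Theorem~\ref{th:learning_MCCCE}) is appropriate and makes explicit what the paper leaves implicit.
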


\subsection{Marginally-constrained CCE}\label{sec:easy_cce}

We conclude the section by introducing a particular (natural) notion of constrained $\epsilon$-Phi-equilibrium for which Assumption~\ref{ass:cce} is satisfied.
This is a constrained version of the classical CCE in cost-constrained normal-form games where a player's costs \emph{only} depend on the action of that player.
We call it \emph{marginally-constrained} $\epsilon$-CCE.
Formally, such an equilibrium is defined for games in which, for every player $i \in \N$, it holds $\cvec_i(\avec) = \cvec_i(\avec')$ for all $\avec, \avec' \in \A$ such that $a_i = a_i'$, and for the set $\Phi_{\textsc{CCE}}$ of CCE deviations that we have previously introduced in Section~\ref{sec:prelim_relations}.
%
%
%
%
Next, we prove that, with the definition above, Assumption~\ref{ass:cce} is satisfied.
\begin{restatable}{theorem}{marginal}\label{thm:marginal}
	For instances $I := (\Gamma, \Phi_{\textnormal{CCE}})$ such that $\cvec_i(\avec) = \cvec_i(\avec')$ for every player $i \in \N$ and action profiles $\avec, \avec' \in \A: a_i = a_i'$, Assumption~\ref{ass:cce} holds.
	%
\end{restatable}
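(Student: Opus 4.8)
The plan is to verify Assumption~\ref{ass:cce} directly by computing, for a fixed player $i \in \N$ and a fixed CCE deviation $\phi_i \in \Phi_{i,\textsc{CCE}}$, the modified cost vector $\cvec_i(\phi_i \diamond \zvec)$ and exhibiting it as a function of $\phi_i$ alone, independent of $\zvec$. First I would unpack what a CCE deviation is: by the definition of $\Phi_{i,\textsc{CCE}}$ from Section~\ref{sec:prelim_relations}, there is some $\boldsymbol{h} \in \Delta_{\A_i}$ with $\phi_i[b,a] = \boldsymbol{h}[a]$ for all $b, a \in \A_i$. Hence the modification formula gives, for every $a \in \A_i$ and $\avec_{-i} \in \A_{-i}$,
\[
(\phi_i \diamond \zvec)[a, \avec_{-i}] = \sum_{b \in \A_i} \phi_i[b,a]\, \zvec[b, \avec_{-i}] = \boldsymbol{h}[a] \sum_{b \in \A_i} \zvec[b, \avec_{-i}].
\]

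Next I would plug this into the definition of the expected cost $\cvec_i(\phi_i \diamond \zvec) = \sum_{\avec \in \A} \cvec_i(\avec)\,(\phi_i \diamond \zvec)[\avec]$ and use the marginal-cost hypothesis: since $\cvec_i(\avec)$ depends only on the coordinate $a_i$, write $\cvec_i(\avec) = \hat\cvec_i(a_i)$ for a well-defined function $\hat\cvec_i : \A_i \to [-1,1]^m$. Then
\[
\cvec_i(\phi_i \diamond \zvec) = \sum_{a \in \A_i} \sum_{\avec_{-i} \in \A_{-i}} \hat\cvec_i(a)\, \boldsymbol{h}[a] \sum_{b \in \A_i} \zvec[b, \avec_{-i}] = \sum_{a \in \A_i} \hat\cvec_i(a)\, \boldsymbol{h}[a] \left( \sum_{b \in \A_i}\sum_{\avec_{-i}} \zvec[b,\avec_{-i}] \right).
\]
The inner double sum is just $\sum_{\avec \in \A} \zvec[\avec] = 1$ because $\zvec \in \Delta_{\A}$. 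Therefore $\cvec_i(\phi_i \diamond \zvec) = \sum_{a \in \A_i} \boldsymbol{h}[a]\, \hat\cvec_i(a)$, which depends only on $\boldsymbol{h}$ (equivalently, on $\phi_i$) and not on $\zvec$. I would then define $\tilde\cvec_i(\phi_i) := \sum_{a \in \A_i} \boldsymbol{h}[a]\, \hat\cvec_i(a)$ — noting $\boldsymbol{h}$ is recovered from $\phi_i$ as any row of its matrix — and observe that $\tilde\cvec_i$ maps into $[-1,1]^m$ since it is a convex combination of vectors in $[-1,1]^m$. This exhibits exactly the function required by Assumption~\ref{ass:cce}, completing the proof.

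There is essentially no serious obstacle here: the result is a direct computation, and the only point requiring a little care is the bookkeeping of the summation order (swapping the $\avec_{-i}$ sum past the constants and recognizing the resulting total mass of $\zvec$ equals $1$), together with checking that $\boldsymbol{h}$ is unambiguously determined by $\phi_i$ so that $\tilde\cvec_i$ is genuinely a function of $\phi_i$. If one wanted to be fully rigorous about well-definedness, one could note that every row of the matrix of a CCE deviation is identical and equals the distribution $\boldsymbol{h}$, so picking, say, the row indexed by an arbitrary fixed $b_0 \in \A_i$ yields $\boldsymbol{h}[a] = \phi_i[b_0, a]$.
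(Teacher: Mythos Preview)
Your proposal is correct and follows essentially the same approach as the paper's own proof: unpack the CCE deviation as a fixed distribution $\boldsymbol{h}$, compute $(\phi_i\diamond\zvec)[a,\avec_{-i}]=\boldsymbol{h}[a]\sum_b\zvec[b,\avec_{-i}]$, use the marginal-cost hypothesis to factor out $\hat\cvec_i(a)$, and collapse the remaining sum over $\zvec$ to $1$. If anything, your write-up is slightly more careful than the paper's (explicitly noting $\sum_{\avec}\zvec[\avec]=1$, checking the codomain $[-1,1]^m$, and addressing well-definedness of $\boldsymbol{h}$ from $\phi_i$).
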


Thanks to Theorem~\ref{thm:marginal}, we readily obtain the two following corollaries of Theorems~\ref{th:optimal_MCCCE}~and~\ref{th:optimal_MCCCE}.
\begin{corollary}
	The problem of computing a marginally-constrained (exact) CCE that maximizes a linear function $\ell: \Delta_{\A} \to \mathbb{R}$ can be solved in polynomial time.
	%
	%
\end{corollary}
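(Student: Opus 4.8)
The plan is to obtain the statement as an immediate corollary of Theorem~\ref{thm:marginal} and Theorem~\ref{th:optimal_MCCCE}. First I would observe that, by the definition given at the beginning of Section~\ref{sec:easy_cce}, a marginally-constrained (exact) CCE is nothing but a constrained Phi-equilibrium of an instance $I := (\Gamma, \Phi_{\textsc{CCE}})$ in which the cost functions satisfy $\cvec_i(\avec) = \cvec_i(\avec')$ whenever $a_i = a_i'$. Consequently, computing a marginally-constrained (exact) CCE that maximizes a given linear function $\ell : \Delta_{\A} \to \mathbb{R}$ is precisely the problem \textsc{ApxCPE}$(1,0)$ restricted to inputs of this form, so it suffices to exhibit a polynomial-time algorithm for that restriction.

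Next I would invoke Theorem~\ref{thm:marginal}, which guarantees that every such instance satisfies Assumption~\ref{ass:cce}: since a deviation $\phi_i \in \Phi_{i,\textsc{CCE}}$ prescribes a fixed distribution $\boldsymbol{h} \in \Delta_{\A_i}$ regardless of the recommendation, and the cost of playing an action does not depend on the other players' actions, the cost vector $\cvec_i(\phi_i \diamond \zvec)$ reduces to a convex combination of player $i$'s per-action cost vectors weighted by $\boldsymbol{h}$, hence is independent of $\zvec$. With Assumption~\ref{ass:cce} in force, Theorem~\ref{th:optimal_MCCCE} directly provides a polynomial-time algorithm for \textsc{ApxCPE}$(1,0)$ on these instances, which is exactly the claimed algorithm.

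I do not expect any real obstacle here, as the corollary is a straightforward composition of the two theorems; the ``work'' has already been done upstream. The only points worth a sanity check are that $\Phi_{\textsc{CCE}}$ is a polytope described by finitely many linear (in)equalities---the equalities $\phi_i[b,a] = \phi_i[b',a]$ for all $b, b', a$ together with $\sum_{a} \phi_i[b,a] = 1$---so that Assumption~\ref{ass:solvablePolytope} holds and the ellipsoid-based procedure underlying Theorem~\ref{th:optimal_MCCCE} is applicable, and that the objective $\ell$ together with all the constraints defining the equilibrium set remain linear in $\zvec$, so that the resulting program is an LP with polynomially many variables and a polynomial-time separation oracle; both are already accounted for in the proof of Theorem~\ref{th:optimal_MCCCE} and in Proposition~\ref{thm:convex}.
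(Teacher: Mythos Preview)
Your proposal is correct and follows exactly the paper's own approach: the corollary is obtained immediately by combining Theorem~\ref{thm:marginal} (which shows Assumption~\ref{ass:cce} holds for marginally-constrained CCE instances) with Theorem~\ref{th:optimal_MCCCE} (which gives the polynomial-time algorithm for \textsc{ApxCPE}$(1,0)$ under that assumption). The additional sanity checks you mention about $\Phi_{\textsc{CCE}}$ being a polytope are fine but not strictly needed, as the paper already treats this as given.
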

\begin{corollary}
	A marginally-constrained $\epsilon$-CCE can be computed in time polynomial in the instance size and $\frac{1}{\epsilon}$ by means of a decentralized learning algorithm.
	%
	%
\end{corollary}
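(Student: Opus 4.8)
The plan is to exhibit, for each player $i \in \N$, the function $\tilde\cvec_i$ required by Assumption~\ref{ass:cce} and to verify the identity $\tilde\cvec_i(\phi_i) = \cvec_i(\phi_i \diamond \zvec)$ for every $\zvec \in \Delta_{\A}$. Fix $i \in \N$. Since by hypothesis $\cvec_i(\avec)$ depends only on player $i$'s own action $a_i$, I would first record this as $\cvec_i(\avec) = \bar\cvec_i(a_i)$ for a suitable function $\bar\cvec_i : \A_i \to [-1,1]^m$. Then fix an arbitrary $\phi_i \in \Phi_{i,\textsc{CCE}}$ and let $\boldsymbol{h} \in \Delta_{\A_i}$ be the (unique, hence a function of $\phi_i$) distribution with $\phi_i[b,a] = \boldsymbol{h}[a]$ for all $b,a \in \A_i$.

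The one computation that matters is the marginal of the modified correlated strategy on player $i$'s action. Plugging $\phi_i[b,a_i] = \boldsymbol{h}[a_i]$ into the definition of $\diamond$ gives
\[
	(\phi_i \diamond \zvec)[a_i,\avec_{-i}] = \boldsymbol{h}[a_i]\sum_{b \in \A_i}\zvec[b,\avec_{-i}],
\]
so, summing over $\avec_{-i} \in \A_{-i}$ and using $\sum_{\avec \in \A}\zvec[\avec] = 1$, the $a_i$-marginal of $\phi_i \diamond \zvec$ equals $\boldsymbol{h}[a_i]$ --- crucially independent of $\zvec$. I would then expand $\cvec_i(\phi_i \diamond \zvec) = \sum_{\avec \in \A}\cvec_i(\avec)(\phi_i \diamond \zvec)[\avec]$, group the terms by the value of $a_i$, and substitute $\cvec_i(\avec) = \bar\cvec_i(a_i)$ together with the marginal just computed, obtaining $\cvec_i(\phi_i \diamond \zvec) = \sum_{a_i \in \A_i}\boldsymbol{h}[a_i]\,\bar\cvec_i(a_i)$. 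Setting $\tilde\cvec_i(\phi_i) := \sum_{a_i \in \A_i}\boldsymbol{h}[a_i]\,\bar\cvec_i(a_i)$ then does the job: the right-hand side depends on $\phi_i$ alone (through $\boldsymbol{h}$), and it lies in $[-1,1]^m$ since it is a convex combination of the vectors $\bar\cvec_i(a_i) \in [-1,1]^m$.

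There is no genuine difficulty here; the verification is a few lines of bookkeeping. The only points worth being careful about are that $\boldsymbol{h}$ is uniquely determined by $\phi_i$ (so that $\tilde\cvec_i$ is well defined as a function of $\phi_i$), that the finite reorderings of the sums are legitimate, and that it is precisely the normalization $\sum_{\avec}\zvec[\avec] = 1$ that eliminates the dependence on $\zvec$ --- this is where the CCE structure (all rows of $\phi_i$ equal) does the work, since it lets $\zvec$ be ``marginalized away''. With Assumption~\ref{ass:cce} established for these instances, the two corollaries follow immediately by invoking Theorem~\ref{th:optimal_MCCCE} and Theorem~\ref{th:learning_MCCCE} (equivalently, Corollary~\ref{cor:easy_main}).
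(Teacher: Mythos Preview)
Your proposal is correct and follows essentially the same route as the paper: you verify Assumption~\ref{ass:cce} for marginally-constrained CCE instances by exactly the computation in the paper's proof of Theorem~\ref{thm:marginal} (expanding $(\phi_i\diamond\zvec)[a_i,\avec_{-i}]=\boldsymbol{h}[a_i]\sum_b\zvec[b,\avec_{-i}]$ and using that $\cvec_i$ depends only on $a_i$), and then invoke Theorem~\ref{th:learning_MCCCE}/Corollary~\ref{cor:easy_main}. Your write-up is in fact slightly more careful than the paper's in noting that $\boldsymbol{h}$ is uniquely determined by $\phi_i$ and that $\tilde\cvec_i(\phi_i)\in[-1,1]^m$.
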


%
\section{Discussion and Open Problems}

The main positive results that we provide in this paper (Corollaries~\ref{cor:res_1}~and~\ref{cor:res_2}) show that a constrained $\epsilon$-Phi equilibrium maximizing a given linear function can be computed in time polynomial in the instance size and $\frac{1}{\epsilon}$, when either the number of constraints or that of players' actions is fixed.
Clearly, this implies that, under the same assumptions, \emph{a} constrained $\epsilon$-Phi-equilibrium can be found efficiently.
Moreover, in Section~\ref{sec:easy}, we designed an efficient no-regret learning algorithm that finds \emph{a} constrained $\epsilon$-Phi-equilibrium in settings enjoying special properties (Corollary~\ref{cor:easy_main}).
%
%
However, the problem of efficiently computing \emph{a} constrained $\epsilon$-Phi-equilibrium remains open in general. Formally:
%
%
%
\begin{definition}[Open Problem]\label{openProblem}
	Given any instance $I := (\Gamma,\Phi)$, find a constrained $\epsilon$-Phi-equilibrium in time polynomial in the instance size and $\frac{1}{\epsilon}$.
	%
\end{definition}
Solving the problem above is non-trivial.
Proposition~\ref{prop:non-convex} in Section~\ref{sec:hard} proves that the set of constrained $\epsilon$-Phi-equilibria is non-convex, and, thus, solving the problem in Definition~\ref{openProblem} is out of scope for most of the known equilibrium computation techniques.
On the other hand, it is unlikely that such a problem is \NPHard.
Indeed, a constrained $\epsilon$-Phi-equilibrium always exists and, given any $\zvec \in \Delta_{\A}$, it is possible to verify whether $\zvec$ is an equilibrium or not in polynomial time.
Formally, such a problem is said to belong to the $\mathsf{TFNP}$ complexity class, and, thus, standard arguments show that, if the problem is \NPHard, then $\mathsf{NP} = \mathsf{coNP}$~\citep{megiddo1991total}.
%
%
Thus, one should try to reduce the problem in Definition~\ref{openProblem} to problems in $\mathsf{TFNP}$, such as that of computing a Nash equilibrium.
However, while the problem in Definition~\ref{openProblem} shares some properties with that of computing a Nash equilibrium, such as the non-convexity of the set of the equilibria, the former is fundamentally different from the latter, since it exhibits correlation among the players.
Thus, a reduction from such a problem to that of computing Nash equilibria would require a gadget to break the correlation among the players, and doing that is highly non-trivial as cost constraints are expressed by linear functions.

	\section*{Acknowledgements}
	We thank Denizalp Goktas that identified an inaccuracy in our previous definition of the set of safe deviations.
	
	\clearpage
	\bibliographystyle{ACM-Reference-Format}
	\bibliography{biblio}
	
	
	\clearpage
	\onecolumn
	\appendix
	
\section{On the Weaknesses of the Guarantees of the Algorithm of~\citet{chenfinding}}\label{app:weakness}
The Algorithm of~\citet{chenfinding} finds a distribution $\mu$ over correlated strategies  $\Delta_{\A}$ such that:
\begin{equation}\label{eq:expectation_IC}
\mathbb{E}_{\zvec\sim\mu}\left[\max\limits_{\phi_i\in\Phi_i^\mcS(\zvec)}u_i(\phi_i\diamond\zvec)-u_i(\zvec)\right]\le 0.
\end{equation}
However, here we claim that this solution concept inherits some weaknesses from the non-convexity of the equilibria set that we proved in Theorem~\ref{thm:convex}.
Indeed, consider the same instance of Theorem~\ref{thm:convex} and consider the uniform distribution $\mu$ over $\{\zvec_1, \zvec_2\}$. In Theorem~\ref{thm:convex} we proved that $\max_{\phi_i\in\Phi_i^\mcS(\zvec^1)}u_i(\phi_i\diamond\zvec^1)-u_i(\zvec^1)\le 0$ for all $i\in\{1,2\}$ and $\max_{\phi_i\in\Phi_i^\mcS(\zvec^2)}u_i(\phi_i\diamond\zvec^2)-u_i(\zvec^2)\le 0$ for all $i\in\{1,2\}$ and thus Equation~\eqref{eq:expectation_IC} holds over the distribution $\mu$.

However we show that the expected correlated strategy $\zvec^3$ derived from distribution $\mu$, \ie $\zvec^3=\mathbb{E}_{\zvec\sim\mu}[\zvec]=\frac{1}{2}\zvec^1+\frac{1}{2}\zvec^2$, it is not a feasible equilibrium, or an approximate one.

Indeed, in Theorem~\ref{thm:convex}, we proved that $\max_{\phi_2\in\Phi_2^\mcS(\zvec^3)}u_2(\phi_2\diamond\zvec^3)-u_2(\zvec^3)\ge\frac{1}{3}$, showing that the average correlated strategies returned by their Algorithm is not an equilibrium nor close to it.

This comes from the peculiar fact about Constrained Phi-equilibria that exhibit non-convex set of solutions, which is in striking contrast with the unconstrained case.
Indeed the guarantees of Equilibria~\eqref{eq:expectation_IC} would imply that $\mathbb{E}_{\zvec\sim\mu}[\zvec]$ is a equilibrium in the unconstrained case in which the set of equilibria is convex.

\section{Proofs Omitted from Section~\ref{sec:prelim} }

\existence*
\begin{proof}
	With assumption~\ref{ass:strictly} \citet[Theorem~2.1]{altman2000constrained} proves the existence of a constrained Nash equilibrium. In our setting this is equivalent to a product distribution $\zvec=\otimes_{i\in[N]}\xvec_i$ so that it is a Constrained Phi-equilibrium for any set of deviations $\Phi_i$.\footnote{As common in the normal form game literature, for any distribution $\xvec\in\Delta(X)$ and $\yvec\in\Delta(Y)$, $\xvec\otimes\yvec\in\Delta(X\times Y)$ is the product distribution defined as $(\xvec\otimes\yvec)[a,b]=\xvec[a]\yvec[b]$ for $a\in X$ and $b\in Y$.}
	This is easily seen by observing that a Constrained Nash Equilibria is defined as:
	\[
	\sum\limits_{\avec\in\A}u_i\left(\prod_{j\in[N]}\xvec_j(a_j)\right)\ge 	\sum\limits_{\avec\in\A}u_i\left(\tilde\xvec_i(a_j)\prod_{j\in[N]\setminus \{i\}}\xvec_i(a_i)\right)
	\]
	for all $\tilde\xvec_i\in\Delta(\A^i)$ s.t. $\xvec_i\otimes\xvec_{-i}\in\mcS$.
	
	On the other hand it easily seen that for all $\phi_i\in\Phi_i(\zvec)$ there exists some $\tilde\xvec_i\in\Delta(\A^i)$ such that 
	\[
	\phi_i\diamond\left( \otimes_{j\in[N]}\xvec_{j}\right) = \tilde\xvec_i\otimes\xvec_{-i}
	\]
	and $\tilde\xvec_i\otimes\xvec_{-i}\in\mcS$.
	
	This is proved by the following calculations:
	\begin{align}
		\phi_i\diamond\left( \otimes_{j\in[N]}\xvec_{j}\right)[a_i,\avec_{-i}]&:=\sum\limits_{b\in\A^i}\phi_i[b,a_i]\xvec_i(b)\xvec_{-i}(\avec_{-i})\\
		&=\tilde\xvec_i(a_i)\otimes\xvec_{-i}(\avec_{-i}),
	\end{align}
	where $\tilde\xvec_i(a_i):=\sum_{b\in\A^i}\phi_i[b,a_i]\xvec_i(b)$ and $\tilde\xvec_i\in\Delta(A^i)$ since, by definition, $\sum_{a_i\in\A^i}\phi_i[b,a_i]=1$ for all $b\in\A^i$.
	
	This proves that a Constrained Nash Equilibrium is a Phi-Constrained Equilibrium for all $\Phi$. 
\end{proof}


\section{Proofs Omitted from Section~\ref{sec:hard}}

\theoremHardness*

\begin{proof}
We reduce from \textsc{GAP-INDEPENDENT-SET}, which is a promise problem that formally reads as follows: given an $\delta > 0$ and a graph $G= (V,E)$, with set of nodes $V$ and set of edges $E$, determine whether $G$ admits an independent set of size at least $|V|^{1-\delta}$ or all the independent sets of $G$ have size smaller than $|V|^{\delta}$. 
\textsc{GAP-INDEPENDENT-SET} is \NPHard\ for every $\delta>0$~\citep{hastad1999clique,Zuckerman2007linear}.

 Let $\ell=|V|$ and $\alpha>0$ be the desired approximation factor. Given an instance of \textsc{GAP-INDEPENDENT-SET}, we build an instance such that if there exists an independent set of size $\ell^{1-\delta}$, then there exists a Constrained Phi-equilibrium with social welfare 1.
Otherwise, if all the independent sets have size at most $\ell^\delta$, all the Constrained $\epsilon$-Phi-equilibria have social welfare at most $\alpha/2$.
We can use any $\delta>0$, since we simply need $\ell^\delta<\ell^{1-\delta}$.
Moreover, we take $\epsilon=\frac{\alpha^2}{128 \ell^2}$. As we will see,  $\ell$ will be smaller than the number of action of the players, satisfying the condition in the statement.
	\paragraph{Construction.} 
	The first player has a set of actions $\A_1$ that includes actions $a_0$, $a_1$, $a_2$ and an action $a_v$ for each $v \in V$. Moreover, the first player has an action $a_F$.\footnote{This action is needed only to satisfy the strictly feasibility assumption.} 
	The second player has a set of actions $\A_2$ that includes actions $a_v$ and $\bar a_{v}$ for each $v \in V$. Moreover, the second player has an action $a_F$. 
	Let $\gamma=\eta=\alpha/8$.
	The utility of the first agent is as follows:
	\begin{itemize}
	\item $u_1(a_0,a)=\gamma+\frac{1}{2}\eta$ for all $a \in \A_2 \setminus \{a_F\}$,
	\item $u_1(a_1,a_v)=\gamma+\eta$ and $u_1(a_2,a_v)=\gamma$ for all $v \in V$.
	 \item $u_1(a_1,\bar a_v)=\gamma$ and $u_1(a_2,\bar a_v)=\gamma+\eta$ for all $v \in V$.
	 \item $u_1(a_v, a_{v})=u_1(a_v,\bar a_{v})=\gamma$ for all $v \in V$
	 \item $u_1(a_{v}, a_{v'})=\gamma$ and  $u_1(a_{v},\bar a_{v'})=\gamma+\frac{\ell-\ell^{1-\delta}}{\ell-\ell^{1-\delta}-1}\eta$ for all $v'\neq v$.
	 \item $u_1(a_F,a)=0$ for each $a \in \A_2$.
	 	 \item $u_1(a,a_F)=0$ for each $a \in \A_1$.
 	 \end{itemize}
 	The utility of the second agent is $u^2(a_0,a)=1$ for each $a \in \A_2 \setminus \{a_F\}$ and $0$ otherwise.
	 
	 There is a cost function $c_v$ for each $v \in V$, which is common to both the agents.
	 For each $v \in V$, the cost function $c_v$ is such that
	 \begin{itemize}
	 	\item $c_v(a_v, a_{v'})=-1$ for each $v'\neq v$, $(v,v') \in E$, \item $c_v(a_v, a_{v'})=0$ for each $v' \neq v, (v,v') \notin E$, 
	 	\item $c_v(a_v, a_{v})=1$ for each $v \in V$.
	 	\item $c_v(a_F,a)=-\frac{1}{4\ell^2}$ for each $a \in \A_2$. 
	 	\item $c_v(a,a_F)=-\frac{1}{4\ell^2}$ for each $a \in \A_1$.
	 	\item For every other action profile the cost is $0$.
	 \end{itemize}
	We dropped the player index from the cost functions $c$ as they are equal to both players.
 	
 	Moreover, we set of deviations $\Phi_i=\Phi_{i,\textsc{all}}$ for both players $i\in\{1,2\}$.
 	
 	Notice that the instance satisfies Assumption~\ref{ass:strictly}. Indeed, the deviation $\phi_i$ such that $\phi_i[a,a_F]=1$ for all $a\in\A_{i}$ for $i\in\{1,2\}$, that deviates deterministically to $a_F$ is always strictly feasible for both player $1$ and player $2$. Moreover, its cost is polynomial in the instance size.

	\paragraph{Completeness.}
	We show that if there exists an independent set of size $\ell^{1-\delta}$, then the social welfare of an optimal Constrained Phi-equilibria is at least $1$.
	Let $V^*$ be an independent set of size $\ell^{1-\delta}$. We build a Constrained Phi-equilibria $\zvec$ with social welfare at least $1$. Consider the correlated strategy such that $\zvec[a_0,a_v]=\frac{1}{2\ell^{1-\delta}}$ for all $v \in V^*$, while $\zvec[a_0,\bar a_v]=\frac{1}{2(\ell-\ell^{1-\delta})}$ for all $v \notin V^*$. All the other action profiles have probability $0$.
	
	It is easy to see that the correlated strategy has social welfare at least $1$ since player $1$ always plays action $a_0$ and $u^2(a_0,a)=1$ for all $a \in \A_2$. Moreover, it is easy to verify that it is safe since $c_v(a_0,a)\le 0$ for each $a \in \A_2$.
	Hence, to show that $z$ is an Constrained Phi-equilibria we only need to prove that it satisfies the incentive constraints.
	The incentive constraints of the second player are satisfied since they obtain the maximum possible utility, \emph{i.e.}, $1$. 
	
	Consider now a possible deviation of the first player $\phi_1\in\Phi_1$.
	As a first step, we compute the expected utility of a strategy $\phi_1$.
	Let us define the following quantities:
	\begin{itemize}
		\item $T^1=\sum_{v \in V^*}  \phi_1[a_0,a_v] \left[ \left(\zvec[a_0,a_v] +\zvec[a_0,\bar a_v] +\sum_{v' \neq v} \zvec[a_0,a_{v'}] \right) \gamma + \left(\gamma +\frac{\ell-\ell^{1-\delta}}{\ell-\ell^{1-\delta}-1}\eta \right) \sum_{v' \neq v}  \zvec[a_0,\bar a_v]          \right] $ 
		\item $T^2=\sum_{v \notin V^*} \phi_1[a_0,a_v] \left[ \left(\zvec[a_0,a_v] +\zvec[a_0,\bar a_v] +\sum_{v' \neq v} \zvec[a_0,a_{v'}] \right) \gamma+ \left(\gamma +\frac{\ell-\ell^{1-\delta}}{\ell-\ell^{1-\delta}-1}\eta\right)  \sum_{v' \neq v}  \zvec[a_0,\bar a_v]          \right] $
		\item  $T^3 = \left(\gamma + \frac{\eta}{2}\right)  \phi_1[a_0,a_0] +  \frac{\gamma + \eta}{2} (\phi_1[a_0,a_1]+\phi_1[a_0,a_2])+ \frac{\gamma}{2} (\phi_1[a_0,a_1]+\phi_1[a_0,a_2]) $
	\end{itemize}

	We bound each component individually.
	\begin{align*}
	T^1&=\sum_{v \in V^*}  \phi_1[a_0,a_v] \left[ \left(\zvec[a_0,a_v] +\zvec[a_0,\bar a_v] +\sum_{v' \neq v} \zvec[a_0,a_{v'}] \right)\gamma+  \left(\gamma + \eta\frac{\ell-\ell^{1-\delta}}{\ell-\ell^{1-\delta}-1}\right) \sum_{v' \neq v}  \zvec[a_0,\bar a_v]          \right] \\
	&=\sum_{v \in V^*}  \phi_1[a_0,a_v] \left[\frac{1}{2}\gamma + \frac{1}{2}\left(\gamma + \eta\frac{\ell-\ell^{1-\delta}}{\ell-\ell^{1-\delta}-1}\right) \right] \\
	&=\sum_{v \in V^*}  \phi_1[a_0,a_v] \left(\gamma+\frac{\eta}{2}\frac{\ell-\ell^{1-\delta}}{\ell-\ell^{1-\delta}-1}\right) \\
	&\le \sum_{v \in V^*}  \phi_1[a_0,a_v] (\gamma+\eta) ,
	\end{align*}
	where in the last inequality we use  $\frac{\ell-\ell^{1-\delta}}{\ell-\ell^{1-\delta}-1}\le 2$ for $\ell$ large enough.
	while
	\begin{align*}
	T^2&=\sum_{v \notin V^*}  \phi_1[a_0,a_v] \left[ \left(\zvec[a_0,a_v] +\zvec[a_0,\bar a_v] +\sum_{v' \neq v} \zvec[a_0,a_{v'}] \right) \gamma + \left(\gamma +\eta\frac{\ell-\ell^{1-\delta}}{\ell-\ell^{1-\delta}-1}  \right) \sum_{v' \neq v}  \zvec[a_0,\bar a_v]          \right] \\
	&= \sum_{v \notin V^*}  \phi_1[a_0,a_v] \left[\left(\frac{1}{2}+\frac{1}{2(\ell-\ell^{1-\delta})}\right)\gamma + \left(\frac{1}{2}-\frac{1}{2(\ell-\ell^{1-\delta})}\right) \left(\gamma +\eta\frac{\ell-\ell^{1-\delta}}{\ell-\ell^{1-\delta}-1} \right)\right]\\
	&=\sum_{v \notin V^*}  \phi_1[a_0,a_v]\left[\gamma +\frac{\eta}{2}\left(\frac{\ell-\ell^{1-\delta}}{\ell-\ell^{1-\delta}-1}-\frac{1}{\ell-\ell^{1-\delta}-1} \right)\right]\\
	&=\sum_{v \notin V^*}  \phi_1[a_0,a_v]\left(\gamma + \frac{\eta}{2}\right).
	\end{align*}
	Finally, 
	\begin{align*}
	T^3& = [a_0,a_0] \left(\gamma + \frac{\eta}{2}\right) + \frac{\gamma + \eta}{2}([a_0,a_1]+[a_0,a_2]) +\frac{\gamma}{2}([a_0,a_1]+[a_0,a_2])  \\
	&= \left( \gamma + \frac{\eta}{2}\right)\left([a_0,a_0] + \phi_1[a_0,a_1]+\phi_1[a_0,a_2]\right)
	\end{align*}
	
	Finally, the utility of a deviation $\phi_1$ is
	\begin{align*}
	&\sum_{a^1 \in \A_1,a^2\in \A_2} \sum_{a \in A_1} \phi_1[a^1,a] \zvec[a^1,a^2] u_1(a,a^2)\\
	&=  \sum_{a \in \A_1,a^2 \in \A_2} \phi_1[a_0,a] \zvec[a_0,a^2]u_1(a,a^2)\\
	&=T^1+T^2+T^3  \\
	&\le (\gamma+\eta) \sum_{v \in V^*}  \phi_1[a_0,a_v]  + \left(\gamma + \frac{\eta}{2} \right) \sum_{v \notin V^*}  \phi_1[a_0,a_v] +\left( \gamma + \frac{\eta}{2}\right)(\phi[a_0,a_0] + \phi_1[a_0,a_1]+\phi_1[a_0,a_2]) \\
	&=\frac{\eta}{2}\sum_{v \in V^*}  \phi_1[a_0,a_v]  +  \left(\gamma +\frac{\eta}{2}\right)(1- \phi_1[a_0,a_F] )
	\end{align*}
	
	Now, we show that no deviation $\phi_1\in\Phi_1$ is both safe and increases player $1$ utility. In particular, we show that if a strategy $\phi_1$ increases the utility than it is not safe.
	Indeed, if $\phi_1$ increases the utility, then
	\begin{align*}
	&\sum_{\substack{a^1 \in \A_1,\\a^2\in \A_2}} \sum_{a \in A_1} \phi_1[a^1,a] \zvec[a^1,a^2] u_1(a,a^2)> \gamma+ \frac{\eta}{2}
	\end{align*}
	This implies that 
		\begin{align*}
	&\frac{\eta}{2}\sum_{v \in V^*}  \phi_1[a_0,a_v]  +   \left(\gamma +\frac{\eta}{2}\right)(1- \phi_1[a_0,a_F] )  > \gamma+ \frac{\eta}{2}
	\end{align*}
	and 
	\begin{align}\label{eq:vLarge}
	&\sum_{v \in V^*}  \phi_1[a_0,a_v]    > \frac{1}{2}  \phi_1[a_0,a_F] 
	\end{align}

	Next, we show that any $\phi_1$ that increases the utility (and hence that satisfies Eq~\eqref{eq:vLarge}) is not a feasible deviation.
	First, notice that equation~\eqref{eq:vLarge} implies that there is a $\bar v\in V^*$ such that 
	\begin{align} \label{eq:vLarge2}
	&\phi_1[a_0,a_{\bar v}] > \frac{1}{2\ell}   \phi_1[a_0,a_F].
	\end{align}
	Then, we show that the deviation $\phi_1$ violates the constraint $c_{\bar v}$.
	 In particular,
	\begin{align*}
	\sum_{a^1 \in \A_1,a^2\in \A_2} \sum_{a \in \A_1} \phi_1[a^1,a] \zvec[a^1,a^2] c_v(a,a^2)&=  \phi_1[a_0,a_{\bar v}] \zvec[a_0,a_{\bar v}] 1 - \frac{1}{4\ell^2}\phi_1[a_0,a_F] - \sum_{\substack{v \in V^*: (v,\bar v) \in E }} \phi_1[a_0,a_{\bar v}] \zvec[a_0,a_v] 1  \\
	&=  \phi_1[a_0,a_{\bar v}] \zvec[a_0,a_{\bar v}]  - \frac{1}{4\ell^2} \phi_1[a_0,a_F]  \\
	 &=  \frac{1}{2\ell^{1-\frac{1}{\ell}}} \phi_1[a_0,a_{\bar v}]  -\frac{1}{4\ell^2}  \phi_1[a_0,a_F] \\
	&> \phi_1[a_0,a_{\bar v}] \left(\frac{1}{2\ell^{1-\frac{1}{\ell}}} -  \frac{1}{2\ell} \right)   \ge 0,
	\end{align*}
	where the second inequality holds since $V^*$ is an independent set, and the second-to-last inequality by Equation~\eqref{eq:vLarge2}.
	Hence, there is no deviation $\phi_1$ that increases players $1$ utility and that is safe. This concludes the first part of the proof.

	\textbf{Soundness.} We show that if there exists a Constrained $w$-Phi-equilibria with social welfare $\alpha/2$, then there exists an independent set of size strictly larger than  $\ell^{\delta}$, reaching a contradiction.  
	Suppose by contradiction that there exists a Constrained $\epsilon$-Phi-equilibrium $\zvec$ with social welfare strictly greater than $\alpha/2$. 
	Thus, 
	\[
	\sum_{a' \in \A_2 \setminus \{a_F\}} \zvec[a_0,a'] \cdot1 +\sum_{a \in \A_1,a' \in \A_2} (\gamma+\eta) \ge  \sum_{a \in \A_1,a' \in \A_2} \zvec[a,a'] (u_1(a,a')+u^2(a,a')) \ge \alpha/2,
	\] 
	where the first inequality comes from $u_2(a_0,a')=1$ for each $a'\in \A_2 \setminus \{a_F\}$ and $0$ otherwise, and $u_1(a,a')\le \gamma+\eta$ for each $a\in \A_1$ and $a' \in \A_2$.
	This implies 
	\begin{align}\label{eq:highProb}
	\sum_{a' \in \A_2} \zvec[a_0,a'] \ge \alpha/4.
	\end{align}

  	Then, we show that $\zvec$ assigns similar probabilities on the set of action profiles $\{a_0,a_v\}_{v \in V}$ and $\{a_0,\bar a_v\}_{v \in V}$
	Given an $a \in \A_1$, let $\phi_{a}\in\Phi_1$ be a deviation of the first player such that $\phi_{a}[a_0,a]=1$ and $\phi_{a}[a',a']=1$ for each $a' \neq a_0$.
	Since $\zvec$ is an Constrained $\epsilon$-Phi-equilibrium there is no feasible deviation $\phi_{a}$ that increases the utility of player $1$ by more than $\epsilon$.
	This implies that 
	\begin{align}\label{eq:balance2}
		\left|\sum_{v \in V} \zvec[a_0,a_v]-\sum_{v \in V} \zvec[a_0,\bar a_v]\right|\le \frac{2\epsilon}{\eta}.
	\end{align}
	Indeed, if 
	\begin{align}\label{eq:balance}
		&\sum_{v \in V} \zvec[a_0,a_v]> \sum_{v \in V} \zvec[a_0,\bar a_v]+ \frac{2\epsilon}{\eta},
	\end{align}
	 then the deviation  $\phi_{a_1}$ has utility at least
	\begin{align*}
		\sum_{v \in V} \zvec[a_0,a_v] &\phi_{a_1}[a_0,a_1] (\gamma+\eta) + \zvec[a_0,\bar a_v] \phi_{a_1}[a_0,a_1] \gamma +   \sum_{a \in \A_1 \setminus \{a_0\}} \sum_{ a' \in \A_2} \zvec[a, a'] \phi_{a_1}[a,a] u^i(a, a')\\
		&=  \eta \sum_{v \in V} \zvec[a_0,a_v]  +  \gamma \sum_{v \in V} \left(\zvec[a_0, a_v] + \zvec[a_0,\bar a_v] \right) +   \sum_{a \in \A_1 \setminus \{a_0\}} \sum_{ a' \in \A_2} \zvec[a, a'] \phi_{a_1}[a,a] u^i(a, a')\\
		&> \frac{\eta}{2} \left(\frac{2\epsilon}{\eta}+\sum_{v \in V} (\zvec[a_0,a_v] + \zvec[a_0,\bar a_v])\right)+  \gamma \sum_{v \in V} \left(\zvec[a_0, a_v] + \zvec[a_0,\bar a_v] \right)  \\
		&\hspace{2cm}+ \sum_{a \in \A_1 \setminus \{a_0\}} \sum_{ a' \in \A_2} \zvec[a, a'] \phi_{a_1}[a,a] u^i(a, a')\\
		& \ge \epsilon+ \left(\frac{\eta}{2}+\gamma\right)\sum_{v \in V} (\zvec[a_0,a_v] + \zvec[a_0,\bar a_v])+   \sum_{a \in \A_1 \setminus \{a_0\}} \sum_{ a' \in \A_2} \zvec[a, a'] \phi_{a_1}[a,a] u^i(a, a')\\
		& \ge u_1(z)+ \epsilon,
	\end{align*}
	where the first inequality comes from adding $\sum_{v \in V} \zvec[a_0,a_v]$ to both sides of Equation~\eqref{eq:balance}.
	Moreover, $\phi_{a_1}$ is feasible since for each constraint $c_{\bar v}$, $\bar  v \in V$, it has cost
	\begin{align*}
	&\sum_{v \in V} \left(\zvec[a_0,a_v] \phi_{a_1}[a_0,a_1] c_{\bar v}(a_1,a_v) + \zvec[a_0,\bar a_v] \phi_{a_1}[a_0,a_1] c_{\bar v}(a_1,a_v)\right) \\  
	&\hspace{2cm}+\sum_{a \in \A_1 \setminus \{a_0\}} \sum_{ a' \in \A_2} \zvec[a, a'] \phi_{a_1}[a,a] c_{\bar v}(a, a')\\
	&= \sum_{v \in V} \left(\zvec[a_0,a_v] \phi_{a_1}[a_0,a_1] c_{\bar v}(a_0,a_v) + \zvec[a_0,\bar a_v] \phi_{a_1}[a_0,a_1] c_{\bar v}(a_0,\bar a_v)\right) \\  
	&\hspace{2cm}+\sum_{a \in \A_1 \setminus \{a_0\}} \sum_{ a' \in \A_2}\zvec[a, a'] \phi_{a_1}[a,a] c_{\bar v}(a, a')\\
	&= c_{\bar v}(z)\le 0.
	 \end{align*}
	A similar argument shows that if $\sum_{v \in V} \zvec[a_0,a_v] < \sum_{v \in V} \zvec[a_0,\bar a_v]-\frac{2\epsilon}{\eta}$ then the deviation  $\phi_{a_2}$ is safe and increases the utility. 
	As a consequence of Equation~\eqref{eq:balance2}, it holds
	\begin{align}\label{eq:balance3}
		2\sum_{v \in V} \zvec[a_0,\bar a_v] \ge\sum_{v \in V} (\zvec[a_0, a_v]+\zvec[a_0,\bar a_v] ) -\frac{\epsilon}{\eta}= \sum_{a \in \A_2\setminus\{a_F\}} \zvec[a_0, a] -\frac{2\epsilon}{\eta},
	\end{align}
	where the first inequality comes from adding $\sum_{v \in V} \zvec[a_0,\bar a_v]$ to both sides of $\sum_{v \in V} \zvec[a_0,\bar a_v]|\ge \sum_{v \in V} \zvec[a_0,a_v]- \frac{2\epsilon}{\eta}$
	
	The next step is to show that it is if there is no safe deviation $\phi_{a_v}$, $v \in V$, that increases the utility, then there exists an independent set of size larger than $\ell^{\delta}$. 
	 Since $z$ is an Constrained $\epsilon$-Phi-equilibrium, for each $a_v$, $v\in V$ one of the following two conditions holds: i) $\phi_{a_v} \notin \Phi^\mcS_1(z)$ or ii) $u_1(\phi_{a_v}\diamond z)\le u_1(z) + \epsilon$.
	Let $V^1\subseteq V$ be the set of vertexes $v$ such that $\phi_{a_v}$ is not safe, \emph{i.e.}, $\phi_{a_v} \notin \Phi^\mcS_1(z)$,  and $V^2=V\setminus V^1$ be the set of $v$ such that $\phi_{a_v}$ does not increase the utility by more  than $\epsilon$ and are not in $V^1$, \emph{i.e.}, $u_1(\phi_{a_v}\diamond z)\le u_1(z)$ and $\phi_{a_v} \in \Phi_1^\mcS(z)$.
	We show that $|V^2|\le \ell-\ell^{1-\delta}$.
	Indeed, for each $v \in V^2$, deviation $\phi_{a_v}$ does not increase the utility and hence it holds:
	\begin{align*}
		\gamma \sum_{a \in \A_2 \setminus \{a_F\}} \zvec_[a_0,a]&  + \eta\frac{\ell-\ell^{1-\delta}}{\ell-\ell^{1-\delta}-1}   \sum_{v' \neq v}  \zvec[a_0,\bar a_{v'}]  +  \sum_{a \in A_1 \setminus \{a_0\}} \sum_{ a' \in \A_2} \zvec[a, a'] \phi_{a_1}[a,a] u^i(a, a')    \\
		&=\left(\sum_{v' \in V } \zvec[a_0,a] + \zvec[a_0,\bar a_v]\right) \phi[a_0,a_v] \gamma +\sum_{v' \neq v} \phi[a_0,a_v] \zvec[a_0,\bar a_{v'}]  \left(\gamma+\eta\frac{\ell-\ell^{1-\delta}}{\ell-\ell^{1-\delta}-1}  \right) \\ 
		&\hspace{2cm}+\sum_{a \in \A_1 \setminus \{a_0\}} \sum_{ a' \in \A_2} \zvec[a, a'] \phi_{a_1}[a,a] u^i(a, a') \\ &\le u_1(z) +\epsilon\\
		&  =\left(\gamma+ \frac{\eta}{2}\right) \sum_{a \in \A_2 \setminus \{a_F\}} \zvec[a_0,a] +  \sum_{a \in \A_1 \setminus \{a_0\}} \sum_{ a' \in \A_2} \zvec[a, a']  u^i(a, a')   +\epsilon,\\
	\end{align*}
	where the inequality holds since the lhs is the utility of the deviation $\phi_{a_v}$.
	
	This implies
	\[ 
	\left(\sum_{v' }  \zvec[a_0,\bar a_{v'}]-  \zvec[a_0,\bar a_{v}] \right)\eta\frac{\ell-\ell^{1-\delta}}{\ell-\ell^{1-\delta}-1} \le \frac{\eta}{2}  \sum_{a \in \A_2 \setminus \{a_F\}}  \zvec[a_0, a] +\epsilon \le \eta  \sum_{v \in V}  \zvec[a_0, \bar a_v] + 2\epsilon  ,
	\] 
	where the last inequality holds by Equation~\eqref{eq:balance3}.
	Hence, 
	\[     
	\zvec[a_0,\bar a_v] \frac{\ell-\ell^{1-\delta}}{\ell-\ell^{1-\delta}-1} \ge \left(\frac{\ell-\ell^{1-\delta}}{\ell-\ell^{1-\delta}-1}-1\right)\sum_{v' }  \zvec[a_0,\bar a_{v'}]  -2\epsilon/\eta,    
	\]
	and
	\begin{align}\label{eq:large}
		   \bar \zvec[a_0,a_v]  \ge \frac{1}{\ell-\ell^{1-\delta}} \sum_{v' }  \zvec[a_0,\bar a_{v'}] -2\epsilon/\eta .  
	\end{align}
	
	Suppose that $|V^2|>\ell-\ell^{1-\delta}$, and hence Equation~\eqref{eq:large} is satisfied by at least $|V^2|\ge \ell-\ell^{1-\delta}+1 $ vertexes.
	We need the following inequality. 
	\begin{align} \label{eq:contr}
		\frac{1}{\ell} \sum_{v' }  \zvec[a_0,\bar a_{v'}] &\ge \frac{1}{\ell} \sum_{a \in \A_2\setminus\{a_F\}} \zvec[a_0, a] -\frac{2\epsilon}{\ell\eta}
		\ge \frac{\alpha}{4\ell}-\frac{2\epsilon}{\ell\eta}
		= \frac{\alpha}{4\ell}-\frac{\alpha}{8\ell^3} 
		\ge  \frac{\alpha}{8\ell} 
		=\frac{2\ell}{\eta} \left(\frac{\alpha^2}{16\ell^2 }  \right)
			=\frac{2\ell}{\eta} \epsilon 
	\end{align}
	where the first inequality comes from Equation~\eqref{eq:balance3}, and the second one by Equation~\eqref{eq:highProb}.
	Then, summing over the $|V^2|$ inequalities we get
	\begin{align*}
		\sum_{v \in V^2} \bar \zvec[a_0,a_v]&\ge  (\ell-\ell^{1-\delta}+1 ) \left(\frac{1}{\ell-\ell^{1-\delta}} \sum_{v' }  \zvec[a_0,\bar a_{v'}] -2\epsilon/\eta\right) \\
		& \ge \sum_{v' }  \zvec[a_0,\bar a_{v'}] + \frac{1}{\ell} \sum_{v' } \zvec[a_0,\bar a_{v'}]  -2\ell \frac{\epsilon}{\eta}\\
		&> \sum_{v' }  \zvec[a_0,\bar a_{v'}],
	\end{align*}
	where the last inequality follows from equation~\eqref{eq:contr}.
	 Hence, we reach a contradiction and $|V^2|\le \ell-\ell^{1-\delta}$.

	To conclude the proof, we show that $V^1$ is an independent set. Since $|V^1|\ge |V|-|V^2|=\ell^{1-\delta}$ we reach a contradiction.
	Let $v$ and $v'$ be two nodes in $V^1$ and w.l.o.g. let $\zvec[a_0,a_v]\ge \zvec[a_0,a_{v'}]$. We show that $(v,v')\notin E$. Since $v'\in V^1$, $\phi_{a_v}$ is not a safe deviation for player $1$ with respect to constraint $c^{v'}$. if $(v,v')\in E$, then
	\begin{align*}
	 \sum_{a^1 \in \A_1,a^2\in \A_2} \sum_{a \in\ \A_1} &\phi[a^1,a] \zvec[a^1,a^2] c_v(a,a^2) \\
	 &= \zvec[a_0,a_v']- \sum_{v'': (v'',v')\in E} \zvec[a_0,a_{v''}] -\frac{1}{4\ell} \zvec[a_0,a_F] c_v(a,a^2)\\
	 &\hspace{2cm}+  \sum_{a^1 \in \A_1 \setminus \{a_0\},a^2\in A_2} \sum_{a \in \A_1} \phi[a^1,a] \zvec[a^1,a^2] c_v(a,a^2)  \\
	  & \le \zvec[a_0,a_v'] -  \zvec[a_0,a_{v}] -\frac{1}{4\ell} \zvec[a_0,a_F] c_v(a,a^2)+  \\
	  & \hspace{2cm}+\sum_{a^1 \in \A_1 \setminus \{a_0\},a^2\in \A_2} \sum_{a \in \A_1} \phi[a^1,a] \zvec[a^1,a^2] c_v(a,a^2)\\
	  & \le -\frac{1}{4\ell} \zvec[a_0,a_F] c_v(a,a^2)+  \sum_{a^1 \in \A_1 \setminus \{a_0\},a^2\in \A_2} \sum_{a \in \A_1} \phi[a^1,a] \zvec[a^1,a^2] c_v(a,a^2)\\
	  & = c_v(z)\le 0.
	\end{align*}
	 Hence, $(v,v')\notin E$.
	 Since $V^1$ is an independent set of size at least $\ell^{1-\delta}$ we reach a contradiction. This concludes the proof.
\end{proof}
	\section{Proofs Omitted from Section~\ref{sec:optimal} }\label{sec:app_optimal}

\minmax*

\begin{proof}
	First, it is easy to see that
	\begin{align*}
		\sup_{\phi_i \in \Phi_i^\mcS(\zvec)} u_i(\phi_i \diamond \zvec)=
		\sup_{\phi_i\in\Phi_i} \inf_{\etavec_i \in \mathbb{R}^{m}_+} &\left(u_i(\phi_i \diamond \zvec)-\etavec_i^{\top} \cvec_i(\phi_i \diamond \zvec)\right).
	\end{align*}
	Indeed, for every $\phi_i \notin  \Phi_i^\mcS(\zvec)$, it holds that the vector $\cvec_i(\phi_i\diamond \zvec)$ has at least one positive component, and, thus, the vector of Lagrange multipliers $\etavec_i$ can be selected so that $u_i(\phi_i \diamond \zvec)-\etavec_i^{\top} \cvec_i(\phi_i \diamond \zvec)$ goes to $-\infty$.
	This implies that the supremum over $\Phi_i$ cannot be attained in $\Phi_i^\mcS(\zvec)$.
	On the other hand, for every $\phi_i \in  \Phi_i^\mcS(\zvec)$, all the components of $\cvec_i(\phi_i\diamond \zvec)$ are negative, and, thus, the $\inf$ is achieved by $\etavec_i=\zerovec$.
	This proves the first equality.
	
	Then, the second equality directly follows from the generalization of the max-min theorem for unbounded domains (see~\citep[Proposition~2.3]{ekeland1999convex}), which allows us to swap the $\sup$ and the $\inf$.
\end{proof}

\begin{lemma}\label{lem:lemmaa1}
	For any two real-valued functions $f(x)$ and $g(x)$ with $g(x)\le c$ then $\min(f(x), g(x))\le\min(f(x), c)$.
\end{lemma}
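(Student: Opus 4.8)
The plan is to argue pointwise: fix an arbitrary $x$ and use only the two defining properties of the minimum, namely that $\min(a,b)\le a$ and $\min(a,b)\le b$. First I would note that $\min(f(x),g(x))\le f(x)$ trivially. Next, since $\min(f(x),g(x))\le g(x)$ and the hypothesis gives $g(x)\le c$, transitivity yields $\min(f(x),g(x))\le c$. At this point $\min(f(x),g(x))$ is a common lower bound of both $f(x)$ and $c$, so by the universal property of the infimum of two numbers it is at most $\min(f(x),c)$, which is exactly the claim. Since $x$ was arbitrary, the inequality holds as an inequality of functions.

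If one prefers an explicit case split rather than invoking the ``common lower bound'' principle, the alternative is to distinguish $f(x)\le c$ (where the right-hand side equals $f(x)$ and the bound $\min(f(x),g(x))\le f(x)$ closes it) from $f(x)>c$ (where the right-hand side equals $c$ and the chain $\min(f(x),g(x))\le g(x)\le c$ closes it). Either route is a couple of lines.

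There is essentially no obstacle here: the lemma is a routine order-theoretic fact, stated separately only so it can be cited cleanly when manipulating the $\min$/$\inf$ expressions that appear in the Lagrangian reformulation (e.g., in the proofs of Lemmas~\ref{lem:duality} and~\ref{lem:bound_1}). The only thing to be mildly careful about is keeping the quantifier over $x$ outside and treating $c$ as a fixed constant, so that the hypothesis $g(x)\le c$ is used at the right point; no continuity, measurability, or finiteness assumptions on $f,g$ are needed, and the values $\pm\infty$ cause no trouble since the argument is purely about the order.
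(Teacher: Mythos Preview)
Your argument is correct. The paper proves the same inequality by partitioning the domain into three regions according to the relative order of $f(x)$, $g(x)$, and $c$ (namely $f(x)\ge c$, $g(x)\le f(x)\le c$, and $f(x)\le g(x)\le c$) and checking each case by hand. Your primary route avoids any case split by simply observing that $\min(f(x),g(x))$ lies below both $f(x)$ and $c$ and then invoking the defining property of the two-element minimum; your fallback two-case split (on $f(x)\le c$ versus $f(x)>c$) is still leaner than the paper's three cases. The content is the same elementary order fact either way, so nothing is gained or lost mathematically---your version is just a bit more economical. One minor remark: the lemma is actually cited only in the proof of Lemma~\ref{lem:duality}, not Lemma~\ref{lem:bound_1}.
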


\begin{proof}
	We can identify three sets $I_1, I_2$ and $I_3$ defined as follows:
	\begin{align*}
		I_1&\coloneqq\{x\, s.t.\, f(x)\ge c\}\\
		I_2&\coloneqq\{x\, s.t.\, g(x)\le f(x)\le c\}\\
		I_3&\coloneqq\{x\, s.t.\, f(x)\le g(x)\le c\}.
	\end{align*}
	Then for all $x\in I_1$ we have that
	\(
	\min(f(x), c)=c\ge\min(f(x), g(x))=g(x),
	\)
	while for all $x\in I_2$ we have that
	\(
	\min(f(x), c)=f(x)\ge\min(f(x), g(x))=g(x).
	\)
	Finally for all $x\in I_3$ we have
	\(
	\min(f(x), c)=f(x)=\min(f(x), g(x))=f(x).
	\)
	In all three sets we have that $\min(f(x), c)\ge\min(f(x), g(x))$.
\end{proof}

\begin{lemma}\label{lem:lemmaa2}
	For all $\etavec_i\in\D^c$ it holds that
	\[
	\sup\limits_{\phi_i\in\Phi_i}\left(u_i(\phi_i\diamond\zvec)-\etavec_i^\top\cvec_i(\phi_i\diamond\zvec)\right)\ge 1
	\]
\end{lemma}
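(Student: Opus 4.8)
The plan is to show that for every $\etavec_i \in \D^c$ there is a \emph{single} deviation $\phi_i^\circ \in \Phi_i$ whose Lagrangian value $u_i(\phi_i^\circ \diamond \zvec) - \etavec_i^\top \cvec_i(\phi_i^\circ \diamond \zvec)$ is already at least $1$; since this quantity lower bounds the supremum over $\Phi_i$, the claim follows immediately. The natural candidate is a strictly feasible deviation. First I would fix the correlated strategy $\zvec \in \Delta_{\A}$ and the player $i \in \N$ and invoke Assumption~\ref{ass:strictly} to obtain $\phi_i^\circ \in \Phi_i^\mcS(\zvec)$ that is strictly away from all constraint boundaries, that is, $c_{i,j}(\phi_i^\circ \diamond \zvec) \le -\rho$ for every $j \in [m]$, equivalently $\cvec_i(\phi_i^\circ \diamond \zvec) \preceq -\rho\,\onevec$.

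With this $\phi_i^\circ$ in hand, the rest is an elementary chain of inequalities. Since $u_i$ takes values in $[0,1]$ and $\phi_i^\circ \diamond \zvec \in \Delta_{\A}$, we have $u_i(\phi_i^\circ \diamond \zvec) \ge 0$. Since $\etavec_i \in \D^c \subseteq \Reals^m_+$ has nonnegative entries and $-c_{i,j}(\phi_i^\circ \diamond \zvec) \ge \rho > 0$ for all $j$,
\[
-\etavec_i^\top \cvec_i(\phi_i^\circ \diamond \zvec) = \sum_{j \in [m]} \eta_{i,j}\left(-c_{i,j}(\phi_i^\circ \diamond \zvec)\right) \ge \rho \sum_{j \in [m]} \eta_{i,j} = \rho\,\|\etavec_i\|_1 .
\]
By the definition of $\D^c$ we have $\|\etavec_i\|_1 > 1/\rho$, hence $\rho\,\|\etavec_i\|_1 > 1$, and therefore
\[
\sup_{\phi_i \in \Phi_i}\left(u_i(\phi_i \diamond \zvec) - \etavec_i^\top \cvec_i(\phi_i \diamond \zvec)\right) \ge u_i(\phi_i^\circ \diamond \zvec) - \etavec_i^\top \cvec_i(\phi_i^\circ \diamond \zvec) \ge \rho\,\|\etavec_i\|_1 > 1 .
\]

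The only delicate point is the very first step: one must make sure Assumption~\ref{ass:strictly} really yields one deviation that is simultaneously safe and $\rho$-strict on all $m$ constraints (rather than a separate witness per constraint), which is where I would spend the most care, using convexity of the polytope $\Phi_i$ together with the linearity of $\phi_i \mapsto \cvec_i(\phi_i \diamond \zvec)$. Everything downstream is routine. I expect this lemma to be combined with Lemma~\ref{lem:minmax} — where $\inf_{\etavec_i \in \Reals^m_+}\sup_{\phi_i \in \Phi_i}(\cdot)$ equals $\max_{\phi_i \in \Phi_i^\mcS(\zvec)} u_i(\phi_i \diamond \zvec) \le 1$ because utilities are bounded by $1$ — and with Lemma~\ref{lem:lemmaa1}, to argue that shrinking the multiplier domain from $\Reals^m_+$ to $\D$ leaves the infimum unchanged, that is, to establish Lemma~\ref{lem:duality}.
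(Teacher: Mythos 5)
Your proof is correct and takes essentially the same route as the paper's: invoke Assumption~\ref{ass:strictly} to obtain a safe deviation $\tilde\phi_i$ with $\cvec_i(\tilde\phi_i\diamond\zvec)\preceq-\rho\onevec$, observe that $-\etavec_i^\top\cvec_i(\tilde\phi_i\diamond\zvec)\ge\rho\|\etavec_i\|_1\ge 1$ for $\etavec_i\in\D^c$, and use $u_i\ge 0$. The ``delicate point'' you flag (needing one witness simultaneously $\rho$-strict on all $m$ constraints, rather than one per constraint) is exactly the reading of Assumption~\ref{ass:strictly} that the paper itself adopts without further argument, so you are, if anything, slightly more careful than the paper on that step.
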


\begin{proof}
	Thanks to Assumption~\ref{ass:strictly} we have that for all $\zvec\in\Delta(\A)$ we have that there exists $\tilde\phi_i\in\Phi_i^\mcS(\zvec)$ such that $\cvec_i(\tilde\phi_i\diamond\zvec)\preceq-\rho\onevec$. Then, for all $\etavec_i\in\D^c$ we have:
	\[
	\etavec_i^\top \cvec_i(\tilde\phi_i\diamond \zvec)\le -\rho\|\etavec_i\|_1\le -1.
	\]
	This easily concludes the proof of the statement
	\[
	\sup\limits_{\phi_i\in\Phi_i}\left(u_i(\phi_i\diamond\zvec)-\etavec_i^\top\cvec_i(\phi_i\diamond\zvec)\right)\ge u_i(\tilde\phi_i\diamond\zvec)-\etavec_i^\top\cvec_i(\tilde\phi_i\diamond\zvec)\ge 1,
	\]
	as $u_i$ is positive.
\end{proof}

\begin{lemma}\label{lem:lemmaa3}
	For all $\etavec_i\in\D$ we have
	\[
	\inf\limits_{\etavec_i\in\D}\sup\limits_{\phi_i\in\Phi_i}\left(u_i(\phi_i\diamond\zvec)-\etavec_i^\top\cvec_i(\phi_i\diamond\zvec)\right)\le 1
	\]
\end{lemma}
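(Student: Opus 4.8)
The key observation is that the zero vector $\zerovec$ belongs to $\D$: indeed $\|\zerovec\|_1 = 0 \le 1/\rho$, so $\zerovec \in \D = \{\etavec \in \Reals^m_+ \mid \|\etavec\|_1 \le 1/\rho\}$. The plan is therefore to simply bound the infimum over $\D$ by evaluating the inner supremum at this particular choice of multipliers and then using the fact that utilities are bounded.

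Concretely, first I would note that for every $\phi_i \in \Phi_i$ the modification $\phi_i \diamond \zvec$ is itself a valid correlated strategy in $\Delta_{\A}$: its entries are nonnegative, and they sum to $1$ because each row of $\phi_i$ is a probability distribution over $\A_i$ (so $\sum_{a \in \A_i}\phi_i[b,a] = 1$ for all $b$), which makes $\sum_{a_i, \avec_{-i}} (\phi_i \diamond \zvec)[a_i,\avec_{-i}] = \sum_{\avec_{-i}}\sum_{b} \zvec[b,\avec_{-i}] = 1$. Consequently $u_i(\phi_i\diamond\zvec) = \sum_{\avec \in \A} u_i(\avec)\,(\phi_i \diamond \zvec)[\avec] \le 1$, since $u_i$ takes values in $[0,1]$.

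Then I would conclude by the chain
\[
	\inf_{\etavec_i \in \D} \sup_{\phi_i \in \Phi_i}\left(u_i(\phi_i \diamond \zvec) - \etavec_i^\top \cvec_i(\phi_i \diamond \zvec)\right) \le \sup_{\phi_i \in \Phi_i}\left(u_i(\phi_i \diamond \zvec) - \zerovec^\top \cvec_i(\phi_i \diamond \zvec)\right) = \sup_{\phi_i \in \Phi_i} u_i(\phi_i \diamond \zvec) \le 1,
\]
where the first inequality is because $\zerovec \in \D$ and the last inequality is the bound just established. This proves the statement. There is no real obstacle here; the only thing to get right is checking that $\zerovec$ is an admissible multiplier vector (it is, since the constraint on $\D$ is only an upper bound on the norm) and that $u_i(\phi_i\diamond\zvec)\le 1$, which follows from $\phi_i\diamond\zvec$ being a probability distribution and $u_i$ being $[0,1]$-valued.
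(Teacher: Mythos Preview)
Your proof is correct and rests on the same two ingredients as the paper's proof: that $\zerovec\in\D$ and that $u_i(\phi_i\diamond\zvec)\le 1$. The paper reaches the same conclusion via a slightly longer detour (first bounding $u_i\le 1$ uniformly, then rewriting the remaining term as $1-\sup_{\etavec_i\in\D}\inf_{\phi_i}\etavec_i^\top\cvec_i(\phi_i\diamond\zvec)$ and observing that this $\sup$--$\inf$ is nonnegative by taking $\etavec_i=\zerovec$), whereas you simply plug $\etavec_i=\zerovec$ into the infimum at the outset; your route is more direct but the underlying idea is identical.
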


\begin{proof}
	Since $u_i\le 1$ we have that
	\[
	\inf\limits_{\etavec_i\in\D}\sup\limits_{\phi_i\in\Phi_i}\left(u_i(\phi_i\diamond\zvec)-\etavec_i^\top\cvec_i(\phi_i\diamond\zvec)\right)\le 1-\sup\limits_{\etavec_i\in\D}\inf\limits_{\phi_i\in\Phi_i} \etavec_i^\top\cvec_i(\phi_i\diamond \zvec).
	\]
	Next we claim that $\sup\limits_{\etavec_i\in\D}\inf\limits_{\phi_i\in\Phi_i} \etavec_i^\top\cvec_i(\phi_i\diamond \zvec)\ge 0$. This follows from the fact that for all negative components of $\cvec_i(\phi_i\diamond \zvec)$ then the corresponding components of $\etavec_i$ will be $0$. This concludes the statement.
\end{proof}

\duality*

\begin{proof}
	In Lemma~\ref{lem:minmax} we already showed that:
	\begin{align*}
		\sup_{\phi_i \in \Phi_i^\mcS(\zvec)} u_i(\phi_i \diamond \zvec)&=
		\sup\limits_{\phi_i\in\Phi_i} \inf_{\etavec_i \in \mathbb{R}^{m}_+} \left(u_i(\phi_i\diamond \zvec)-\etavec_i^{\top} \cvec_i(\phi_i \diamond z)\right) \\
		&=\inf_{\etavec_i \in \mathbb{R}^{m}_+}  \sup\limits_{\phi_i\in\Phi_i}\left(u_i(\phi_i \diamond\zvec)-\etavec_i^{\top} \cvec_i(\phi_i \diamond\zvec)\right).
	\end{align*}
	Note that to prove the statement it is enough to prove that:
	\[
	\inf_{\etavec_i \in \mathbb{R}^{m}_+}  \sup\limits_{\phi_i\in\Phi_i}\left(u_i(\phi_i \diamond\zvec)-\etavec_i^{\top} \cvec_i(\phi_i \diamond\zvec)\right)=\inf_{\etavec_i \in\D}  \sup\limits_{\phi_i\in\Phi_i}\left(u_i(\phi_i \diamond\zvec)-\etavec_i^{\top} \cvec_i(\phi_i \diamond\zvec)\right)
	\]
	and more specifically that:
	\[
	\inf_{\etavec_i \in \mathbb{R}^{m}_+}  \sup\limits_{\phi_i\in\Phi_i}\left(u_i(\phi_i \diamond\zvec)-\etavec_i^{\top} \cvec_i(\phi_i \diamond\zvec)\right)\ge\inf_{\etavec_i \in\D}  \sup\limits_{\phi_i\in\Phi_i}\left(u_i(\phi_i \diamond\zvec)-\etavec_i^{\top} \cvec_i(\phi_i \diamond\zvec)\right)
	\]
	since the reverse inequality holds trivially.
	We can show this by the following inequalities:
	\begin{align*}
		\inf_{\etavec_i \in \mathbb{R}^{m}_+}  &\sup\limits_{\phi_i\in\Phi_i}\left(u_i(\phi_i \diamond\zvec)-\etavec_i^{\top} \cvec_i(\phi_i \diamond\zvec)\right)\\
		&=\min\left(\inf_{\etavec_i \in\D}  \sup\limits_{\phi_i\in\Phi_i}\left(u_i(\phi_i \diamond\zvec)-\etavec_i^{\top} \cvec_i(\phi_i \diamond\zvec)\right), \inf_{\etavec_i \in \D^c}  \sup\limits_{\phi_i\in\Phi_i}\left(u_i(\phi_i \diamond\zvec)-\etavec_i^{\top} \cvec_i(\phi_i \diamond\zvec)\right)\right)\\
		&\ge\min\left(\inf_{\etavec_i \in\D}  \sup\limits_{\phi_i\in\Phi_i}\left(u_i(\phi_i \diamond\zvec)-\etavec_i^{\top} \cvec_i(\phi_i \diamond\zvec)\right), 1\right)\\
		&=\inf_{\etavec_i \in\D}  \sup\limits_{\phi_i\in\Phi_i}\left(u_i(\phi_i \diamond\zvec)-\etavec_i^{\top} \cvec_i(\phi_i \diamond\zvec)\right),
	\end{align*}
	where the first inequality hold thanks to Lemma~\ref{lem:lemmaa1} and Lemma~\ref{lem:lemmaa2}, while that last equation follows from Lemma~\ref{lem:lemmaa3}. 
\end{proof}

\deltaOPTProblem*

\begin{proof}
	By definition we have that:
	$L_{\bar\D,\epsilon} =\ell(\tilde\zvec^\star)$, where $\tilde\zvec^\star$ is a solution to the problem
	\[
	\textnormal{P1}\coloneqq
	\begin{cases}
		\tilde\zvec^\star\in\arg\max\limits_{\zvec\in\mcS} &\ell(\zvec)\, s.t.\\
		&\epsilon+u_i(\tilde\zvec^\star)\ge\max\limits_{\phi_i\in\Phi_i}\left(u_i(\phi_i\diamond\tilde\zvec^\star)-\tilde\etavec_i^{\star,\top}\cvec_i(\phi_i\diamond\tilde\zvec^\star)\right)\\
		&\tilde\etavec_i^\star\in\arg\inf\limits_{\etavec_i\in\bar\D}\sup\limits_{\phi_i\in\Phi_i}\left(u_i(\phi_i\diamond\tilde\zvec^\star)-\etavec_i^{\top}\cvec_i(\phi_i\diamond\tilde\zvec^\star)\right)
	\end{cases}
	\]
	
	On the other hand, call $\zvec^\star$ the optimal Constrained Phi-equilibrium. This is a solution to the problem:
	\[
	\textnormal{P2}\coloneqq
	\begin{cases}
		\zvec^\star\in\arg\max\limits_{\zvec\in\mcS} &\ell(\zvec)\, s.t.\\
		&u_i(\zvec^\star)\ge\max\limits_{\phi_i\in\Phi_i}\left(u_i(\phi_i\diamond\zvec^\star)-\etavec_i^{\star,\top}\cvec_i(\phi_i\diamond\zvec^\star)\right)\\
		&\etavec_i^\star\in\arg\inf\limits_{\etavec_i\in\D}\sup\limits_{\phi_i\in\Phi_i}\left(u_i(\phi_i\diamond\zvec^\star)-\etavec_i^{\top}\cvec_i(\phi_i\diamond\zvec^\star)\right)
	\end{cases}
	\]
	which has value $L_{\D,0}=\ell(\zvec^\star)$.
	
	Moreover, thanks to Lemma~\ref{lem:duality} and since $\bar\D$ is $\delta$-optimal we have that:
	\[
	\max\limits_{\phi_i\in\Phi_i}\left(u_i(\phi_i\diamond\tilde\zvec^\star)-\tilde\etavec_i^{\star,\top}\cvec_i(\phi_i\diamond\tilde\zvec^\star)\right)\le\max\limits_{\phi_i\in\Phi_i}\left(u_i(\phi_i\diamond\zvec^\star)-\etavec_i^{\star,\top}\cvec_i(\phi_i\diamond\zvec^\star)\right)+\delta
	\]
	which implies that feasible correlated strategies of problem P2 are feasible  correlated strategies of problem P1, and thus problem P1 as long as $\delta\ge \epsilon$. Thus problem P1 is the problem of maximizing the same objective function over a larger set then problem P2 and thus $L_{\bar\D,\epsilon}\ge L_{\D,0}$.
\end{proof}

\gridisemopt*
\begin{proof}
	By Lemma \ref{lem:duality}, we know that for each player there exists an $\etavec_i^\star\in\D$ such that $\max_{\phi \in \Phi_i^\mcS(\zvec)} u_i(\phi_i\diamond \zvec)= \max\limits_{\phi_i\in\Phi_i}\left(u_i(\phi_i\diamond \zvec)-\etavec_i^{\star,\top} \cvec_i(\phi_i\diamond \zvec)\right)$.
	By construction of $\D_\epsilon$ there exists a $\bar \etavec_i \in \D_\epsilon $ such that $||\bar \etavec_i-\etavec_i^\star||_\infty \le \epsilon$.
	Thus
	\begin{align*}
		\max_{\phi \in \Phi_i^\mcS(\zvec)} u_i(\phi_i\diamond \zvec)&= \max\limits_{\phi_i\in\Phi_i}\left(u_i(\phi_i\diamond \zvec)-\etavec_i^{\star,\top} \cvec_i(\phi_i\diamond \zvec)\right)\\
		&\le \max\limits_{\phi_i\in\Phi_i}\left(u_i(\phi_i\diamond \zvec)-\bar \etavec_i^\top \cvec_i(\phi_i\diamond \zvec)\right)+ m \epsilon,
	\end{align*}
	where the last inequality comes the fact that:
	\[
	|(\etavec_i^\star-\bar\etavec_i)^\top\cvec_i(\phi_i\diamond\zvec)|\le \|\cvec_i(\phi_i\diamond\zvec)\|_1\|\etavec_i^\star-\bar\etavec_i\|_\infty\le m\epsilon
	\]
	as $\cvec_i\in [-1,1]^{m}$.
\end{proof}

\uniform*
\begin{proof}
	The proof exploits a probability interpretation of the Lagrange multipliers. Let $\etavec^\star$ be the optimal multipliers, \ie, $\etavec^\star\in\argmin_{\etavec\in\D}\max_{\phi_i\in\Phi_i}\left(u_i(\phi_i\diamond \zvec)-\etavec^{\top} \cvec_i(\phi_i\diamond \zvec)\right)$. Now consider a basis $\Gamma=\{\frac{1}{\rho}\evec_j\}_{j\in [m]}\cup\{\zerovec\}$ for $\D$.
	By Carathoedory's theorem there exists a distribution $\gamma\in\Delta(\Gamma)$ such that $\etavec^\star=\sum_{\etavec\in\Gamma}\gamma_{\etavec}\etavec$. Assume that $\epsilon$ and $\rho$ are such that $\sfrac{1}{\epsilon\rho}$ is an integer and take $\sfrac{1}{\rho\epsilon}$ samples from the distribution $\gamma$ and call $\tilde\etavec$ the resulting empirical mean.
	
	First, we argue that $\tilde\etavec\in\D_\epsilon$. Indeed $\tilde\etavec_j=\frac{k_j}{\sfrac{1}{\rho\epsilon}}\frac{1}{\rho}=\epsilon\left(\frac{k_j}{\sfrac{1}{\rho\epsilon}}\frac{1}{\rho\epsilon}\right)=\epsilon k_j$ where $k_j\in\mathbb{N}$ and thus we have that $\tilde\etavec\in\D_\epsilon$.\footnote{If $\epsilon$ if not such that $\sfrac{1}{\rho\epsilon}\in\mathbb{N}$ then the one can take $\lceil\sfrac{1}{\rho\epsilon}\rceil$ samples from $\gamma\in\Delta(\Gamma)$ and then the statement hold for a slightly smaller $\epsilon^\prime<\epsilon$ defined as $\epsilon^\prime:=\frac{1}{\lceil\sfrac{1}{\rho\epsilon}\rceil}\frac{1}{\rho}$.}
	
	Now we show that with high probability $\tilde\etavec\in\D_\epsilon$ is close (in terms of utilities) to the optimal multiplier $\etavec^\star$.
	First observe that:
	\begin{subequations}\label{eq:inequalitylemmaProbs}
	\begin{align}
	\etavec_i^{\star,\top}\cvec_i(\phi_i\diamond\zvec)&:=\sum\limits_{a_i\in\A^i,b_{i}\in\A^{i}}\left(\phi_i[b,a_i]\sum\limits_{\avec_{-i}\in\A^{-i}}\etavec^{\star,\top}\cvec_i(a_i,\avec_{-i})\zvec[b,\avec_{-i}]\right)\\
	&\le\sum\limits_{a_i\in\A^i,b_{i}\in\A^{i}}\left(\phi_i[b,a_i]\left(\delta_{a_i,b}+\sum\limits_{\avec_{-i}\in\A^{-i}}\tilde\etavec^{\top}\cvec_i(a_i,\avec_{-i})\zvec[b,\avec_{-i}]\right)\right)\\
	&=\sum\limits_{a_i\in\A^i,b_{i}\in\A^{i}}\left(\phi_i[b,a_i]\sum\limits_{\avec_{-i}\in\A^{-i}}\tilde\etavec^{\top}\cvec_i(a_i,\avec_{-i})\zvec[b,\avec_{-i}]\right)+\sum\limits_{a_i\in\A^i,b_{i}\in\A^{i}}\phi_i[b,a_i]\delta_{a_i,b}\\
	&=\tilde\etavec_i^{\top}\cvec_i(\phi_i\diamond\zvec)+\sum\limits_{a_i\in\A^i,b_{i}\in\A^{i}}\phi_i[b,a_i]\delta_{a_i,b}
	\end{align}
	\end{subequations}
	where the inequality comes from applying the Hoeffeding's inequality to every $a_i,b\in\A_i$:
	\[
	\left|\sum\limits_{a_{-i}\in\A^{-i}}\left(\tilde\etavec-\etavec^{\star}\right)^\top\cvec_i(a_i,a_{-i})\zvec[b,\avec_{-i}]\right|\le \delta_{a_i,b}
	\]
	where $\delta_{a_i,b}=\frac{2}{\rho}\sqrt{\frac{2}{\sfrac{1}{\rho\epsilon}}\log\left(\frac{2}{p_{a_i,b}}\right)}\left(\sum\limits_{\avec_{-i}\in\A^{-i}}\zvec[b,\avec_{-i}]\right)$ since the range of the each sample is $\frac{1}{\rho}\left(\sum_{\avec_{-i}\in\A^{-i}}\zvec[b,\avec_{-i}]\right)$.
	
	Moreover, for Hoeffeding's inequality, for every $a_i,b\in\A^i$ the above inequality holds with probability at least $1-p_{a_i,b}$ and thus holds for all the $a_i,b\in\A^i$ simultaneously, with probability at least $p:=\sum_{a_i,b\in\A^i}p_{a_i,b}$.
	If then we take $p_{a_i, b}:=\frac{1}{2|\A^i|^2}$ for all $a_i,b\in\A^i$, then we have that $p=1/2>0$ and $\delta:=\delta_{a_i,b}=\frac{2}{\rho}\sqrt{\frac{2}{\sfrac{1}{\rho\epsilon}}\log\left(|\A^i|\right)}\left(\sum\limits_{\avec_{-i}\in\A^{-i}}\zvec[b,\avec_{-i}]\right)$

	Now the following holds with probability at lest $1/2$:
	\[
	\left|\sum\limits_{a_{-i}\in\A^{-i}}\left(\tilde\etavec-\etavec^{\star}\right)^\top\cvec_i(a_i,a_{-i})\zvec[b,\avec_{-i}]\right|\le \delta\left(\sum_{\avec_{-i}\in\A^{-i}}\zvec[b,\avec_{-i}]\right),\quad\forall a_i,b\in\A^i
	\]
	
	The proof is concluded by observing plugging this definition of $\delta=\delta_{a_i,b}$ in Equation~\eqref{eq:inequalitylemmaProbs} yields
	\(
	\sum_{a_i\in\A^i,b_{i}\in\A^{i}}\phi_i[b,a_i]\delta_{a_i,b}=\delta,
	\)
	and we can conclude that:
	\[
	\etavec_i^{\star,\top}\cvec_i(\phi_i\diamond\zvec)\le\tilde\etavec_i^{\top}\cvec_i(\phi_i\diamond\zvec)+\delta.
	\]
	This holds with positive probability, and thus shows the existence of such $\tilde\etavec\in\D_\epsilon$ for which the above inequality holds and thus $\D_\epsilon$ is $\left(2\sqrt{\frac{2\epsilon}{\rho}\log(|\A^i|)}\right)$-optimal.
\end{proof}

	\section{Proofs Omitted from Section~\ref{sec:easy} }\label{sec:app_easy}

\convex*

\begin{proof}
	Let $\zvec^\prime$ and $\zvec^{\prime\prime}$ be Constrained $\epsilon$-Phi-equilibria that is for all $i\in[N]$:
	\[
	\epsilon+u_i(\zvec^\prime)\ge u_i(\phi_i^\prime\diamond\zvec^\prime)
	\]
	for $\phi^\prime\in\arg\max\limits_{\phi_i\in\Phi^\mcS_i}u_i(\phi_i\diamond\zvec^\prime)$. Equivalently it holds for all $i\in[N]$ that:
	\[
	\epsilon+u_i(\zvec^{\prime\prime})\ge u_i(\phi_i^{\prime\prime}\diamond\zvec^{\prime\prime})
	\]
	where $\phi^{\prime\prime}\in\arg\max\limits_{\phi_i\in\Phi^\mcS_i}u_i(\phi_i\diamond\zvec^{\prime\prime})$.
	For any $\zvec:=\alpha \zvec^\prime+(1-\alpha)\zvec^{\prime\prime}$ we have that:
	\begin{align*}
	\epsilon+u_i(\zvec)&=\alpha\left(\epsilon+ u_i(\zvec^\prime)\right)+(1-\alpha)\left(\epsilon+u_i(\zvec^{\prime\prime})\right)\\
	&\ge\alpha u_i(\phi_i^\prime\diamond\zvec^\prime)+(1-\alpha)u_i(\phi_i^{\prime\prime}\diamond\zvec^{\prime\prime})\\
	&\ge\max\limits_{\phi_i\in\Phi_i^\mcS} u_i(\phi_i\diamond\zvec),
	\end{align*}
	where the inequality holds for the linearity of $u_i$, the first inequality as both $\zvec^\prime$ and $\zvec^{\prime\prime}$ are Constrained $\epsilon$-Phi-equilibria and the last inequality holds since the $\max$ is a convex operator.
\end{proof}

\optimalsimple*

\begin{proof}
	\textsc{ApxCPE}$(1,0)$ amounts to solving the following problem:
	\begin{subequations}
		\begin{align}
		\max\limits_{\zvec\in\mcS} & \,\, \ell(\zvec) \quad \text{s.t.}  \\
		& u_i(\zvec) \ge \max_{\phi_i\in\Phi_i^\mcS} u_i(\phi_i\diamond \zvec) \quad \forall i\in \N,
		\end{align}
	\end{subequations}
	which can be written as an LP with (possibly) exponentially-many constraints, by writing a constraint for each vertex of $\Phi_i^\mcS$.
	We can find an exact solution to such an LP in polynomial time by means of the ellipsoid algorithm that uses suitable separation oracle.
	Such an oracle solves the following optimization problem for every $i\in \N$:
	\[
		\phi_i^\star\in\arg\max_{\phi_i\in\Phi_i^\mcS} u_i(\phi_i\diamond \zvec).
	\]
	Then, the oracle returns as a separating hyperplane the incentive constraint corresponding to a $\phi_i^\star$ (if any) such that $u_i(\zvec)\ge u_i(\phi_i^\star\diamond \zvec)$.
	Since all the steps of the separation oracle can be implemented in polynomial time, the ellipsoid algorithm runs in polynomial time, concluding the proof.
\end{proof}

\learning*

\begin{proof}
	Any regret minimizer $\mfR_i$ for $\Phi_i^\mcS$ guarantees that, for every $\phi_i \in \Phi_i^\mcS$:
	\begin{equation}\label{eq:regret1}
	\sum\limits_{t=1}^T u_i(\phi_i\diamond\zvec_t)- \sum\limits_{t=1}^T u_i(\phi_{i,t}\diamond\zvec_t) \le \epsilon_{i,T} \, T,
	\end{equation}
	where $\epsilon_{i,T}=o(T)$.
	Since $\xvec_{i,t}[a]=\sum_{b \in \A_i}\phi_{i,t}[b,a]\xvec_{i,t}[b]$ for all $ a\in\A_i$, for every $t \in [T]$ and $\avec = (a_i, \avec_{-i}) \in \A$:
	\begin{align*}\label{eq:fixedpoint}
	(\phi_{i,t}\diamond\zvec_t)[a_i,\avec_{-i}] &= \sum\limits_{b\in\A_i}\phi_{i,t}[b,a_i]\zvec[b,\avec_{-i}]\\
	&=\sum\limits_{b\in\A_i}\phi_{i,t}[b,a_i]\Big(\xvec_{i,t}[b]\otimes\xvec_{-i,t}[\avec_{-i}]\Big)\\
	&=\left(\sum\limits_{b\in\A_i}\phi_{i,t}[b,a_i]\xvec_{i,t}[b]\right)\otimes\xvec_{-i,t}[\avec_{-i}]\\
	&=\xvec_{i,t}[a_i]\otimes\xvec_{-i,t}[\avec_{-i}]\\
	&=\zvec_t[a_i,\avec_{-i}],
	\end{align*}
	Plugging the equation above into Equation~\eqref{eq:regret1}, we get:
	\[
	\sum\limits_{t=1}^T u_i(\phi_i\diamond\zvec_t)- \sum\limits_{t=1}^T u_i(\zvec_t) \le \epsilon_{i,T} \, T.
	\]
	Now, since $\bar\zvec_T:=\sum_{t=1}^T\zvec_t$ and $u_i(\zvec)$ is linear in $\zvec$, we can conclude that, for every $i\in \N$ and $\phi_i \in \Phi_i^\mcS$:
	\[
	u_i(\zvec_T)\ge u_i(\phi_i\diamond\bar\zvec_T) - \epsilon_{i,T},
	\]
	and, thus, by letting $\epsilon_{T} := \max_{i \in \N}\epsilon_{i,T}$ we get that $\bar\zvec_T$ satisfies the incentivize constrained for being a constrained $\epsilon_{T}$-Phi-equilibrium.
	%
	%
	We are left to verify that $\bar\zvec_T\in\mcS$, namely $\cvec_i(\bar\zvec_T)\le \zerovec$ for all $i\in \N$.
	This readily proved as:
	\begin{align*}
	\cvec_i(\bar\zvec_T)&=\frac{1}{T}\sum\limits_{t=1}^T\cvec_i(\zvec_t)\\
	&=\frac{1}{T}\sum\limits_{t=1}^T\cvec_i(\phi_{i,t}\diamond\zvec_t)\\
	&=\frac{1}{T}\sum\limits_{t=1}^T\tilde\cvec_i(\phi_{i,t})\\
	&\leq \zerovec,
	\end{align*}
	where the first equality holds since $\cvec_i$ is linear, the second equality holds thanks to $\zvec_t = \phi_{i,t}\diamond\zvec_t$, the third one by Assumption~\ref{ass:cce}, while the inequality holds since $\phi_{i,t}\in\Phi_i^\mcS$.
	This concludes the proof of the first part of the statement.

	In conclusion, Algorithm~\ref{alg:noregret} runs in polynomial time as finding $\xvec_{i,t}[a]=\sum_{b\in\A_i}\phi_{i,t}[b,a_i]\xvec_{i,t}[b]$ for all $a \in\A_i$ is equivalent to finding a stationary distribution of a Markov Chain, which can be done in polynomial time.
	Moreover, we can implement the regret minimizers $\mfR_i$ over the polytopes $\Phi_i^\mcS$ so that their operations run in polynomial time, such as, \emph{e.g.}, \emph{online gradient descent}; see~\citep{hazan2016introduction}.
	%
	%
\end{proof}

\marginal*

\begin{proof}
	Since the costs $\cvec_i(\avec)$ of player $i \in \N$ only depends on player $i$'s action $a_i$ and \emph{not} on the actions of other players, it is possible to show that there exists $\tilde \cvec_i: \Phi_{\textnormal{CCE}}\to [-1,1]^m$ such that the following holds for every $\zvec \in \Delta_{\A}$:
	\begin{align*}
	\tilde\cvec_i(\phi_i) := \cvec_i(\phi_i\diamond\zvec).
	\end{align*}
	Indeed, for every $\phi_i \in \Phi_{\textnormal{CCE}}$, by definition of $\Phi_{\textnormal{CCE}}$ there exists a probability distribution $\boldsymbol{h} \in \Delta_{\A_i}: \phi_i[b,a]=\boldsymbol{h}[a]$ for all $b,a\in\A_i$.
	Then, for every $a_i \in \A_i$ and $\avec_{-i} \in \A_{-i}$, we can write:
	\begin{align*}
	(\phi_i\diamond \zvec)[a_i,\avec_{-i}] &= \sum_{b\in\A_i}\phi_i[b,a_{i}]\zvec[b,\avec_{-i}] \\
	& = \sum_{b\in\A_i}\boldsymbol{h}[a_i]\zvec[b,\avec_{-i}] \\
	& = \boldsymbol{h}[a_i] \sum_{b\in\A_i}\zvec[b,\avec_{-i}].
	\end{align*}
	Moreover, it holds:
	\begin{align*}
	\cvec_i (\phi_i\diamond \zvec)[a_i,\avec_{-i}] & = \sum_{\avec \in \A} \cvec_i(\avec) (\phi_i\diamond \zvec)[a_i,\avec_{-i}] \\
	& = \sum_{\avec \in \A} \cvec_i(\avec) \boldsymbol{h}[a_i] \sum_{b\in\A_i}\zvec[b,\avec_{-i}] \\
	& = \sum_{a_i \in \A_i} \cvec_i(a_i,\cdot) \boldsymbol{h}[a_i] \sum_{\avec_{-i} \in \A_{-i}}\sum_{b\in\A_i}\zvec[b,\avec_{-i}]\\
	& = \sum_{a_i \in \A_i} \cvec_i(a_i,\cdot) \boldsymbol{h}[a_i] ,
	\end{align*}
	which only depends on $\phi_i$, as desired.
	Notice that, in the equations above, for every $a \in \A_i$ we let $\cvec_i(a,\cdot)$ be the (unique) value of $\cvec_i(\avec)$ for all $\avec \in \A: a_i = a$.
	%
	%
\end{proof}

	

\end{document}